%% file: main_arxiv.tex
\documentclass[11pt]{article}

\usepackage[a4paper,margin=1in]{geometry}
\usepackage{setspace}
\usepackage{amsmath,amssymb,amsthm,bm}
\usepackage{graphicx}
\usepackage{newtxtext}
\usepackage[subscriptcorrection]{newtxmath}
\usepackage{natbib}
\usepackage[colorlinks=true,linkcolor=blue,citecolor=blue,urlcolor=blue]{hyperref}
\usepackage{orcidlink}

\newtheorem{theorem}{Theorem}[section]
\newtheorem{proposition}[theorem]{Proposition}
\newtheorem{corollary}[theorem]{Corollary}
\newtheorem{lemma}[theorem]{Lemma}
\newtheorem{remark}[theorem]{Remark}

\newcommand{\introductionheading}{\section{Introduction}}

\begin{document}
\onehalfspacing

\title{Fitting Dynamic Conditional Densities\\
to Cross-Sectional Returns with Persistent Tail Intensities}
\author{Yasumasa Matsuda\,\orcidlink{0000-0002-9916-0922} and Keigo Sato\\
Graduate School of Economics and Management\\
Tohoku University, Japan}
\date{}
\maketitle

\begin{center}
\small
Correspondence: Yasumasa Matsuda, Graduate School of Economics and Management,
Tohoku University, Japan.
Email: \href{mailto:yasumasa.matsuda.a4@tohoku.ac.jp}{yasumasa.matsuda.a4@tohoku.ac.jp}
\end{center}

\input{abstract_arxiv}

\begin{center}
\begin{minipage}{0.92\textwidth}
\noindent\textbf{Keywords:} ARCH, B-spline mixtures,
Cross-sectional returns, Score-driven models,
Student-$t$ distribution, Tail intensity.\par
\medskip
\noindent\textbf{JEL classification:} C22, C58, C53.
\end{minipage}
\end{center}

\input{sec1}
\input{sec2}
\input{sec3}
\input{sec4}
\input{sec5}
\input{sec6}
\input{backmatter}

\bibliographystyle{apalike}
\bibliography{Bibliography}

\clearpage
\appendix
\renewcommand{\thesection}{S\arabic{section}}
\renewcommand{\theHsection}{supp.\arabic{section}}
\numberwithin{equation}{section}
\setcounter{table}{0}
\renewcommand{\thetable}{S\arabic{table}}
\renewcommand{\theHtable}{supp.\arabic{table}}

\section*{Supplementary material}
This supplement gives the Fisher projection formulas, tail-intensity
proofs, and proofs of consistency and asymptotic normality. It also reports
the simulation coefficients, the M5--M3 forecast comparison, and empirical
data and PCA construction details. References prefixed by ``S'' concern
this supplement. References to the main paper are identified explicitly.

% Generated from the shared supplement with local labels prefixed by S-.
\input{arxiv_supplement}

\end{document}

%% file: abstract_arxiv.tex
\begin{abstract}
We propose dynamic conditional densities with spline mixtures (DCD-SM) to
detect whether the tail intensity of financial returns depends on past
information after accounting for volatility and tail-order dynamics.
The target is a weighted mixture
of the conditional return densities of individual assets in a cross section.
Individual distributions may differ, and returns within a date
may be dependent. A normalized spline-mixture correction uses score-driven
updates to change left and right tail intensities while preserving the
reference tail order.
We use Wald tests for persistence and assess empirical size, power, and
forecast stability in simulations. With fitted references that have dynamic
tail order, Wald tests at the nominal five-percent level detect dynamics
that affect both tail intensities in the same direction in
all 19 US and Japanese equity cross sections, for both raw returns and
residuals after principal-component removal with training-fixed loadings. Evidence for dynamics
in the imbalance between the two tail intensities is more limited.
In the residuals, the
correction improves out-of-sample one-percent lower-tail quantile loss
relative to these references in 16 of 19 panels, with three significant
positive differences at the five-percent level. Forecast gains are most evident in log scores and
one-percent quantile losses and are less uniform for five-percent quantile loss.
\end{abstract}

%% file: sec1.tex
\introductionheading
\label{sec:introduction}

Volatility persistence is a central feature of financial returns.
ARCH and GARCH models describe this persistence through conditional
variance dynamics \citep{Engle1982,Bollerslev1986}.
We ask whether tail characteristics remain
predictable after location and scale dynamics have been taken into account.
We distinguish tail order, which determines the rate of tail decay, from
tail intensity, which scales the tail density for a given order. Our main
question is whether dynamic tail order is sufficient, or whether tail
intensity has additional persistence. We study this question for conditional
return mixtures formed from equity cross sections. With equal weights,
the cumulative distribution function of the mixture gives the probability
that a stock selected at random from the cross section has a return below
a given threshold, conditional on past information.

\citet{Hansen1994} extends ARCH modeling to conditional distributional
parameters beyond the mean and variance. Conditional skewness and
higher-moment persistence are studied by \citet{HarveySiddique1999} and
\citet{JondeauRockinger2003}. Related score-driven models directly describe
degrees-of-freedom dynamics \citep{BlazsekMonteros2017}, time-varying
peaks-over-threshold densities \citep{Massacci2017}, and densities with
asymmetric tail indices \citep{Thiele2020}.
\citet{DInnocenzoLucasSchwaabZhang2024} model score-driven tail shape and
tail scale in a generalized Pareto distribution and document persistence
in both. Their tail-scale parameter can also reflect changes in the scale
of the full return distribution. \citet{Catania2021} develops
dynamic mixtures of return densities with score-driven updates of both
component parameters and mixture weights. More recently,
\citet{vanBrummelenGorgiKoopman2026} use a spline-based error density to
construct score-driven location and scale updates. In those specifications,
the spline density parameters are fixed while location or scale changes
over time.

To capture tail-intensity persistence beyond dynamic location, scale, and
tail order, we propose dynamic conditional densities with spline mixtures
(DCD-SM). We transform returns to $[0,1]$ using a given cumulative
distribution function. We call the distribution used for this
transformation the reference distribution. We represent the conditional
mixture density of the transformed returns by a spline mixture. Its
weights evolve through two score-driven updates
\citep{CrealKoopmanLucas2013}. One changes both tail intensities together,
and the other changes their right--left balance. The spline construction
preserves the reference tail order for any fixed number of components.
We use references with dynamic location and scale, and references that
also have dynamic tail order.
In contrast to \citet{Catania2021}, DCD-SM uses a spline correction on the
reference PIT scale to test additional tail-intensity dynamics while
preserving the tail order of a fitted reference.

The analyst defines the target as a weighted mixture of constituent
conditional return densities. The nonnegative weights sum to one and are
chosen before the current returns are observed. Constituent densities may
differ, and returns within a date may be dependent. The target is the
conditional density of a constituent selected according to these weights.
It generally differs from the distribution of a portfolio return.
\citet{KellyJiang2014} use the cross section of stock
returns to estimate time-varying tail risk.
\citet{KokoszkaMiaoPetersenShang2019} forecast monthly cross-sectional
return densities for the S\&P 500 and Dow Jones using functional time
series methods. \citet{ZhangKokoszkaPetersen2022} develop Wasserstein
autoregressive models for density time series and apply them to the same
cross sections. In their empirical applications, estimated density curves
serve as inputs to the functional time series models. In a related
spline-mixture formulation, \citet{MatsudaIwafuchi2025} develop
density-valued ARMA models through logit transformations of the mixture
weights. DCD-SM fits the conditional mixture density directly to the
observed returns using score-driven updates. The state
update has no additional noise term, so parameter values and initial
conditions determine the state from past returns. The weighted log score
can therefore be evaluated recursively without integrating over unobserved
states or first estimating a density curve for each date. We estimate the
chosen mixture density using a weighted cross-sectional log score. The
conditional expectation of this weighted log score depends only on the
chosen mixture density, even when the constituent conditional densities
differ and returns are dependent within a date.

With the reference distribution treated as known and the spline dimension
fixed, we establish consistency and asymptotic normality of the DCD-SM
estimator under stability and identification conditions. Conditional Wald
tests use a date-level sandwich covariance that allows dependence within
a date. Simulations re-estimate all stages to assess the feasible
procedure's finite-sample size, power, and forecast stability under
independent and common-factor sampling. We separately evaluate strict
out-of-sample (OOS) density and tail-quantile forecasts, since in-sample
detection and predictive gains concern different questions.

The empirical analysis covers 19 US and Japanese equity cross sections,
including the S\&P 500 and Nikkei 225 constituent panels and their sectors.
Conditional on a fitted reference with dynamic tail order, nominal
five-percent Wald tests detect dynamics that affect both tail intensities
in the same direction in all 19 raw-return panels and all 19 panels after removal of a
first principal component with training-fixed means and loadings. Evidence for dynamics in the
right--left imbalance is less widespread. For the principal-component
residuals, the additional correction improves one-percent lower-tail
quantile loss relative to that reference in 16 of 19 panels, with three
significant positive differences at the five-percent level. Forecast gains
are most evident in log scores
and one-percent quantile losses, while five-percent quantile-loss
gains are less uniform.

Our contribution is a framework for directly estimating the dynamics of a
chosen conditional mixture from cross-sectional returns whose conditional
densities may differ and whose dependence need not be specified. Within
this framework, we test additional tail-intensity persistence beyond
dynamic location, scale, and tail order. We use cross-sectional returns
to study density dynamics that can be difficult to detect from a single
return series.

Derivations, proofs, and additional simulation and data-construction
details are provided in the online supplementary material.

%% file: sec2.tex
\section{Dynamic Conditional Densities with Spline Mixtures}
\label{sec:dcd-sm}

We model dynamic conditional densities with spline mixtures (DCD-SM)
relative to a given reference distribution. Observations are
mapped to the probability-integral-transform (PIT) scale of the reference,
where the density correction is represented by a spline mixture with
score-driven dynamics \citep{CrealKoopmanLucas2013}.

For cross-sectional observations, the target is a weighted mixture of
constituent conditional densities, with constituent weights determined
before the current observations arrive. The constituent distributions may
differ, and observations within a date may be conditionally dependent.

\subsection{Spline-mixture density and embedding}
\label{subsec:spline-parameterization}

Let $B_1,\ldots,B_J$ be nonnegative B-spline basis functions on $[0,1]$
forming a partition of unity \citep{deBoor2001},
\[
    \sum_{j=1}^J B_j(u)=1,
\]
and define
\[
    m_j=\int_0^1 B_j(u)\,du,
    \qquad
    \phi_j(u)=\frac{B_j(u)}{m_j},
    \qquad j=1,\ldots,J.
\]
Then each $\phi_j$ integrates to one and $\sum_jm_j=1$. For a probability
vector $p=(p_1,\ldots,p_J)'$ with $p_j>0$ for every $j$, define the spline-mixture density with a
parameter $q\in\mathbb R^{J-1}$ by
\begin{equation}
    h_{\rm SM}(u\mid q)
    =\sum_{j=1}^J p_j(q)\phi_j(u).
    \label{eq:dcd-mixture}
\end{equation}
The parameter $q$ is specified using a Helmert contrast matrix
$Q\in\mathbb R^{(J-1)\times J}$ satisfying
\[
    QQ'=I_{J-1},
    \qquad
    Q\mathbf 1_J=0.
\]
The simplex weights are embedded in Euclidean coordinates by
\begin{equation}
    q=Q\log(p/m),
    \label{eq:logit-embedding}
\end{equation}
where the logarithm and division are componentwise.  The inverse mapping is
the weighted softmax
\begin{equation}
    p_j(q)
    =
    \frac{m_j\exp\{(Q'q)_j\}}
         {\sum_{k=1}^Jm_k\exp\{(Q'q)_k\}},
    \qquad j=1,\ldots,J.
    \label{eq:dcd-weights}
\end{equation}
Consequently,
\[
    q=0
    \quad\Longleftrightarrow\quad
    p=m
    \quad\Longleftrightarrow\quad
    h_{\rm SM}(u\mid q)=1.
\]
Thus the origin in the logit coordinates corresponds exactly to the uniform
density on the PIT scale. We set the static spline anchor at $q_0=0$.
We do not fit a separate static spline correction.

In the empirical analysis we use clamped cubic B-splines
\citep[Chapter~IX, pp.~99--100]{deBoor2001} with breakpoints
\[
    (0,.05,.10,.20,.30,.40,.50,.60,.70,.80,.90,.95,1),
\]
which gives $J=15$ normalized spline-density components.

\subsection{Model description}
\label{subsec:description}

Let $r_{t,i}$ denote the return on stock $i$ at date $t$. The observed
cross section is
\[
    r_{t,1},\ldots,r_{t,N_t}.
\]
Let $\mathcal F_{t-1}$ denote the information available before this cross
section is observed. The constituent
set and its size $N_t\ge1$ are taken to be $\mathcal F_{t-1}$-measurable.
Let $g_{t,i}$ denote the conditional density of $r_{t,i}$ given
$\mathcal F_{t-1}$.  These densities may differ across $i$, and observations
within a date may be conditionally dependent.

The analyst specifies the distributional target by choosing nonnegative,
$\mathcal F_{t-1}$-measurable weights $w_{t,i}$ satisfying
$\sum_{i=1}^{N_t}w_{t,i}=1$. We take the target density to be
\[
    \bar g_t(r):=\sum_{i=1}^{N_t}w_{t,i}g_{t,i}(r).
\]
This is the conditional density of the return on a constituent selected at
random with probabilities $w_{t,i}$, independently of the current returns
given $\mathcal F_{t-1}$. To fit this target, we evaluate each observed
return under the same modeled mixture density. The individual returns need
not have this density and may be dependent. In general, the target differs
from the density of an index or portfolio return.
When $N_t=1$, it reduces to the usual single-series conditional density.
If all constituents have the same conditional density $g_t$, then
$\bar g_t=g_t$ for any such choice of weights. This special case does not
require conditional independence. More generally, the weights define the
mixture of interest to the analyst. They may reflect, for example, market
capitalization or nonnegative portfolio holdings known before date $t$,
and are held fixed with respect to candidate model parameters.
The empirical study in Section~\ref{sec:empirical} uses equal weights,
$w_{t,i}=1/N_t$, so every constituent has the same selection probability.

Let $F_t$ be a given, $\mathcal F_{t-1}$-measurable reference CDF,
continuously differentiable and strictly increasing with positive
density $f_t$ on the return support. Models M1--M3 in
Section~\ref{subsec:m1-m5} provide the specific reference CDFs used below.
Define
\[
    u_{t,i}=F_t(r_{t,i}),
    \qquad i=1,\ldots,N_t.
\]
Write $\phi_{t,i}(u)$ for the conditional density of $u_{t,i}$ given
$\mathcal F_{t-1}$.

DCD-SM represents this target on the PIT scale using the spline mixture
defined in Section~\ref{subsec:spline-parameterization}:
\begin{equation}
    \bar\phi_t(u)
    :=\sum_{i=1}^{N_t}w_{t,i}\phi_{t,i}(u)
    =h_{\rm SM}(u\mid q_t),
    \label{eq:average-pit-density}
\end{equation}
where $q_t$ is $\mathcal F_{t-1}$-measurable. Let
\[
    H_{\rm SM}(u\mid q)
    =\int_0^u h_{\rm SM}(v\mid q)\,dv.
\]
The resulting model CDF for the weighted return mixture is
\begin{equation}
    G_t(r\mid\mathcal F_{t-1})
    =H_{\rm SM}\{F_t(r)\mid q_t\}.
    \label{eq:corrected-cdf}
\end{equation}
Its density is
\begin{equation}
    \bar g_t(r)=g_t(r\mid\mathcal F_{t-1})
    =f_t(r)\,
     h_{\rm SM}\{F_t(r)\mid q_t\}.
    \label{eq:corrected-density}
\end{equation}
Setting $q_t=0$ leaves the reference mixture distribution unchanged exactly.
This identity anchor is what makes the nested comparisons below transparent.

The decomposition into reference and correction is not generally identified
if both are unrestricted. Indeed, writing $H_t(u)=H_{\rm SM}(u\mid q_t)$,
any smooth increasing bijection $A$ of $[0,1]$ gives the same return CDF
through $\widetilde F_t=A\circ F_t$ and
$\widetilde H_t=H_t\circ A^{-1}$. Such a transformed correction need not
belong to the fixed finite-dimensional spline family, but joint
identification of a parameterized reference and correction would still
require separate restrictions. We do not assert that identification here.
We define and fit the correction with the reference held fixed.
Given this reference, linear independence of the spline
basis and the invertible Helmert embedding identify $q_t$ from the corrected
density. Thus the correction and its interpretation are relative to the
chosen reference, rather than an intrinsic decomposition of the return law.

\paragraph{General score-driven dynamics.}
The distributional specification in \eqref{eq:average-pit-density} is
completed by a recursion for the predictable state $q_t$. The basic
DCD-SM specification can be summarized as
\begin{equation}
\begin{aligned}
    u_{t,i}\mid\mathcal F_{t-1}
    &\sim\phi_{t,i}(\cdot),\\
    \sum_{i=1}^{N_t}w_{t,i}\phi_{t,i}(u)
    &=h_{\rm SM}(u\mid q_t),\\
    q_t
    &=\sum_{\ell=1}^{L}K_\ell\,
        s_{q,t-\ell}(q_{t-\ell}).
\end{aligned}
    \label{eq:general-score-dynamics}
\end{equation}
Here $K_\ell\in\mathbb R^{(J-1)\times(J-1)}$, and $s_{q,t}(q)$ is the
derivative with respect to $q$ of the date-$t$ weighted log likelihood
defined below. Only scores from dates before $t$ enter this update.
Equation~\eqref{eq:general-score-dynamics} expresses $q_t$ as a linear
combination of past density scores, following the generalized autoregressive
score (GAS) principle of \citet{CrealKoopmanLucas2013}. The GAS framework
recovers Gaussian ARCH/GARCH dynamics under suitable scaling and parameter
restrictions. DCD-SM extends this score-driven construction from a
conditional variance parameter to a vector of spline-mixture coordinates.

We first give the explicit score function, then introduce its
Fisher-projected version and a parsimonious HAR specification for
tail-intensity dynamics.

\paragraph{Analytic spline score.}
\label{subsec:responsibility-score}
For a candidate state $q$, define the weighted cross-sectional log-score
criterion
\begin{equation}
    \ell_{q,t}(q)
    =\sum_{i=1}^{N_t}w_{t,i}
      \log h_{\rm SM}(u_{t,i}\mid q).
    \label{eq:date-composite-likelihood}
\end{equation}
The state score is $s_{q,t}(q)=\partial\ell_{q,t}(q)/\partial q$.
It has an explicit expression in terms of the component responsibilities
\begin{equation}
    \tau_j(u,q)
    =
    \frac{p_j(q)\phi_j(u)}{h_{\rm SM}(u\mid q)}.
    \label{eq:responsibility}
\end{equation}
Write
\[
    \tau(u,q)
    =\{\tau_1(u,q),\ldots,\tau_J(u,q)\}',
    \qquad
    \bar\tau_t(q)
    =\sum_{i=1}^{N_t}w_{t,i}\tau(u_{t,i},q).
\]
Differentiating analytically gives
\begin{equation}
    s_{q,t}(q)
    =Q\{\bar\tau_t(q)-p(q)\}.
    \label{eq:dcd-score}
\end{equation}
We write $s_{q,t}=s_{q,t}(q_t)$ when the score is evaluated at the current
state. It compares the weighted component responsibilities with the
predicted mixture weights in Helmert coordinates. No numerical
differentiation is required to evaluate this state score.

We use the weighted log score to fit the mixture target chosen by the
analyst. Its algebraic form is that of a
normalized working-independence composite log likelihood
\citep{VarinReidFirth2011}, but it requires neither identical constituent
densities nor within-date independence. Section~\ref{sec:theory}
establishes the score identities under the weighted-average-density
specification in \eqref{eq:average-pit-density}.

\paragraph{Fisher orthogonalization of the spline score.}
\label{subsec:orthogonalization}
To make the location--scale coordinates explicit, write the given reference
CDF as
\[
    F_t(r)=F_t^\circ\!\left(\frac{r-m_t}{\sigma_t}\right),
\]
where $F_t^\circ$ is the standardized reference CDF with density
$f_t^\circ$. The PIT observations and their modeled mixture density are
then
\[
\begin{aligned}
    u_{t,i}
    &:=F_t^\circ\!\left(\frac{r_{t,i}-m_t}{\sigma_t}\right),
      \qquad u_{t,i}\mid\mathcal F_{t-1}\sim\phi_{t,i}(\cdot),\\
    \sum_{i=1}^{N_t}w_{t,i}\phi_{t,i}(u)
    &=h_{\rm SM}(u\mid q_t).
\end{aligned}
\]
The reference density on the return scale is
\[
    f_t(r)=\frac1{\sigma_t}
      f_t^\circ\!\left(\frac{r-m_t}{\sigma_t}\right).
\]
Let
\[
    ms=(m,\log\sigma^2)
\]
denote the location--scale block. Its weighted reference score is
\begin{equation}
    s_{ms,t}
    =\sum_{i=1}^{N_t}w_{t,i}
      \left.
        \frac{\partial}{\partial ms}
        \log\left\{\frac1\sigma
          f_t^\circ\!\left(\frac{r_{t,i}-m}{\sigma}\right)\right\}
      \right|_{ms=(m_t,\log\sigma_t^2)}.
    \label{eq:reference-ms-score}
\end{equation}
The remaining reference coordinates are held fixed in this state
derivative. Thus $s_{ms,t}$ depends on the reference distribution at date
$t$, but not on the spline state $q_t$.

We project the spline score $s_{q,t}(q_t)$ off these location--scale score
directions using fixed reference values, with location--scale block $ms_0$
and spline state $q_0=0$. Their choice is specified below. Let $I_0$ denote
the Fisher information matrix of the return density in
\eqref{eq:corrected-density} with respect to $(ms,q)$, evaluated at these
fixed reference values:
\[
    I_0=
    \begin{pmatrix}
      I_{ms,ms,0} & I_{ms,q,0}\\
      I_{q,ms,0}  & I_{q,q,0}
    \end{pmatrix}.
\]
All Fisher blocks are evaluated once at these fixed reference values and
held fixed over time. Assume that $I_{ms,ms,0}$ is
nonsingular and define
\begin{equation}
    \Pi_{q,ms,0}
    =I_{q,ms,0}I_{ms,ms,0}^{-1}.
    \label{eq:tangent-loading}
\end{equation}
The projected spline score is
\begin{equation}
    s_{q,t}^{\perp}
    =s_{q,t}(q_t)-\Pi_{q,ms,0}s_{ms,t}.
    \label{eq:residual-score}
\end{equation}
In the Fisher metric of the fixed reference, this projection removes the
location--scale score directions from the score used to update $q_t$.

\paragraph{Parsimonious HAR specification.}
\label{subsec:parity-dynamics}
Each unrestricted coefficient matrix $K_\ell$ contains $(J-1)^2$
parameters. To obtain a parsimonious specification, we restrict the spline
state to two coordinates and use the lag structure of the
heterogeneous-autoregressive (HAR) model of \citet{Corsi2009}, with horizons
of 1, 5, and 22 trading days.
Define two vectors in Helmert coordinates by
\begin{equation}
    e_{\rm even}
    =Q
    \begin{pmatrix}
      1&0&\cdots&0&1
    \end{pmatrix}',
    \qquad
    e_{\rm odd}
    =Q
    \begin{pmatrix}
      -1&0&\cdots&0&1
    \end{pmatrix}'.
    \label{eq:endpoint-directions}
\end{equation}
The first and last spline components control the left and right tails
under $u=F_t(r)$. The even vector changes
both tail intensities together, and the odd vector changes their
right--left balance. The sign convention for $e_{\rm odd}$ is right minus
left and matches the empirical implementation. Starting from the
orthogonalized spline score $s_{q,t}^{\perp}$, define the two scalar scores
\begin{equation}
    s_{{\rm even},t}=e_{\rm even}'s_{q,t}^{\perp},
    \qquad
    s_{{\rm odd},t}=e_{\rm odd}'s_{q,t}^{\perp}.
    \label{eq:endpoint-scores}
\end{equation}
For a scalar score sequence $s_t$, define
\begin{equation}
    \bar s_t^{(d)}
    =\frac1d\sum_{\ell=0}^{d-1}s_{t-\ell},
    \qquad d=5,22.
    \label{eq:score-averages}
\end{equation}
The windows of 5 and 22 trading days correspond to approximately one week
and one month, respectively.
The two amplitudes follow HAR$(1,5,22)$ updates:
\begin{equation}
\begin{split}
    a_{{\rm even},t+1}
    &=\kappa_{{\rm even},1}s_{{\rm even},t}
      +\kappa_{{\rm even},5}\bar s_{{\rm even},t}^{(5)}
      +\kappa_{{\rm even},22}\bar s_{{\rm even},t}^{(22)},\\
    a_{{\rm odd},t+1}
    &=\kappa_{{\rm odd},1}s_{{\rm odd},t}
      +\kappa_{{\rm odd},5}\bar s_{{\rm odd},t}^{(5)}
      +\kappa_{{\rm odd},22}\bar s_{{\rm odd},t}^{(22)}.
\end{split}
\label{eq:endpoint-har}
\end{equation}
The spline state is reconstructed as
\begin{equation}
    q_{t+1}
    =a_{{\rm even},t+1}e_{\rm even}
     +a_{{\rm odd},t+1}e_{\rm odd}.
    \label{eq:parity-har-dcd}
\end{equation}
This is a projected-score version of \eqref{eq:general-score-dynamics}
with $L=22$. Each lag matrix is a linear combination of
$e_{\rm even}e_{\rm even}'$ and $e_{\rm odd}e_{\rm odd}'$, with the lag
coefficients determined by the HAR structure.
There are therefore six DCD-SM dynamic coefficients, three for the even
update and three for the odd update. The state is
restricted to $\operatorname{span}\{e_{\rm even},e_{\rm odd}\}$. Its component
in the orthogonal complement is fixed at zero. There is no autoregressive state term,
and the longest HAR window contains scores from the preceding 22 dates.
The odd component allows the corrected density to be asymmetric even when
the reference density is symmetric.

\subsection{Empirical model example: M1--M5}
\label{subsec:m1-m5}

Our aim is to detect persistence in tail behavior beyond location and scale
dynamics. We distinguish tail order, which determines the rate of tail
decay, from tail intensity, which determines the magnitude of the tail
density for a given order. M1--M3 are benchmark models based on existing
methods. M1 is static, M2 adds location and scale dynamics, and M3 adds
tail-order dynamics. We introduce M4 and M5 as DCD-SM extensions of M2 and
M3, respectively. Their spline corrections preserve the reference tail
order and add dynamics in the left and right tail intensities. M5 therefore
tests whether tail-intensity dynamics remain after accounting for dynamic
tail order.
The common parametric anchor is a symmetric Student-$t$ distribution with
location $m$, scale $\sigma>0$, and degrees of freedom $\eta>2$. We use the
variance-standardized parameterization associated with \citet{Hansen1994}:
\begin{equation}
    f_{\rm t}(r\mid m,\sigma,\eta)
    =\frac{\Gamma\{(\eta+1)/2\}}
           {\sigma\sqrt{\pi(\eta-2)}\,\Gamma(\eta/2)}
      \left\{1+\frac{(r-m)^2}{\sigma^2(\eta-2)}\right\}^{-(\eta+1)/2}.
    \label{eq:t-reference-density}
\end{equation}
Thus $m$ and $\sigma^2$ are the mean and variance of this reference density.
Write its CDF as $F_{\rm t}(r\mid m,\sigma,\eta)$ and denote the static
estimates of location, variance, and tail order by
$(m_0,\sigma_0^2,\eta_0)$. In a cross section, these
are parameters of the reference mixture density, rather than common
parameters imposed on each security. M1--M3 retain symmetric references.
M4 and M5 add both even and odd spline corrections, so their final densities
need not be symmetric. The five specifications are defined sequentially
as follows.

\paragraph{M1: static symmetric Student-$t$.}
Student-$t$ distributions have long been used to model stock returns
\citep{Praetz1972,BlattbergGonedes1974}.
The reference CDF is
\[
    F_t(r)=F_{\rm t}(r\mid m_0,\sigma_0,\eta_0),
\]
with $q_t=0$. The parameters $(m_0,\sigma_0^2,\eta_0)$ are estimated by
maximizing the static return log-score criterion subject to
$2.05<\eta_0<100$ and are then held fixed.
This fitted reference supplies the fixed anchor for the Fisher
orthogonalization above, with $ms_0=(m_0,\log\sigma_0^2)$ and $q_0=0$.

\paragraph{M2: dynamic location and scale.}
Starting from M1, keep $\eta_0$ fixed and define
\[
    v_t=\log\sigma_t^2,\qquad
    z_{t,i}=\frac{r_{t,i}-m_t}{\sigma_t}.
\]
Differentiating the log of \eqref{eq:t-reference-density} gives the location and
log-variance scores
\begin{equation}
\begin{aligned}
    s_{m,t}
    &=\sum_{i=1}^{N_t}w_{t,i}
      \frac{(\eta_0+1)z_{t,i}}
           {\sigma_t(\eta_0-2+z_{t,i}^2)},\\
    s_{v,t}
    &=\frac12\sum_{i=1}^{N_t}w_{t,i}
      \left\{\frac{(\eta_0+1)z_{t,i}^2}
                   {\eta_0-2+z_{t,i}^2}-1\right\}.
\end{aligned}
\label{eq:t-ms-explicit-scores}
\end{equation}
The HAR$(1,5,22)$ updates are
\[
    m_{t+1}
    =m_0+\kappa_{m,1}s_{m,t}
       +\kappa_{m,5}\bar s_{m,t}^{(5)}
       +\kappa_{m,22}\bar s_{m,t}^{(22)},
\]
\[
    v_{t+1}
    =\log\sigma_0^2+\kappa_{v,1}s_{v,t}
       +\kappa_{v,5}\bar s_{v,t}^{(5)}
       +\kappa_{v,22}\bar s_{v,t}^{(22)}.
\]
Thus
\[
    F_t(r)
    =F_{\rm t}(r\mid m_t,\sigma_t,\eta_0),
\]
and $q_t=0$.  Setting all six M2 HAR coefficients to zero gives M1 exactly.

Under a Gaussian reference, the score-driven framework includes ARCH models
\citep{CrealKoopmanLucas2013}. M2 belongs to the existing class of
Student-$t$ score-driven models studied by \citet{HarveySucarrat2014}.
Here we use weighted cross-sectional scores and HAR$(1,5,22)$ lags.
These scores account for heavy tails and limit the effect of extreme
returns on the updates.

\paragraph{M3: dynamic Student-$t$ tail order.}
M3 builds on existing Student-$t$ models with score-driven degrees of freedom.
\citet{LucasZhang2016} update variance and degrees of freedom through
score-driven exponentially weighted moving averages, using the inverse
diagonal of the Fisher information for scaling.
\citet{BlazsekMonteros2017} use a score-driven autoregression for degrees
of freedom. \citet{Palumbo2026} models symmetric and asymmetric tail-order
dynamics and develops Lagrange multiplier tests for their presence.

Our M3 specification uses weighted cross-sectional scores to update the
tail order of the chosen conditional mixture density. We project the
tail-order score off location and log variance using Fisher coefficients
evaluated once at M1, and use HAR$(1,5,22)$ averages of the projected scores
in the update. The projection removes these score directions in the
Fisher metric of M1.
We hold the M2 estimates of $m_t$ and $\sigma_t$ fixed for each date.
Tail-order dynamics
are specified in a scalar unconstrained latent coordinate $\xi_{\eta,t}$.
Let
\[
    \Lambda(x)=\frac{1}{1+\exp(-x)},
\]
and map the latent state to the degrees of freedom by
\begin{equation}
    \eta_t
    =\eta_L+(\eta_U-\eta_L)\Lambda(\xi_{\eta,t}).
    \label{eq:t-latent-map}
\end{equation}
The implementation uses
\[
    \eta_L=2.05,\qquad
    \eta_U=100.
\]
We set $\eta_L=2.05$ to ensure finite variance and maintain a numerical
margin above the boundary $\eta=2$.
Hence $2.05<\eta_t<100$ automatically. The M1 tail-order estimate
determines the latent anchor through
\begin{equation}
    \xi_{\eta,0}
    =
    \log\!\left(
      \frac{\eta_0-\eta_L}{\eta_U-\eta_0}
    \right).
    \label{eq:t-latent-anchor}
\end{equation}

The latent coordinate follows an anchored score-HAR$(1,5,22)$ recursion,
\begin{equation}
    \xi_{\eta,t+1}
    =
    \xi_{\eta,0}
    +\kappa_{\eta,1}s_{\xi_\eta,t}^{\perp}
    +\kappa_{\eta,5}\bar s_{\xi_\eta,t}^{\perp,(5)}
    +\kappa_{\eta,22}\bar s_{\xi_\eta,t}^{\perp,(22)}.
\label{eq:t-latent-har}
\end{equation}
The score entering this filter is with respect to the latent coordinate.
If $s_{\eta,t}$ denotes the weighted cross-sectional score with respect to the degrees
of freedom, the chain rule gives
\begin{equation}
\begin{split}
    s_{\xi_\eta,t}
    &=
    \frac{\partial\eta_t}{\partial\xi_{\eta,t}}s_{\eta,t}
    =
    (\eta_U-\eta_L)
    \Lambda(\xi_{\eta,t})
    \{1-\Lambda(\xi_{\eta,t})\}
    s_{\eta,t}.
\end{split}
\label{eq:t-latent-score}
\end{equation}
For M3, the same fixed-anchor projection is applied to the latent
tail-order score after the chain-rule transformation in
\eqref{eq:t-latent-score}. Define
\[
    \dot\eta_0
    =\left.\frac{\partial\eta}{\partial\xi_\eta}\right|_{\xi_{\eta,0}}
    =\frac{(\eta_0-\eta_L)(\eta_U-\eta_0)}{\eta_U-\eta_L}.
\]
The Fisher information is evaluated once at M1 and held fixed. Both scores
in the following projection are evaluated at the current M3 state
$(m_t,\sigma_t,\eta_t)$:
\begin{equation}
\begin{split}
    s_{\xi_\eta,t}^{\perp}
    &=s_{\xi_\eta,t}
      -I_{\xi_\eta,ms,0}I_{ms,ms,0}^{-1}s_{ms,t}\\
    &=s_{\xi_\eta,t}
      -\frac{6\dot\eta_0}{\eta_0(\eta_0-2)(\eta_0+1)}s_{v,t}.
\end{split}
\label{eq:t-explicit-projection}
\end{equation}
Here $s_{v,t}$ is the log-variance component of $s_{ms,t}$. The location
coefficient is zero by symmetry. Section~\ref{S-supp:fisher} gives the closed-form Fisher blocks
and their derivation.
The tail-order state in \eqref{eq:t-latent-map} then defines
\[
    F_t(r)
    =F_{\rm t}(r\mid m_t,\sigma_t,\eta_t),
\]
again with $q_t=0$. Setting the three coefficients in
\eqref{eq:t-latent-har} to zero fixes $\xi_{\eta,t}=\xi_{\eta,0}$ and
therefore $\eta_t=\eta_0$, giving M2 exactly.

\paragraph{M4: M2 reference plus DCD-SM.}
We hold the M2 estimates of $m_t$ and $\sigma_t$ fixed for each date and
form the transformed observations as
\[
    u_{t,i}=F_{\rm t}
    (r_{t,i}\mid m_t,\sigma_t,\eta_0).
\]
For M4, \eqref{eq:residual-score} projects the spline score off location
and log variance. At the symmetric M1 anchor, the symmetric knot sequence
makes the even signal orthogonal to location and the odd signal orthogonal
to log variance. Thus \eqref{eq:endpoint-scores} becomes
\begin{equation}
\begin{aligned}
    s_{{\rm even},t}
    &=e_{\rm even}'s_{q,t}-c_{{\rm even},0}s_{v,t},\\
    s_{{\rm odd},t}
    &=e_{\rm odd}'s_{q,t}-c_{{\rm odd},0}s_{m,t},
\end{aligned}
\label{eq:explicit-endpoint-projection}
\end{equation}
where
\begin{equation}
\begin{aligned}
    c_{{\rm even},0}
    &=\frac{2(\eta_0+3)}{\eta_0}\,e_{\rm even}'I_{q,v,0},\\
    c_{{\rm odd},0}
    &=\frac{\sigma_0^2(\eta_0-2)(\eta_0+3)}{\eta_0(\eta_0+1)}
      \,e_{\rm odd}'I_{q,m,0}.
\end{aligned}
\label{eq:endpoint-projection-coefficients}
\end{equation}
The vectors $I_{q,m,0}$ and $I_{q,v,0}$ are the columns of $I_{q,ms,0}$.
Their expectations involve the spline basis and are evaluated by numerical
quadrature once at M1, as detailed in Section~\ref{S-supp:fisher}.
The resulting even and odd scores are propagated by
\eqref{eq:endpoint-har}--\eqref{eq:parity-har-dcd}.

\paragraph{M5: M3 reference plus DCD-SM.}
We hold the M3 estimates of $m_t$, $\sigma_t$, and $\eta_t$ fixed for each
date and use $F_{\rm t}(\cdot\mid m_t,\sigma_t,\eta_t)$ as the reference
CDF. The same DCD-SM recursion is then fitted. The
spline score is projected off location and log variance only. The dynamic
tail-order direction associated with $\xi_{\eta,t}$ is deliberately
retained. M5 therefore combines dynamic Student-$t$ tail order with an
additional even/odd tail-intensity correction.
The even and odd scores use \eqref{eq:explicit-endpoint-projection} with the
same M1 coefficients as M4, evaluating the PIT observations and reference
scores using the M3 estimates for each date.

Within this sequential specification, setting the newly introduced dynamic
coefficients to zero gives the exact zero-restriction relations
\[
    M1\subset M2\subset M3\subset M5,
    \qquad
    M2\subset M4.
\]
M3 and M4 are two different ways of adding shape dynamics beyond the
location and scale dynamics estimated by M2. M5 asks whether tail-intensity dynamics remain useful
conditional on the dynamic Student-$t$ reference. Its spline score retains
the latent tail-order direction, so M5 does not test a Fisher-orthogonal
complement to M3.

\subsection{Estimation, inference, and forecasting}
\label{subsec:estimation-forecasting}

All models are fitted directly to the return observations, using either
raw returns or PC1 residuals in the empirical analysis, without separate
RM/RV standardization. M1 is estimated first. M2 is then
estimated from score-HAR updates of location and log variance. For M3, we
hold the M2 estimates of $m_t$ and $\sigma_t$ fixed for each date and
estimate the three HAR coefficients for $\xi_{\eta,t}$. For M4, we retain
the same M2 estimates and estimate the six DCD-SM coefficients. For M5, we
hold the M3 estimates of $m_t$, $\sigma_t$, and $\eta_t$ fixed for each
date and estimate the six DCD-SM coefficients.
Section~\ref{sec:theory} establishes consistency and asymptotic normality
of the DCD-SM coefficient estimators with the reference distribution and
the Fisher projection matrices treated as known.

For M4 and M5, let
\[
    \kappa
    =(
      \kappa_{{\rm even},1},\kappa_{{\rm even},5},\kappa_{{\rm even},22},
      \kappa_{{\rm odd},1},\kappa_{{\rm odd},5},\kappa_{{\rm odd},22}
      )'.
\]
With the reference distribution at each date and the M1 Fisher projection
matrices held fixed,
the objective is the time average of the weighted cross-sectional return log score
\begin{equation}
    L_T(\kappa)
    =\frac1T\sum_{t=1}^T
      \sum_{i=1}^{N_t}w_{t,i}
      \log\bar g_t(r_{t,i}\mid\kappa),
    \label{eq:dcd-objective}
\end{equation}
where $\bar g_t$ is given by \eqref{eq:corrected-density}. The criterion
uses the same constituent weights that define the target mixture. Each return is
evaluated under the target mixture density $\bar g_t$. For conditionally
iid returns within each date and equal weights, the date-$t$ contribution
is the joint log likelihood divided by $N_t$. In general, the individual
returns need not have this density and may be dependent. The objective
fits the target mixture density without specifying the joint distribution
of the observed cross section. It has the same algebraic form as a
normalized working-independence composite log likelihood
\citep{Lindsay1988,VarinReidFirth2011}.

The reference distribution at each date and the Fisher projection matrices
are held fixed during estimation and covariance calculation. For every candidate
$\kappa$, the state, responsibilities, projected score, even and odd scores, and
next-period state are recomputed forward. We estimate $\kappa$ by directly
maximizing \eqref{eq:dcd-objective}, without an EM step. Inference uses Wald
tests with conditional sandwich covariance estimates.

Strict out-of-sample forecasts are generated recursively. All fitted coefficients
and Fisher projection matrices are held fixed at their training estimates, and the
scores from the last 22 training dates initialize the HAR histories. The density
forecast for date $t$ uses only information and scores dated $t-1$ or earlier.
After the date-$t$ return or cross section is observed, its score is used only
for subsequent forecasts. In a cross section, forecast densities and their
quantiles describe the target mixture distribution, rather than each
constituent's conditional distribution. The same procedure applies
when $N_t=1$, although the numerical comparisons in this paper concentrate
on multiple-observation cross sections.

\subsection{Tail intensities as \texorpdfstring{$J\to\infty$}{J tends to infinity}}
\label{subsec:endpoint-interpretation}

We study how the even and odd amplitudes relate to tail intensity as the
number of spline components $J$ increases. Let $B_{j,J}$ be a clamped
B-spline basis of fixed degree, with masses
$m_{j,J}=\int_0^1 B_{j,J}(u)\,du$, and write $h_J$ for its correction
density. The vectors in \eqref{eq:endpoint-directions} now
depend on $J$, and the state is
\[
 q_{t,J}=a_{{\rm even},t}e_{{\rm even},J}
          +a_{{\rm odd},t}e_{{\rm odd},J}.
\]
We hold the amplitude sequences fixed as $J$ increases. Define
\[
 \alpha_{L,t}=a_{{\rm even},t}-a_{{\rm odd},t},\qquad
 \alpha_{R,t}=a_{{\rm even},t}+a_{{\rm odd},t},
\]
and write the normalizing factor as
\begin{equation}
 Z_{t,J}=1+m_{1,J}(e^{\alpha_{L,t}}-1)
           +m_{J,J}(e^{\alpha_{R,t}}-1).
 \label{eq:endpoint-normalizer}
\end{equation}
For each $J$, the clamped basis gives
\begin{equation}
 h_J(0\mid q_{t,J})=\frac{e^{\alpha_{L,t}}}{Z_{t,J}},\qquad
 h_J(1\mid q_{t,J})=\frac{e^{\alpha_{R,t}}}{Z_{t,J}}.
 \label{eq:endpoint-density-values}
\end{equation}
\begin{proposition}
\label{thm:endpoint-intensity}
Suppose that the support widths of $B_{1,J}$ and $B_{J,J}$ tend to zero
as $J\to\infty$, so that
$m_{1,J}\to0$ and $m_{J,J}\to0$, and that
\[
 \sup_t\bigl(|a_{{\rm even},t}|+|a_{{\rm odd},t}|\bigr)<\infty.
\]
Then $\sup_t|Z_{t,J}-1|\to0$ and
\begin{equation}
\begin{aligned}
 h_J(0\mid q_{t,J})
   &=e^{a_{{\rm even},t}-a_{{\rm odd},t}}+o(1),\\
 h_J(1\mid q_{t,J})
   &=e^{a_{{\rm even},t}+a_{{\rm odd},t}}+o(1),
\end{aligned}
 \label{eq:endpoint-intensity-limit}
\end{equation}
uniformly in $t$. For a reference density $f_t>0$ on $\mathbb R$, the
corresponding return density
$\bar g_{t,J}(r)=f_t(r)h_J\{F_t(r)\mid q_{t,J}\}$ satisfies
\begin{equation}
\begin{aligned}
 \lim_{J\to\infty}\lim_{r\to-\infty}
    \frac{\bar g_{t,J}(r)}{f_t(r)}
   &=e^{a_{{\rm even},t}-a_{{\rm odd},t}},\\
 \lim_{J\to\infty}\lim_{r\to+\infty}
    \frac{\bar g_{t,J}(r)}{f_t(r)}
   &=e^{a_{{\rm even},t}+a_{{\rm odd},t}}.
\end{aligned}
 \label{eq:return-tail-intensity-limit}
\end{equation}
The tail limit is taken for each fixed $J$ before letting $J$ increase.
\end{proposition}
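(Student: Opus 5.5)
The plan takes the endpoint identities \eqref{eq:endpoint-normalizer}--\eqref{eq:endpoint-density-values} as given, since they are stated for each $J$ just before the proposition. If they need justification, I would verify them as follows. With the Helmert matrix $Q$ satisfying $Q'Q=I_J-\mathbf 1_J\mathbf 1_J'/J$, the state $q_{t,J}$ gives $Q'q_{t,J}=\alpha_{L,t}e_1+\alpha_{R,t}e_J$ minus a constant multiple of $\mathbf 1_J$. That constant cancels in the weighted softmax \eqref{eq:dcd-weights}, so
\[
p_1=\frac{m_{1,J}e^{\alpha_{L,t}}}{Z_{t,J}},\qquad p_J=\frac{m_{J,J}e^{\alpha_{R,t}}}{Z_{t,J}},\qquad p_j=\frac{m_{j,J}}{Z_{t,J}}\quad(1<j<J),
\]
where the normalizer uses $\sum_j m_{j,J}=1$. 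A clamped basis has $B_{1,J}(0)=1$ and $B_{J,J}(1)=1$, and all other basis functions vanish at the endpoints. Hence $h_J(0\mid q_{t,J})=p_1/m_{1,J}=e^{\alpha_{L,t}}/Z_{t,J}$, and similarly at $u=1$.

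\textbf{Uniform normalizer and endpoint values.} Let $M=\sup_t(|a_{{\rm even},t}|+|a_{{\rm odd},t}|)<\infty$. Then $|\alpha_{L,t}|,|\alpha_{R,t}|\le M$ for all $t$, so
\[
|Z_{t,J}-1|\le (m_{1,J}+m_{J,J})(e^{M}+1)\to0
\]
uniformly in $t$. For $J$ large enough this also gives $Z_{t,J}\ge 1/2$ uniformly in $t$. It follows that
\[
\left|\frac{e^{\alpha_{L,t}}}{Z_{t,J}}-e^{\alpha_{L,t}}\right|=\frac{e^{\alpha_{L,t}}\,|1-Z_{t,J}|}{Z_{t,J}}\le 2e^{M}\sup_s|Z_{s,J}-1|\to0 .
\]
The same bound holds at the right endpoint, which proves \eqref{eq:endpoint-intensity-limit} uniformly in $t$.

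\textbf{Return-scale tails.} Fix $J$ and $t$. Because $f_t>0$ on $\mathbb R$ is the density of the continuous strictly increasing CDF $F_t$, we have $F_t(r)\to0$ as $r\to-\infty$ and $F_t(r)\to1$ as $r\to+\infty$. The correction $h_J(\cdot\mid q)$ is a finite combination of polynomial B-splines, so it is continuous on $[0,1]$, including at the endpoints. Therefore
\[
\frac{\bar g_{t,J}(r)}{f_t(r)}=h_J\{F_t(r)\mid q_{t,J}\}\to h_J(0\mid q_{t,J}) \quad (r\to-\infty),
\]
and it tends to $h_J(1\mid q_{t,J})$ as $r\to+\infty$. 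Taking $J\to\infty$ afterwards and applying the previous step gives \eqref{eq:return-tail-intensity-limit}. The order of limits is essential: for fixed $r$, letting $J\to\infty$ first would not isolate the endpoint component.

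\textbf{Main obstacle.} There is no serious obstacle. The only delicate step is the Helmert bookkeeping that produces the closed form of $p(q_{t,J})$, specifically checking that the $-\mathbf 1\mathbf 1'/J$ term produces only a common shift that cancels in the softmax. A secondary check is continuity of the B-splines at the clamped endpoints, where one uses one-sided continuity at $0$ and $1$.
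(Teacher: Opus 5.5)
Your proposal is correct and follows essentially the same route as the paper: the Helmert common-shift cancellation giving the closed-form weights, the uniform bound $|Z_{t,J}-1|\le(e^{M}+1)(m_{1,J}+m_{J,J})$, and continuity of $h_J$ at the clamped endpoints, with the tail limit taken before $J\to\infty$. The only minor difference is the lower bound on the normalizer. You use $Z_{t,J}\ge 1/2$ for large $J$, whereas the paper uses $Z_{t,J}\ge e^{-A}$ for every $J$, viewing $Z_{t,J}$ as a weighted average of $e^{\alpha_{L,t}}$, $e^{\alpha_{R,t}}$ and $1$; either suffices.
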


This gives a semiparametric interpretation of the correction. The
Student-$t$ reference determines tail order, and the spline correction
adjusts the left and right tail intensities. The even amplitude controls
their common change in log intensity, and the odd amplitude controls the
difference between the right and left log intensities.
Under the same conditions, the integrated absolute difference satisfies
\[
 \sup_t\int_0^1|h_J(u\mid q_{t,J})-1|\,du
     =O(m_{1,J}+m_{J,J})\to0.
\]
The tail intensity ratios can therefore stay different from one while
this integral tends to zero. Section~\ref{S-supp:endpoints} proves both results.

%% file: sec3.tex
\section{Asymptotic theory}
\label{sec:theory}

The analyst chooses nonnegative weights $w_{t,i}$ summing to one, with
zero weights allowed, to define the conditional mixture of interest. We
study the dynamics of this chosen mixture while allowing constituent
densities to differ and returns within a date to be dependent.
We treat the reference distribution and the Fisher projection matrices as
known and study estimation of $h_{\rm SM}(\cdot\mid q_t)$ with a fixed
spline dimension $J$. We establish consistency and asymptotic normality of
the estimators of its dynamic coefficients as the number of dates $T$
increases. Detailed assumptions and proofs appear in
Section~\ref{S-supp:assumptions} of the supplement.

\subsection{Estimation criterion and regularity conditions}
\label{subsec:oracle-criterion}

Let $E=(e_{\rm even},e_{\rm odd})$ and
$z_t=(a_{{\rm even},t},a_{{\rm odd},t})'$, so that $q_t=Ez_t$.
For a candidate vector $\theta$ of the six HAR coefficients, define
\[
 s_t(q)=\sum_i w_{t,i}\psi(u_{t,i},q),\qquad
 \psi(u,q)=Q\{\tau(u,q)-p(q)\},\qquad u_{t,i}=F_t(r_{t,i}).
\]
The projected signal driving the state recursion is
\[
 d_t(\theta)=E'\{s_t(q_t(\theta))+b_t\},\qquad
 b_t=-I_{q,ms,0}I_{ms,ms,0}^{-1}s_{ms,t}.
\]
The projection coefficient in $b_t$ is evaluated once at the fixed reference
values $(ms,q)=(ms_0,0)$, as in \eqref{eq:tangent-loading}. The reference
score $s_{ms,t}$ in \eqref{eq:reference-ms-score} is computed from the
date-$t$ returns using the known reference distribution. With
$C_w(\theta)=\operatorname{diag}(\kappa_{{\rm even},w},
\kappa_{{\rm odd},w})$, the score update equation is
\begin{equation}
 z_{t+1}(\theta)=\sum_{w\in\{1,5,22\}}C_w(\theta)
                     \frac1w\sum_{j=0}^{w-1}d_{t-j}(\theta).
 \label{eq:oracle-endpoint-recursion}
\end{equation}
This is a stochastic recurrence in the two amplitudes and their 21 lags.
Stability depends on the derivative of the score feedback as well as on
the HAR coefficients.

The local criterion and estimator are
\begin{equation}
 \ell_t(\theta)=\sum_i w_{t,i}
      \log h_{\rm SM}\{u_{t,i}\mid q_t(\theta)\},\qquad
 \widehat\theta_T\in\arg\max_{\theta\in\mathcal N}
                     \frac1T\sum_{t=1}^T\ell_t(\theta),
 \label{eq:local-estimator}
\end{equation}
where $\mathcal N$ is a compact convex neighborhood of the true parameter.
The fixed reference log density is constant in $\theta$, so this criterion
has the same maximizer as the weighted cross-sectional return log score.
Let $\theta_0$ denote the true coefficient vector. Define $S_t$ and the
negative expected Hessian $\mathcal H$ by
\[
 S_t:=\frac{\partial\ell_t(\theta_0)}{\partial\theta},\qquad
 \mathcal H
 :=-E\left[
   \frac{\partial^2\ell_t(\theta_0)}
        {\partial\theta\,\partial\theta'}
 \right].
\]

The assumptions are summarized below. Their full versions appear in the
supplement.
\begin{enumerate}
 \item\label{ass:oracle-specification}
 The reference CDF, constituent set, $N_t$, and weights $w_{t,i}$ are given
 before the date-$t$ returns are observed and do not depend on $\theta$.
 The weighted average conditional PIT density equals
 $h_{\rm SM}\{\cdot\mid q_t(\theta_0)\}$.
 \item\label{ass:filter-stability}
 The true parameter is interior to $\mathcal N$. For $\theta\in\mathcal N$,
 $q_t(\theta)$ is continuous in $\theta$ and jointly stationary and ergodic.
 The effect of
 initialization decays exponentially, uniformly on $\mathcal N$.
 There is a finite constant $C$ such that $\|b_t\|\le C$ for all $t$
 almost surely.
 \item\label{ass:dynamic-identification}
 If $q_t(\theta)=q_t(\theta_0)$ almost surely for every $t$, then
 $\theta=\theta_0$.
 \item\label{ass:derivative-regularity}
 The first and second derivatives of $q_t(\theta)$ with respect to
 $\theta$ are continuous in $\theta$. Together with $q_t(\theta)$, they
 are jointly stationary and ergodic. Their initialization
 effects decay exponentially, uniformly on $\mathcal N$. For some
 $\delta>0$, assume
 \[
 \begin{aligned}
 E\left[\sup_{\theta\in\mathcal N}
   \left\|\frac{\partial q_t(\theta)}{\partial\theta'}\right\|_F^{2+\delta}
   \right]&<\infty,\\
 E\left[\sup_{\theta\in\mathcal N}
   \left\|\frac{\partial^2 q_{t,j}(\theta)}
     {\partial\theta\,\partial\theta'}\right\|_F^{1+\delta}
   \right]&<\infty,\qquad j=1,\ldots,J-1.
 \end{aligned}
 \]
 Here $q_{t,j}$ is the $j$th coordinate of $q_t$, and $\|\cdot\|_F$
 denotes the Frobenius norm.
 \item\label{ass:information-variance}
 The negative expected Hessian $\mathcal H$ is positive definite. There
 is a finite nonrandom matrix $\Omega$ such that
 \begin{equation}
 \frac1T\sum_{t=1}^T
   \operatorname{Var}(S_t\mid\mathcal F_{t-1})
   \xrightarrow{p}\Omega.
 \label{eq:predictable-variance-limit}
 \end{equation}
\end{enumerate}
These conditions concern the score update equation and aggregate
conditional moments.
They do not impose stationarity or independence on the full observation
process. Local stability and identification are assumptions, rather than
consequences of optimizer convergence.

\subsection{Consistency and asymptotic normality}
\label{subsec:consistency}

\begin{theorem}
\label{thm:consistency}
Under Assumptions~\ref{ass:oracle-specification}--\ref{ass:dynamic-identification},
the local estimator in \eqref{eq:local-estimator} converges almost surely to
$\theta_0$.
\end{theorem}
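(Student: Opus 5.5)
We follow the standard extremum-estimator route: establish a uniform strong law for the criterion, and show that its limit is uniquely maximized at $\theta_0$. Compactness of $\mathcal N$ then gives almost sure convergence.

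\emph{Uniform convergence.} We first replace the feasible filtered state (started from fixed initial values) by its stationary version. Assumption~\ref{ass:filter-stability} says the initialization effect decays exponentially, uniformly on $\mathcal N$. We need a bound on how $\log h_{\rm SM}(u\mid q)$ depends on $q$. Since $\phi_j\ge0$ and $\sum_j p_j(q)\phi_j(u)$ is a convex combination, we have
\[
\min_j p_j(q)/m_j\le h_{\rm SM}(u\mid q)\le \max_j p_j(q)/m_j .
\]
Also $\partial_q\log h_{\rm SM}(u\mid q)=Q\{\tau(u,q)-p(q)\}$, and its norm is bounded by $2$ uniformly in $(u,q)$. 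Hence $|\log h_{\rm SM}(u\mid q)-\log h_{\rm SM}(u\mid q^\ast)|\le 2\|q-q^\ast\|$. Averaging over $t$, the initialization error vanishes almost surely, uniformly in $\theta$.

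For the stationary version, the same Lipschitz bound and the bounds on $h_{\rm SM}$ give
\[
|\ell_t(\theta)|\le C(1+\|q_t(\theta)\|).
\]
Here $\|q_t(\theta)\|$ is controlled by the boundedness of $\psi$ (since $\|\tau-p\|\le 2$), the bound $\|b_t\|\le C$, and the HAR recursion~\eqref{eq:oracle-endpoint-recursion}. Indeed, $z_{t+1}$ is a finite linear combination of at most $22$ lagged $d_t$, each bounded by a constant. So $\sup_{\theta\in\mathcal N}|\ell_t(\theta)|$ is bounded by a deterministic constant on compact $\mathcal N$. Continuity of $q_t(\theta)$ in $\theta$ and stationary ergodicity then give a uniform ergodic theorem, e.g.\ the Rao/Straumann--Mikosch version for $C(\mathcal N)$-valued ergodic sequences. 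The conclusion is $\sup_{\theta}|T^{-1}\sum_t\ell_t(\theta)-L(\theta)|\to0$ almost surely, with $L(\theta)=E\,\ell_t(\theta)$ continuous.

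One subtlety is that the full observation process need not be stationary. The plan is to use the fact that $\ell_t(\theta)$ is a function of $(u_{t,\cdot},w_{t,\cdot},q_t(\theta))$, which Assumption~\ref{ass:filter-stability} treats as jointly stationary ergodic. If that is too strong, we would instead use a martingale-difference decomposition $\ell_t-E[\ell_t\mid\mathcal F_{t-1}]$ with bounded increments together with a bracketing argument.

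\emph{Identification.} This is the key step and uses Assumption~\ref{ass:oracle-specification}. Conditionally on $\mathcal F_{t-1}$, the weights and $q_t(\theta)$ are predictable. Therefore
\[
E[\ell_t(\theta)\mid\mathcal F_{t-1}]=\sum_i w_{t,i}\int\phi_{t,i}(u)\log h_{\rm SM}(u\mid q_t(\theta))\,du=\int h_{\rm SM}(u\mid q_t(\theta_0))\log h_{\rm SM}(u\mid q_t(\theta))\,du .
\]
Only the weighted-average density enters here, regardless of within-date dependence or heterogeneity. By Gibbs' inequality,
\[
E[\ell_t(\theta_0)-\ell_t(\theta)\mid\mathcal F_{t-1}]=\mathrm{KL}\{h_{\rm SM}(\cdot\mid q_t(\theta_0))\,\|\,h_{\rm SM}(\cdot\mid q_t(\theta))\}\ge0,
\]
with equality iff the two spline densities coincide. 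By linear independence of the B-splines and injectivity of the weighted softmax on Helmert coordinates, equality holds iff $q_t(\theta)=q_t(\theta_0)$. Taking expectations, $L(\theta)=L(\theta_0)$ forces the KL divergence to be zero almost surely for every $t$ (by stationarity). Assumption~\ref{ass:dynamic-identification} then gives $\theta=\theta_0$.

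\emph{Conclusion.} $L$ is continuous on compact $\mathcal N$ and uniquely maximized at $\theta_0$, and the sample criterion converges to it uniformly almost surely. The standard argmax argument therefore yields $\widehat\theta_T\to\theta_0$ almost surely. We expect the main obstacle to be the uniform ergodic step: making rigorous that the criterion built from the filtered, data-dependent state inherits stationarity and the uniform moment bound. The boundedness of the spline score, which makes the recursion produce uniformly bounded states, is what we would lean on.
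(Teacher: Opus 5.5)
\textbf{Gap: the uniform law of large numbers as you primarily plan it.} Your identification step (KL divergence, linear independence of the B-splines, injectivity of the weighted softmax, then Assumption~\ref{ass:dynamic-identification}), your envelope and Lipschitz bounds, the argmax conclusion, and your handling of initialization all match the paper. The problem is the main route to uniform convergence. Assumption~\ref{ass:filter-stability} makes only the state family $q_t(\cdot)$, equivalently $X_t(\cdot)$, strictly stationary and ergodic in $C(\mathcal N)$. It says nothing about the observations $(u_{t,1},\ldots,u_{t,N_t})$. The paper states explicitly that no stationarity or ergodicity is imposed on the cross-sectional observation process, whose dimension $N_t$ may even vary. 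So $\ell_t(\cdot)$ is not known to be a stationary ergodic $C(\mathcal N)$-valued sequence, and a Rao/Straumann--Mikosch uniform ergodic theorem cannot be applied to it. For the same reason, the $t$-invariance of $L(\theta)=E\,\ell_t(\theta)$, which your identification step uses, must come from the state process rather than from the data.

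\textbf{The fix is your fallback, made the main argument.} Write $\ell_t(\theta)=d_t(\theta)+m_t(\theta)$, where
$m_t(\theta)=E\{\ell_t(\theta)\mid\mathcal F_{t-1}\}=\int_0^1 h_{\rm SM}(u\mid q_t(\theta_0))\log h_{\rm SM}(u\mid q_t(\theta))\,du$.
This uses the same conditional-average identity as your identification step.
\begin{itemize}
\item $m_t(\theta)$ is a measurable function of the stationary ergodic pair $(q_t(\theta),q_t(\theta_0))$. The ergodic theorem therefore applies to it, and $L(\theta)=E\,m_t(\theta)$ does not depend on $t$.
\item $d_t(\theta)$ is a bounded martingale difference. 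Hence $T^{-1}\sum_t d_t(\theta)\to0$ a.s.\ by martingale convergence and Kronecker's lemma, with no stationarity needed.
\end{itemize}
Uniformity needs no bracketing over the observation space. Define the predictable modulus $\omega_t(\delta)=\sup_{\|\theta-\widetilde\theta\|\le\delta}\|q_t(\theta)-q_t(\widetilde\theta)\|$. Your Lipschitz bound then gives $|\ell_t(\theta)-\ell_t(\widetilde\theta)|\le 2\,\omega_t(\delta)$, and the same bound holds for $m_t$. Moreover, $\omega_t(\delta)$ is bounded and stationary ergodic, and $E\,\omega_t(\delta)\to0$ as $\delta\downarrow0$ by dominated convergence. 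A finite $\delta$-net of $\mathcal N$ then gives $\sup_{\theta}|L_T(\theta)-L(\theta)|\to0$ a.s. The ergodic theorem is needed only for functionals of the state family, while the observation-dependent remainder is handled by the martingale strong law.
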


Let $L(\theta):=E[\ell_t(\theta)]$. Under
Assumption~\ref{ass:oracle-specification},
\[
\begin{aligned}
 E[\ell_t(\theta)\mid\mathcal F_{t-1}]
 &=\sum_i w_{t,i}\int_0^1\phi_{t,i}(u)
       \log h_{\rm SM}(u\mid q_t(\theta))\,du\\
 &=\int_0^1 h_{\rm SM}(u\mid q_t(\theta_0))
       \log h_{\rm SM}(u\mid q_t(\theta))\,du.
\end{aligned}
\]
This equality relies on using the same weights in the criterion and in
the target mixture. The expected log-score gap is therefore
\[
 L(\theta_0)-L(\theta)=
 E\left[\operatorname{KL}\left\{
 h_{\rm SM}(\cdot\mid q_t(\theta_0))\,
 \Vert\,h_{\rm SM}(\cdot\mid q_t(\theta))\right\}\right]\ge0.
\]
Linear independence of the spline basis and the invertible logit embedding
identify the density state for the given reference. Dynamic identification then identifies the HAR
coefficients. The full argument is given in the supplement.

Let $J_t=\partial q_t(\theta_0)/\partial\theta'$. The chain rule gives
\begin{equation}
 S_t=J_t's_t(q_t(\theta_0)).
 \label{eq:parameter-score}
\end{equation}
Similarly, Assumption~\ref{ass:oracle-specification} gives
\[
\begin{aligned}
 E[s_t(q_t(\theta_0))\mid\mathcal F_{t-1}]
 &=\sum_i w_{t,i}\int_0^1
       \psi(u,q_t(\theta_0))\phi_{t,i}(u)\,du\\
 &=\int_0^1\psi(u,q_t(\theta_0))
       h_{\rm SM}(u\mid q_t(\theta_0))\,du\\
 &=0.
\end{aligned}
\]
Hence $E(S_t\mid\mathcal F_{t-1})=0$. Thus $S_t$ is a martingale
difference even when individual constituent scores have nonzero
conditional means.

Under Assumption~\ref{ass:oracle-specification}, the negative expected Hessian
has the following representation:
\begin{align}
 \mathcal I_q(q)&=\int_0^1\psi(u,q)\psi(u,q)'
                               h_{\rm SM}(u\mid q)\,du,\nonumber\\
 \mathcal H&=E[J_t'\mathcal I_q(q_t(\theta_0))J_t],\qquad
 \Sigma_t=\operatorname{Var}\{s_t(q_t(\theta_0))\mid\mathcal F_{t-1}\},
 \label{eq:sensitivity-matrix}\\
 \operatorname{Var}(S_t\mid\mathcal F_{t-1})&=J_t'\Sigma_tJ_t.\nonumber
\end{align}

For an ordinary correctly specified log likelihood, the negative expected
Hessian equals the score variance. Here the weighted log score is used
without specifying the joint distribution within a date. Therefore
$\Omega$ cannot in general be recovered from $\mathcal H$ alone, and the
limiting covariance takes the sandwich form used in composite likelihood
inference \citep{VarinReidFirth2011}.

\begin{theorem}
\label{thm:asymptotic-normality}
Under Assumptions~\ref{ass:oracle-specification}--\ref{ass:information-variance},
the local estimator satisfies
\begin{equation}
 \sqrt T(\widehat\theta_T-\theta_0)
 \Longrightarrow N(0,\mathcal H^{-1}\Omega\mathcal H^{-1}).
 \label{eq:asymptotic-normality}
\end{equation}
The conclusion also holds when the states and their derivatives are
computed from initial values satisfying the forgetting conditions.
\end{theorem}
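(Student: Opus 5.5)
We prove Theorem~\ref{thm:asymptotic-normality} by the usual mean-value expansion of the first-order condition, combined with a martingale central limit theorem for the date-level scores $S_t$. Theorem~\ref{thm:consistency} and interiority of $\theta_0$ in $\mathcal N$ (Assumption~\ref{ass:filter-stability}) imply that, with probability tending to one, $\widehat\theta_T$ is interior and satisfies $T^{-1}\sum_t\partial\ell_t(\widehat\theta_T)/\partial\theta=0$. Expanding row by row around $\theta_0$ gives
\[
0=\frac{1}{\sqrt T}\sum_{t=1}^T S_t
 +\Bigl\{\frac1T\sum_{t=1}^T\frac{\partial^2\ell_t(\bar\theta_T)}{\partial\theta\,\partial\theta'}\Bigr\}\sqrt T(\widehat\theta_T-\theta_0),
\]
with $\bar\theta_T$ lying between $\widehat\theta_T$ and $\theta_0$, row-wise. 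It then suffices to show two things: $T^{-1/2}\sum_tS_t\Rightarrow N(0,\Omega)$, and the averaged Hessian converges in probability to $-\mathcal H$ uniformly near $\theta_0$. Positive definiteness of $\mathcal H$ (Assumption~\ref{ass:information-variance}) and Slutsky's lemma then give \eqref{eq:asymptotic-normality}.

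For the score, we use the representation \eqref{eq:parameter-score} together with the identity $E(S_t\mid\mathcal F_{t-1})=0$ derived before the statement. That derivation uses only Assumption~\ref{ass:oracle-specification} and the $\mathcal F_{t-1}$-measurability of $J_t$ and of the weights. We apply a martingale difference CLT (for instance Hall--Heyde, Cor.~3.1). The conditional variance condition is exactly \eqref{eq:predictable-variance-limit}. For the conditional Lindeberg condition, note that $\psi(u,q)=Q\{\tau(u,q)-p(q)\}$ is bounded, since responsibilities and weights lie in $[0,1]$. Hence $\|s_t(q)\|\le c$ uniformly, whatever the within-date dependence, and $\|S_t\|\le c\|J_t\|_F$. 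The $2+\delta$ moment bound on $\sup_\theta\|\partial q_t/\partial\theta'\|_F$ in Assumption~\ref{ass:derivative-regularity}, together with stationarity, gives $T^{-1}\sum_tE\|S_t\|^{2+\delta}=O(1)$. This yields the Lyapunov condition, and therefore Lindeberg. Importantly, no independence within a date is needed, because the argument runs through the date-level martingale structure.

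For the Hessian, differentiate twice:
\[
\frac{\partial^2\ell_t}{\partial\theta\partial\theta'}=J_t(\theta)'\,\frac{\partial s_t(q)}{\partial q'}\Big|_{q_t(\theta)}J_t(\theta)+\sum_{j}s_{t,j}(q_t(\theta))\,\frac{\partial^2q_{t,j}(\theta)}{\partial\theta\partial\theta'}.
\]
The derivative $\partial s_t/\partial q'$ is a smooth, bounded function of $(u,q)$, because $h_{\rm SM}\ge\min_jp_j(q)/m_j\cdot\sum_jB_j>0$ on bounded $q$-sets. Combined with the moment bounds ($2+\delta$ on $J_t$ and $1+\delta$ on second derivatives), this gives an integrable envelope for the Hessian over $\mathcal N$. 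A uniform ergodic law of large numbers for stationary ergodic sequences with continuous, dominated summands (Rao/Ranga Rao) then gives uniform convergence to $E[\partial^2\ell_t(\theta)/\partial\theta\partial\theta']$. Continuity in $\theta$ and consistency of $\bar\theta_T$ then give the limit at $\theta_0$. To identify this limit with $-\mathcal H$, take conditional expectations at $\theta_0$. The second term vanishes because $E[s_t(q_t(\theta_0))\mid\mathcal F_{t-1}]=0$ and $\partial^2q_t/\partial\theta\partial\theta'$ is predictable. For the first term, the information identity for the single density $h_{\rm SM}(\cdot\mid q)$ together with Assumption~\ref{ass:oracle-specification} gives $E[\partial s_t/\partial q'\mid\mathcal F_{t-1}]=-\mathcal I_q(q_t(\theta_0))$. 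This matches \eqref{eq:sensitivity-matrix}.

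The main obstacle is the last claim, that the result survives initialization. Here $\widetilde q_t$ and its derivatives are computed from arbitrary starting values rather than from the stationary versions. Uniform exponential forgetting in Assumptions~\ref{ass:filter-stability} and~\ref{ass:derivative-regularity} gives $\sup_\theta\|\widetilde q_t-q_t\|\le C\rho^t$ for a random $C$ and some $\rho<1$, and similarly for the first and second derivatives. Since $\psi$ and its $q$-derivatives are Lipschitz in $q$ uniformly in $u$ on bounded sets, the difference between the initialized and stationary scores is bounded by $C'\rho^t(1+\|J_t\|_F)$. Its normalized sum $T^{-1/2}\sum_t\rho^t(1+\|J_t\|_F)$ is $o_p(1)$, because $\|J_t\|_F$ has bounded moments and so $\rho^{t}\|J_t\|_F$ is summable almost surely by Borel--Cantelli. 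The Hessian difference is handled the same way without the $\sqrt T$ factor. The care required is to keep the random constants uniform on $\mathcal N$, so the argument relies explicitly on the uniformity stated in the assumptions.
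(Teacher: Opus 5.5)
\textbf{Gap in the Hessian step.} The overall structure matches the paper's proof: the first-order condition holds with probability tending to one, a Taylor expansion, the martingale CLT for $S_t$ with Lindeberg from the $(2+\delta)$ moment of $J_t$, identification of the Hessian limit with $-\mathcal H$ through the conditional-average identity, Slutsky, and a summable forgetting bound. Your score and initialization arguments are fine.

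The gap is in the uniform law for the Hessian. A Ranga Rao--type uniform ergodic theorem needs $\{\nabla^2_\theta\ell_t(\cdot)\}$ to be a strictly stationary ergodic sequence in $C(\mathcal N)$. The assumptions give stationarity and ergodicity only for the state families: $q_t(\cdot)$, $J_t(\cdot)$ and the second-derivative tensors $R_{q,t}(\cdot)$. The Hessian also depends on the observations $u_{t,i}$, on $N_t$ and on the weights. The paper deliberately imposes no stationarity or ergodicity on the observation process. The constituent densities $\phi_{t,i}$ may be heterogeneous and time-varying, subject only to the weighted-average restriction. So the ergodic theorem cannot be applied to the Hessian directly. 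Even the $t$-invariance of $E[\nabla^2_\theta\ell_t(\theta)]$, which you use as the limit, needs an argument.

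\textbf{The missing idea} is to separate the observations from the predictable parts.

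First, write $\nabla^2_\theta\ell_t(\theta)=\sum_iw_{t,i}\Xi_t(u_{t,i},\theta)$. Here $\Xi_t(u,\theta)$ is the one-observation Hessian contribution built from $J_t(\theta)$, $R_{q,t}(\theta)$, $\psi$ and its $q$-Jacobian at $q_t(\theta)$. It is $\mathcal F_{t-1}$-measurable for each $u$.

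Second, the conditional-average identity gives
\[
A^H_t(\theta):=E\{\nabla^2_\theta\ell_t(\theta)\mid\mathcal F_{t-1}\}=\int_0^1\Xi_t(u,\theta)h\{u\mid q_t(\theta_0)\}\,du.
\]
This is a measurable function of the stationary ergodic state families. So the ergodic theorem applies to it, and $E[A^H_t(\theta)]$ does not depend on $t$.

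Third, the remainder $\nabla^2_\theta\ell_t(\theta)-A^H_t(\theta)$ is a martingale difference dominated by $2K_t$, where $K_t=C\sup_{\theta\in\mathcal N}\{\|J_t(\theta)\|_F^2+\|R_{q,t}(\theta)\|_F\}$. By the moment conditions, $K_t$ has a finite $p$th moment for some $p>1$. A martingale strong law then sends the average of the remainder to zero.

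Fourth, uniformity on $\mathcal N$ comes from the predictable modulus
\[
\omega^H_t(\varepsilon)=\sup_{\|\theta-\widetilde\theta\|\le\varepsilon}\,\sup_{u\in[0,1]}\|\Xi_t(u,\theta)-\Xi_t(u,\widetilde\theta)\|.
\]
Because it is uniform in $u$, it bounds the moduli of both the observed Hessian and $A^H_t$. It is itself a stationary ergodic function of the state families. The ergodic theorem, dominated convergence and a finite net then give uniform convergence.

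With this replacement the rest of your argument goes through. For the initialized statement, you also need consistency of the initialized estimator; the paper's consistency argument covers this, but your proposal should invoke it explicitly.
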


With equal weights and a constant $N$, conditional iid sampling gives
$\Omega=\mathcal H/N$, so the covariance in
\eqref{eq:asymptotic-normality} reduces to $\mathcal H^{-1}/N$.
This simplification is not generally available with heterogeneous or
dependent returns. The proof combines a martingale central limit theorem for $S_t$, a uniform
law of large numbers for the recursive Hessian, and consistency. These
results concern the DCD-SM dynamic coefficients with an M2 or M3 reference
treated as known.

\subsection{Sandwich covariance and conditional Wald tests}
\label{subsec:sandwich-inference}

Using the score update equation with fixed initial values, let
$\widehat S_t=\partial_\theta\widehat\ell_t(\widehat\theta_T)$ and define
\begin{equation}
 H_T=-\sum_t\partial_\theta^2\widehat\ell_t(\widehat\theta_T),\qquad
 B_T=\sum_t(\widehat S_t-\bar S)(\widehat S_t-\bar S)',\qquad
 \widehat V_T=H_T^{-1}B_TH_T^{-\prime}.
 \label{eq:estimated-sandwich}
\end{equation}
Under the assumptions of Theorem~\ref{thm:asymptotic-normality},
$T\widehat V_T\to\mathcal H^{-1}\Omega\mathcal H^{-1}$ in probability.
For a full-row-rank restriction matrix $R$, with
$R\theta_0=r$ and positive-definite restricted limiting covariance,
\[
 W_T=(R\widehat\theta_T-r)'
       (R\widehat V_TR')^{-1}(R\widehat\theta_T-r)
       \Longrightarrow\chi^2_{\operatorname{rank}(R)}.
\]
The even and odd block restrictions each have rank three. Their joint
restriction has rank six.

Lag-zero outer products are sufficient because the parameter score
is a martingale difference under the stated assumptions. A fixed-lag HAC
estimator has the same limit. The main implementation uses lag zero.
This parameter covariance is separate from the
HAC(22) variance used for OOS forecast-score differences.

Under the stated assumptions, we can therefore estimate and test the
dynamics of the chosen mixture without specifying each constituent density
or the joint distribution within a date. Sections~\ref{sec:simulation} and
\ref{sec:empirical} study the numerical performance of the procedure with
estimated reference distributions.

%% file: sec4.tex
\section{Simulation evidence}
\label{sec:simulation}

All designs in this section use $w_{t,i}=1/N$, the equal-weight special
case of the weighted-mixture framework in Section~\ref{subsec:description}.
Section~\ref{sec:theory} treats the reference distribution and Fisher
projection matrices as known. Here we fit M1--M5 and compute the Fisher
projection coefficients from the M1 estimates in each replication to assess
the finite-sample size and power of the conditional Wald tests. The
experiment uses a symmetric Student-$t$ reference. The main comparison is between
conditionally independent draws and draws with a common factor explaining
10 percent of individual return variance on average. We use
$N\in\{20,50,100\}$ and 100 Monte Carlo replications for each design.
We evaluate empirical size under the null of no shape persistence and
power against alternatives with persistent tail dynamics. As an additional
assessment, we examine the
forecast consequences of fitting shape dynamics through out-of-sample
(OOS) accuracy.

\subsection{Generating dynamics and sequential estimation}
\label{subsec:simulation-design}

The generating baseline is the static symmetric Student-$t$ M1, with
location zero, scale one, and degrees of freedom $\eta_0=2.6750886053$.
Returns are expressed in percent.
Since location and scale have no generating dynamics, M2 coincides with
M1 and is omitted as a separate generating model. M2 is still fitted in
estimation.
These values determine generation and are not supplied to estimation.
We generate M3 by adding tail-order dynamics through
\eqref{eq:simulation-eta}, with no spline correction ($q_t=0$).
M4 keeps tail order fixed and generates the even/odd amplitudes
through \eqref{eq:simulation-endpoint}. M5 combines the tail-order
update in \eqref{eq:simulation-eta} with the amplitude updates in
\eqref{eq:simulation-endpoint}.
The conditional generating CDF is
\[
    G_t(y)=H_{\rm SM}\{F_t(y)\mid q_t\},
    \qquad
    q_t=e_{\rm even}a_{{\rm even},t}
        +e_{\rm odd}a_{{\rm odd},t}.
\]
Here $F_t$ is a symmetric Student-$t$ CDF with the fixed generating location
and scale and either static or dynamic tail order, according to the design.
We use the spline basis and the vectors $e_{\rm even}$ and $e_{\rm odd}$
defined in Section~\ref{sec:dcd-sm}.

We use the latent tail coordinate $\xi_{\eta,t}$ in
\eqref{eq:t-latent-map}. Let $d_{\eta,t}$ be its date-average score,
projected off location and log variance at the fixed baseline Fisher
geometry, and let $d_{b,t}$ denote the corresponding projected score for
$b\in\{{\rm even},{\rm odd}\}$. The generating updates are
\begin{align}
    \xi_{\eta,t+1}
    &=\xi_{\eta,0}+
       \sum_{w\in\{1,5,22\}}\kappa_{\eta,w}(N)\bar d_{\eta,t}^{(w)},
       \label{eq:simulation-eta}\\
    a_{b,t+1}
    &=\sum_{w\in\{1,5,22\}}\kappa_{b,w}(N)\bar d_{b,t}^{(w)}.
       \label{eq:simulation-endpoint}
\end{align}
Each date's signals are calculated from the actual $N$ generated returns.
For generation, the Fisher projection coefficients are computed once at
$q=0$ and the true static Student-$t$ reference values, and held fixed.
Only location and log variance are projected out. The tail-order direction
is retained in the spline signal, and no score whitening is applied.

For active components, the adopted S\&P 500 HAR vectors are multiplied by
\[
    \kappa_b(N)=\sqrt{N/500}\,\kappa_b^{\rm SP},
    \qquad b\in\{\eta,{\rm even},{\rm odd}\}.
\]
The baseline HAR coefficient vectors $\kappa_b^{\rm SP}$ are reported in
Section~\ref{S-supp:simulation-calibration} of the supplement. We use the
same coefficients for independent draws and the common-factor designs.
Under independence, the $N^{-1/2}$ fluctuation scale of an average score
motivates this adjustment. The same multiplier is used in the common-factor
design, without an additional dependence-specific variance adjustment.
Consequently, comparisons across $N$ concern this scaled design, not a
fixed-coefficient experiment isolating only sample size. Comparisons between
dependence specifications at a given $N$ use the same coefficients.

We consider five designs. Case 0 uses the static M1=M2 baseline.
Case 1 generates M3 with persistent $\eta$. Case 2 generates M4 with
only the even amplitude persistent. Case 3 generates M5 with persistent
$\eta$ and even amplitude. Case 4 adds persistence in the M5 odd amplitude.
Inactive coefficient vectors are
zero. We generate 500 burn-in dates, 1,886 estimation dates, and 642 OOS
dates. For each candidate coefficient vector, the score update equations
are run through the burn-in observations. These dates are excluded from
the estimation criterion and parameter-score outer products.

Estimation begins with a static symmetric-$t$ M1 fit of location, scale,
and tail order. For fitting, the Fisher projection coefficients are
computed once at $q=0$ and the fitted M1 reference values, and held fixed.
M2 estimates six HAR coefficients for location and log variance, holding
the M1 estimates fixed. M3 holds the M2 estimates of $m_t$ and $\sigma_t$
fixed for each date and estimates three HAR coefficients for
$\xi_{\eta,t}$. M4 uses the same M2 estimates and estimates six even/odd
DCD-SM coefficients. M5 holds the M3 estimates of $m_t$, $\sigma_t$, and
$\eta_t$ fixed for each date and estimates six DCD-SM coefficients.
Although the generating location and scale are static, M2 allows dynamics
in both coordinates.
For estimation, returns are converted from percent to decimal units.
Forecast scores are converted back to the original percentage units.

\subsection{Dependence with identical conditional marginals}
\label{subsec:simulation-dependence}

Returns are generated in two steps. Given a uniform variable $V_{t,i}$,
first draw a PIT value from the spline-mixture density and then apply
the inverse reference CDF to return to the return scale:
\[
    u_{t,i}=H_{\rm SM}^{-1}(V_{t,i}\mid q_t),
    \qquad
    Y_{t,i}=F_t^{-1}(u_{t,i}).
\]
Thus $u_{t,i}$ has conditional density $h_{\rm SM}(\cdot\mid q_t)$,
and $Y_{t,i}$ has conditional CDF $G_t=H_{\rm SM}\circ F_t$.
Equivalently, $Y_{t,i}=G_t^{-1}(V_{t,i})$.
For independent draws, the $V_{t,i}$ are conditionally independent uniforms.
For the common-factor design, draw
$a_i\sim U[-0.1,1.1]$ independently at the start of each replication and
hold the loadings fixed over time and across the five designs. Let
$C_t$ and $x_{t,i}$ be independent standard normal innovations, independent
over dates. Introduce $\gamma_t$ to adjust the strength of the common
component, as specified below, and define
\begin{align}
    \beta_{t,i}&=\frac{\gamma_ta_i}{\sqrt{1+(\gamma_ta_i)^2}},\\
    Z_{t,i}&=\beta_{t,i}C_t+
                 \sqrt{1-\beta_{t,i}^2}\,x_{t,i},\\
    u_{t,i}&=H_{\rm SM}^{-1}\{\Phi(Z_{t,i})\mid q_t\},
    \qquad Y_{t,i}=F_t^{-1}(u_{t,i}).
    \label{eq:simulation-copula}
\end{align}
Conditional on past information and the fixed loadings, every $Z_{t,i}$ is
standard normal. Thus the $V_{t,i}=\Phi(Z_{t,i})$ are identically uniform
but dependent across $i$ through the common factor $C_t$.
Applying the same two transformations gives each return the conditional
CDF $G_t$.
This construction relaxes independence without introducing heterogeneous
conditional marginals.

Before drawing the current innovations, $\gamma_t$ is calibrated to satisfy
\begin{equation}
    \frac1N\sum_{i=1}^N
    \frac{\operatorname{Var}\{E(Y_{t,i}\mid C_t,\mathcal F_{t-1})
                              \mid\mathcal F_{t-1}\}}
         {\operatorname{Var}(Y_{t,i}\mid\mathcal F_{t-1})}
    =0.10.
    \label{eq:simulation-share}
\end{equation}
The conditioning information includes the fixed loadings and the current
predictable density state.
The share concerns the transformed returns and averages individual explained
variance fractions. It is neither $\beta_{t,i}^2$, a common pairwise
correlation, nor a PCA eigenvalue share. Individual shares differ with
the loadings. In particular, the calculation integrates over individual
noise and common-factor realizations.

\subsection{Size and power}
\label{subsec:simulation-wald}

Estimation maximizes the time average of daily cross-sectional mean log
scores, as in \eqref{eq:dcd-objective}. Parameter inference uses the
lag-zero sandwich covariance in \eqref{eq:estimated-sandwich}. We test the three
coefficients of each block at the nominal five-percent level.
We jointly estimate the even and odd blocks and conduct a separate Wald
test of the three HAR coefficients in each block. Covariance calculations
hold the M2 estimates fixed for M3 and M4, and the M3 estimates fixed for
M5. The Fisher projection coefficients are also held fixed.

\input{tables/sec4_size}

Table~\ref{tab:sim-size} reports case 0, where every displayed test is a size
diagnostic. Under independence, rejection frequencies range from 2 to
8 percent. In the 10-percent factor design, they range from 2 to 12 percent.
M4 even rejects in 6, 12, and 5 percent of replications for $N=20,50,100$,
respectively. Dependence
does not produce uniformly severe overrejection in this design, but the
results do not justify nominal size control in every cell.
Each cell contains only 100 replications. At a true rejection probability
of 0.05, its Monte Carlo standard error is about 2.2 percentage points.

\input{tables/sec4_rejection}

Table~\ref{tab:sim-rejection} shows substantial detection of persistent
even dynamics. In case 2, M4 even rejects in 92, 100, and 100 percent of
the 10-percent-factor replications. Detection of tail order alone is
weaker: M3 rejects in 17, 30, and 46 percent in case 1. In case 4, M5 odd
rejects in 28, 34, and 68 percent. These are rejection frequencies at nominal
five-percent critical values, not size-adjusted power estimates.
The relative strengths of these components
therefore matter for finite-sample detection.

Rejection need not uniquely identify the generating component. In case 2,
M3 rejects despite static generating tail order, while in case 1 the
M4 even block can reject without a generating even update. A candidate
reference can leave another type of shape variation for its fitted block
to approximate. Such cells do not automatically diagnose the size of a
correctly specified component null. These results suggest that tail-order
and tail-intensity dynamics can be difficult to distinguish in finite samples.

\subsection{Forecast accuracy}
\label{subsec:simulation-oos}

The estimated coefficients are held fixed during the OOS period.
We evaluate the log score and one-percent quantile loss, averaged
across returns within each date, with M2 as the common benchmark.
M2 is estimated separately in each replication.

For each date $t$, the model forecasts one mixture density $\widehat g_t$
for the cross section. The daily log score is
\[
    \text{Log score}_t=\frac1N\sum_{i=1}^{N}\log\widehat g_t(Y_{t,i}).
\]
Higher values indicate better forecasts.

For a quantile level $\alpha\in(0,1)$ and the common predictive return
quantile $Q_t(\alpha)$, the quantile loss is
\[
    L_{t,i}(\alpha)
    =\bigl[\alpha-\mathbf{1}\{Y_{t,i}<Q_t(\alpha)\}\bigr]
      \bigl[Y_{t,i}-Q_t(\alpha)\bigr].
\]
Smaller values indicate better forecasts. This section uses $\alpha=0.01$,
and the empirical analysis in Section~\ref{subsec:oos-results} uses
$\alpha\in\{0.01,0.05\}$.

Within each replication, percentage improvements are
\[
\begin{aligned}
 \text{Loss improvement}
   &=100(\bar L_{\rm M2}-\bar L_m)/\bar L_{\rm M2},\\
 \text{Log-score improvement}
   &=100\{\exp(\overline{\log g_m-\log g_{\rm M2}})-1\}.
\end{aligned}
\]
We report the median percentage improvement over 100 replications.

\input{tables/sec4_oos_density_tail}

Table~\ref{tab:sim-oos-tail} reports log-score and one-percent quantile-loss
improvements for M3--M5 in cases 0--4. Median gains for M4
and M5 are positive in cases 2--4. The static design gives
essentially zero or slightly negative gains, as expected when the generating
distribution is static. Tail-order-only forecast gains are
small, and adding tail-intensity dynamics in that design does not
systematically improve on M2. Table~\ref{S-tab:sim-m5-m3-oos} of the
supplement additionally compares M5 with M3.

In case 4 with $N=100$ and the 10-percent factor, M5 reduces the
one-percent quantile loss by 0.360 percent at the median.
These results support adding tail-intensity dynamics, without implying a
uniform advantage of M5 over M4.
Detection of persistence and improvement in forecast scores are distinct
questions. The parameter tests assess shape dynamics, whereas the OOS
comparisons assess the predictive consequences of the fitted dynamics.

%% file: tables/sec4_size.tex
% Generated by build_sec4_sec5_tables.py from audited CSVs; do not edit numbers.
\begin{table}[htbp]
\centering
\caption{Empirical size of nominal five-percent Wald tests}
\label{tab:sim-size}
\fontsize{10pt}{12pt}\selectfont
\setlength{\tabcolsep}{4pt}
\begin{tabular}{lrrrrrr}
\hline
Dependence & $N$ & M3 $\eta$ & M4 even & M4 odd & M5 even & M5 odd\\
\hline
Independent & 20 & 3 & 6 & 8 & 6 & 8\\
Common 10\% & 20 & 4 & 6 & 7 & 5 & 7\\
Independent & 50 & 6 & 8 & 5 & 7 & 5\\
Common 10\% & 50 & 6 & 12 & 3 & 5 & 3\\
Independent & 100 & 3 & 5 & 8 & 2 & 8\\
Common 10\% & 100 & 3 & 5 & 4 & 2 & 5\\
\hline
\end{tabular}
\par\smallskip
\begin{minipage}{\textwidth}
\fontsize{9pt}{10.8pt}\selectfont All states are static (case 0). Entries are rejection percentages over 100 replications. All displayed tests are numerically valid. M1--M5 and the M1 Fisher geometry are estimated. The lag-zero covariance holds M2 estimates fixed for M3 and M4, and M3 estimates fixed for M5.
\end{minipage}
\end{table}

%% file: tables/sec4_rejection.tex
% Generated by build_sec4_sec5_tables.py from audited CSVs; do not edit numbers.
\begin{table}[htbp]
\centering
\caption{Rejection frequencies under persistent shape dynamics}
\label{tab:sim-rejection}
\fontsize{9pt}{10.8pt}\selectfont
\setlength{\tabcolsep}{4pt}
\begin{tabular}{llrrrrrr}
\hline
Case & Dependence & $N$ & M3 $\eta$ & M4 even & M4 odd & M5 even & M5 odd\\
\hline
1: $\eta$ only & Independent & 20 & 21 & 6 & 5 & 2 & 6\\
 & Common 10\% & 20 & 17 & 10 & 10 & 5 & 10\\
 & Independent & 50 & 48 & 13 & 4 & 9 & 4\\
 & Common 10\% & 50 & 30 & 12 & 7 & 8 & 7\\
 & Independent & 100 & 71 & 14 & 5 & 3 & 5\\
 & Common 10\% & 100 & 46 & 15 & 7 & 4 & 7\\
\hline
2: Even only & Independent & 20 & 36 & 97 & 10 & 93 & 10\\
 & Common 10\% & 20 & 26 & 92 & 8 & 90 & 8\\
 & Independent & 50 & 37 & 100 & 7 & 100 & 7\\
 & Common 10\% & 50 & 16 & 100 & 8 & 99 & 9\\
 & Independent & 100 & 65 & 100 & 6 & 100 & 6\\
 & Common 10\% & 100 & 13 & 100 & 6 & 100 & 6\\
\hline
3: $\eta$ and even & Independent & 20 & 22 & 97 & 8 & 97 & 8\\
 & Common 10\% & 20 & 26 & 97 & 8 & 92 & 7\\
 & Independent & 50 & 65 & 100 & 6 & 100 & 5\\
 & Common 10\% & 50 & 43 & 99 & 6 & 99 & 6\\
 & Independent & 100 & 93 & 100 & 5 & 100 & 5\\
 & Common 10\% & 100 & 58 & 100 & 3 & 100 & 3\\
\hline
4: $\eta$, even, and odd & Independent & 20 & 19 & 94 & 23 & 91 & 23\\
 & Common 10\% & 20 & 20 & 91 & 26 & 88 & 28\\
 & Independent & 50 & 59 & 100 & 57 & 100 & 57\\
 & Common 10\% & 50 & 46 & 98 & 34 & 97 & 34\\
 & Independent & 100 & 97 & 100 & 73 & 100 & 73\\
 & Common 10\% & 100 & 69 & 100 & 68 & 100 & 68\\
\hline
\end{tabular}
\par\smallskip
\begin{minipage}{\textwidth}
\fontsize{9pt}{10.8pt}\selectfont Percentages over 100 replications at nominal size 5\%. Columns for active components describe detection; rejections in other columns need not identify a separate true shape component. True coefficients depend on $N$ through the stated calibration. M1--M5 are estimated. Wald calculations hold M2 estimates fixed for M3 and M4, and M3 estimates fixed for M5.
\end{minipage}
\end{table}

%% file: tables/sec4_oos_density_tail.tex
% Generated by build_sec4_sec5_tables.py from audited CSVs; do not edit numbers.
\begin{table}[htbp]
\centering
\caption{Median OOS density and tail improvements relative to M2}
\label{tab:sim-oos-tail}
\fontsize{9pt}{10.8pt}\selectfont
\setlength{\tabcolsep}{4pt}
\begin{tabular}{llrrrrrrr}
\hline
\multicolumn{3}{l}{Criterion} & \multicolumn{3}{c}{Log score} & \multicolumn{3}{c}{1\% quantile loss}\\
\cline{4-6}\cline{7-9}
Case & Dependence & $N$ & M3 & M4 & M5 & M3 & M4 & M5\\
\hline
0: Static & Independent & 20 & -0.001 & -0.008 & -0.011 & -0.001 & -0.079 & -0.078\\
 & Common 10\% & 20 & -0.003 & -0.011 & -0.015 & +0.001 & -0.092 & -0.092\\
 & Independent & 50 & -0.002 & -0.004 & -0.006 & +0.001 & -0.033 & -0.035\\
 & Common 10\% & 50 & -0.002 & -0.006 & -0.007 & +0.002 & -0.053 & -0.049\\
 & Independent & 100 & -0.000 & -0.001 & -0.002 & -0.000 & -0.015 & -0.016\\
 & Common 10\% & 100 & -0.001 & -0.004 & -0.005 & -0.000 & -0.034 & -0.032\\
\hline
1: $\eta$ only & Independent & 20 & -0.001 & -0.007 & -0.007 & -0.001 & -0.014 & -0.012\\
 & Common 10\% & 20 & -0.002 & -0.009 & -0.011 & -0.004 & -0.064 & -0.071\\
 & Independent & 50 & +0.001 & -0.004 & -0.003 & +0.000 & -0.038 & -0.033\\
 & Common 10\% & 50 & +0.001 & -0.005 & -0.005 & -0.003 & -0.045 & -0.042\\
 & Independent & 100 & +0.002 & -0.001 & +0.001 & +0.001 & -0.011 & -0.011\\
 & Common 10\% & 100 & +0.002 & -0.003 & -0.003 & -0.001 & -0.041 & -0.038\\
\hline
2: Even only & Independent & 20 & +0.000 & +0.033 & +0.030 & +0.032 & +0.296 & +0.286\\
 & Common 10\% & 20 & +0.000 & +0.035 & +0.028 & +0.034 & +0.356 & +0.396\\
 & Independent & 50 & +0.002 & +0.034 & +0.030 & +0.023 & +0.312 & +0.309\\
 & Common 10\% & 50 & +0.000 & +0.033 & +0.028 & +0.024 & +0.365 & +0.367\\
 & Independent & 100 & +0.001 & +0.027 & +0.025 & +0.020 & +0.301 & +0.298\\
 & Common 10\% & 100 & +0.000 & +0.031 & +0.030 & +0.017 & +0.319 & +0.313\\
\hline
3: $\eta$ and even & Independent & 20 & +0.001 & +0.033 & +0.031 & +0.019 & +0.312 & +0.333\\
 & Common 10\% & 20 & +0.000 & +0.027 & +0.026 & +0.019 & +0.111 & +0.120\\
 & Independent & 50 & +0.002 & +0.026 & +0.026 & +0.014 & +0.255 & +0.254\\
 & Common 10\% & 50 & +0.001 & +0.031 & +0.029 & +0.014 & +0.298 & +0.310\\
 & Independent & 100 & +0.004 & +0.028 & +0.030 & +0.013 & +0.255 & +0.256\\
 & Common 10\% & 100 & +0.003 & +0.038 & +0.040 & +0.012 & +0.358 & +0.368\\
\hline
4: $\eta$, even, and odd & Independent & 20 & -0.001 & +0.037 & +0.036 & +0.010 & +0.293 & +0.272\\
 & Common 10\% & 20 & +0.001 & +0.031 & +0.027 & +0.009 & +0.378 & +0.382\\
 & Independent & 50 & +0.003 & +0.033 & +0.033 & +0.014 & +0.329 & +0.335\\
 & Common 10\% & 50 & +0.001 & +0.043 & +0.045 & +0.012 & +0.337 & +0.347\\
 & Independent & 100 & +0.003 & +0.036 & +0.035 & +0.013 & +0.298 & +0.294\\
 & Common 10\% & 100 & +0.003 & +0.048 & +0.055 & +0.005 & +0.353 & +0.360\\
\hline
\end{tabular}
\par\smallskip
\begin{minipage}{\textwidth}
\fontsize{9pt}{10.8pt}\selectfont Median percentage improvements over 100 replications. Within each replication, log-score improvements are $100\{\exp(\overline{\Delta\log g})-1\}$ and quantile-loss improvements are percentage reductions relative to M2. All models, including the M2 benchmark, are estimated.
\end{minipage}
\end{table}

%% file: sec5.tex
\section{Empirical evidence}
\label{sec:empirical}

This section studies conditional density dynamics in 19 equity cross
sections: the S\&P 500 and Nikkei 225 constituent panels, 11 US sectors,
and six Japanese sectors. We use the symmetric-$t$ M1--M5 specification
and compare raw returns with residuals obtained by removing a first
principal component estimated once from the training sample and then
held fixed separately within each panel. The
broad-market panels contain constituent returns, not single index returns.
``Raw'' and ``PC1 residual'' use the same observed constituent--dates,
restricted to constituents with estimable training PC1 loadings.
We use $w_{t,i}=1/N_t$ throughout, so the target is the equally weighted
conditional mixture $\bar g_t=N_t^{-1}\sum_i g_{t,i}$, a special case of
the weighted mixture defined in Section~\ref{subsec:description}.
Constituent distributions may differ,
and returns within a date may be dependent.

\subsection{Data and sample construction}
\label{subsec:data-sample}

The common core sample begins on July 5, 2016, except for the S\&P 500
Real Estate sector, which begins on September 1, 2016. Estimation ends
on December 29, 2023. Strict one-step-ahead OOS evaluation runs through
July 24, 2026, beginning on January 2, 2024 for the US panels and
January 4, 2024 for the Japanese panels. There are 642 US and 624 Japanese
OOS trading dates. Table~\ref{S-tab:empirical-samples} in the supplement
reports panel sizes and the number of training dates.

Daily prices are obtained from
\href{https://finance.yahoo.com/}{Yahoo Finance}. We use
adjusted closing prices when available and otherwise reported closing
prices, and calculate returns as daily log differences. Raw returns are
not normalized by realized means or variances.

Membership is reconstructed point in time from the community-maintained
\href{https://github.com/fja05680/sp500}{\texttt{fja05680/sp500}}
S\&P 500 histories and the official Nikkei
\href{https://indexes.nikkei.co.jp/nkave/archives/file/nikkei_stock_average_weight_en.csv}{constituent list}
and
\href{https://indexes.nikkei.co.jp/nkave/archives/file/history_of_nikkei_stock_average_component_changes_en.pdf}{component-change history}.
Sector panels preserve
these histories. Source and classification details are given in
Section~\ref{S-supp:data-construction} of the supplement. Price data are
unavailable for some historical constituents, including delisted stocks.
On each date, we use only constituents with available returns.
Section~\ref{sec:theory} assumes that
inclusion and $N_t$ are determined before the current returns are observed.
Membership reconstruction alone does not establish this condition for
missing observations.

Median cross-sectional sizes range from 18 to 70 among sectors, compared
with 455 and 218 in the broad-market panels. Some sector samples contain
fewer than 20 observations on individual dates.

\subsection{Symmetric reference, estimation, and inference}
\label{subsec:empirical-specification}

M1 estimates a static symmetric standardized Student-$t$ reference, with
location, scale, and tail-order parameter $\eta$ free.
M2 fixes the fitted M1 tail order and estimates six HAR$(1,5,22)$
coefficients for location and log variance. M3 holds the M2 estimates of
$m_t$ and $\sigma_t$ fixed for each date and estimates three HAR
coefficients for latent tail order $\xi_\eta$. No parametric skewness
dynamics are estimated.

M4 uses the same M2 estimates and estimates six even/odd DCD-SM
coefficients. M5 holds the M3 estimates of $m_t$, $\sigma_t$, and $\eta_t$
fixed for each date and estimates six DCD-SM coefficients. The Fisher
projection coefficients are computed once at $q=0$ and the fitted M1
reference values, and held fixed. M4 computes its PIT observations and
reference scores using the M2 estimates for each date, and M5 uses the M3
estimates. The spline scores are orthogonalized with respect to the
location and log-variance scores, while they are not orthogonalized with respect
to the $\eta$ score.
Although the references are symmetric, the odd spline correction allows
the final densities to be asymmetric.

Every estimation stage uses the same equal-date criterion,
\begin{equation}
    L_T(\theta)=\frac1T\sum_{t=1}^T\frac1{N_t}
                       \sum_{i=1}^{N_t}\log\bar g_t(r_{t,i}\mid\theta).
    \label{eq:empirical-daily-mean}
\end{equation}
Each date receives equal weight as the number of covered constituents
changes over time. HAR windows contain exactly 1, 5, and 22 previous dates, with
unobserved presample scores initialized to zero.

In-sample inference uses fitted-alternative conditional sandwich-Wald
tests with lag zero and the covariance in \eqref{eq:estimated-sandwich}.
The fitted reference distribution and Fisher projection coefficients are
held fixed in each covariance calculation. For residual panels, the fitted
PCA means and loadings are also held fixed. Their estimation uncertainty
is not added to the parameter covariance.
In the tables, ``$\eta$'' denotes the latent tail-order block
$\xi_\eta$, whose HAR coefficients are defined in \eqref{eq:t-latent-har}.
Each of location, scale, $\eta$, even, and odd has a three-coefficient
$\chi^2_3$ Wald test. The M2 location test sets the three location HAR
coefficients to zero, retaining fitted log-variance dynamics. The scale
test sets the three log-variance HAR coefficients to zero, retaining
fitted location dynamics.
The M4/M5 joint null has six coefficients. Detection
counts use the nominal five-percent level without adjustment for multiple
tests.

\subsection{Persistence in the raw cross sections}
\label{subsec:market-wide-results}

Table~\ref{tab:market-wald} gives broad-market $p$-values for raw returns
and PC1 residuals. In the S\&P 500 raw panel, the M3 $\eta$ test has
$p=5.06\times10^{-5}$, while M4 and M5 even have
$p=1.38\times10^{-9}$ and $3.46\times10^{-19}$.
For the Nikkei 225, the corresponding $p$-values are
$3.83\times10^{-4}$, $2.81\times10^{-4}$, and
$4.20\times10^{-4}$. Odd persistence is detected in the
S\&P 500 panel, with $p=0.0201$ for M4 and $p=0.0456$ for M5,
but not in the Nikkei 225 raw panel.

\input{tables/sec5_market_wald}
\input{tables/sec5_wald_counts}

Across the 19 raw panels, M2 detects location dynamics in four and scale
dynamics in all 19. Among the 17 sectors, the corresponding counts are
three and 17. M3 detects $\eta$ dynamics in 15 panels, including 13 sectors,
whereas M4 and M5 detect even dynamics in all 19 panels.
Odd detections are less frequent:
five panels for M4 and three for M5, including four and two sectors.
The evidence is consequently strongest for changes in both tail intensities
together. The broad-market and sector samples overlap. These counts are
descriptive summaries, not independent replications of the same test.

The $\eta$ and even results are related descriptions of changing tails,
not separate economic mechanisms established by the tests.
The M5 even results support adding tail-intensity dynamics to the fitted
M3 reference.

\subsection{Removing a within-panel principal component}
\label{subsec:pca-robustness}

Our main adjustment removes one common linear component separately in each
of the 19 panels. We estimate a covariance PCA once using all training
dates through December 29, 2023, and hold its means and leading loading
vector fixed throughout training and OOS evaluation. Each constituent is
demeaned by its observed training mean, without volatility standardization.
For PCA only, missing training returns are replaced by that mean, or
equivalently by zero after demeaning. Constituents without nonconstant
observed training returns are excluded. No imputed return enters the
density estimation or forecast evaluation. This training-sample
preprocessing is retrospective within the training period. It uses no OOS
information and is not reestimated during OOS evaluation.

Let $\widehat\mu$ and $\widehat b$ denote the fixed means and loadings,
and let $\mathcal O_t$ identify eligible constituents observed at date $t$.
For their return vector $y_t$, define
\[
    \widehat F_t=
       \frac{\widehat b_{\mathcal O_t}'(y_t-\widehat\mu_{\mathcal O_t})}
            {\widehat b_{\mathcal O_t}'\widehat b_{\mathcal O_t}},\qquad
    e_t=y_t-\widehat\mu_{\mathcal O_t}
            -\widehat b_{\mathcal O_t}\widehat F_t.
\]
The subscript $\mathcal O_t$ selects entries of the fixed vectors.
The residual removes both the training
mean and the projected PC1 component. We fit all five models to raw returns
and residuals on the same constituent--dates. The two targets give 38
panels and 190 fitted models. There is no initial rolling-window exclusion.

The broad-market mean pairwise correlations decline from 0.356 to 0.009
for the S\&P 500 and from 0.361 to 0.002 for the Nikkei 225. The leading
eigenvalues account for 36.5 and 37.5 percent of their respective training
covariance traces. These PCA shares are
not the conditional explained-return variance fractions calibrated in
Section~\ref{subsec:simulation-dependence}. Construction details are given
in Section~\ref{S-supp:pca-details} of the supplement.

M2 detects location dynamics in three residual panels, including two
sectors, and scale dynamics in all 19 residual panels.
Tail-intensity dynamics remain after this removal. Both M4 and M5 detect
even persistence in all 19 residual panels, as they do in the raw
panels. M3 $\eta$ is detected in 17 of 19 residual panels.
Odd persistence is detected in five residual panels for M4 and in five
for M5. Both models detect odd persistence in the S\&P 500 and Nikkei 225
residual panels. For the US
Communication Services residuals, the fitted M1 latent tail-order
parameter reaches its lower search bound, giving $\eta=2.0506$.
The M3, M4, and M5 persistence tests for this panel are numerically valid.

\subsection{Strict out-of-sample forecasting}
\label{subsec:oos-results}

Having documented shape dynamics through the proposed conditional tests,
we next examine their consequences for OOS forecast accuracy. These
comparisons address the predictive gains from fitting the dynamics,
separately from testing whether the shape-dynamics coefficients are zero.
We compare M4 with M2 and M5 with M3 separately for raw returns and PC1
residuals. These comparisons assess the incremental contribution of
tail-intensity dynamics relative to their respective references. The M5--M3 comparison
matches the simulation comparison in
Table~\ref{S-tab:sim-m5-m3-oos} of the supplement.

All fitted coefficients, Fisher geometry, and PCA preprocessing are held fixed
at the training estimates. After observing returns on date $t-1$, we
update the dynamic parameters to construct the density forecast for date
$t$. We evaluate the log score and one- and five-percent quantile losses,
using the definitions in Section~\ref{subsec:simulation-oos}.
Higher log scores and lower losses indicate better forecasts.
We average the daily log scores and quantile losses over the OOS period.

\input{tables/sec5_oos_counts}
\input{tables/sec5_market_oos}

Table~\ref{tab:oos-pairwise} counts positive panel-level improvements and,
in parentheses, positive differences with two-sided forecast-comparison
$p<0.05$. These forecast-difference standard errors retain HAC(22)
\citep{NeweyWest1987}. They are separate from the lag-zero parameter Wald
covariances.

For raw returns, M4 improves continuous log score relative to M2 in
11 of 19 panels, while M5 improves on M3 in 14. Their one-percent
quantile-loss improvement counts are 14 and 15, and their five-percent
counts are eight and thirteen.

For the training-fixed PC1 residuals, M4 and M5 improve log score relative
to their respective references in 13 and 14 panels, with four significant
positive differences each. One-percent quantile loss improves in 15 and
16 residual panels, with three significant positive differences each.
Five-percent quantile loss improves in ten and eleven panels, without
significant positive differences.

Broad-market magnitudes in Table~\ref{tab:oos-market} illustrate these
differences. In raw Nikkei 225 returns, M4 and M5 improve log score by
0.263 and 0.300 percent relative to M2 and M3 and reduce one-percent quantile
loss by approximately 6.98 and 7.03 percent.
For S\&P 500 PC1 residuals, strong parameter-test detections coexist with
small negative log-score differences for both
corrections. The M5 one-percent quantile-loss reduction is only about
0.009 percent. For Nikkei 225 residuals, the corresponding M4 and M5 loss
reductions are 1.53 and 1.81 percent.
In-sample persistence detection does not ensure a material improvement in
each OOS score.

Removing a common component also removes any predictable variation it
contains and changes the forecast target. We therefore do not require
the incremental gains to increase after residualization. The relevant
finding is that additional tail-intensity dynamics and some forecast gains
remain after the main contemporaneous linear component has been removed.

\subsection{Empirical interpretation}
\label{subsec:empirical-interpretation}

The empirical evidence supports persistent tail-intensity dynamics in
equity cross-sectional mixture densities. The even component describes
changes in left and right tail intensities in the same direction relative
to the fitted reference. Asymmetry in a return distribution does not by
itself imply persistent changes in its asymmetry. In both M4 and M5,
persistence in the imbalance between left and right tail intensities is
detected in only a minority of panels, whereas same-direction tail-intensity
dynamics are detected in all 19 raw panels and all 19 PC1-residual panels.
In M5, the even component remains significant after fitting
location, scale, and tail-order dynamics. These results support adding
tail-intensity dynamics to the reference model, including after the main
within-panel linear common component has been removed.

The OOS results show that fitting these dynamics can also improve density
and lower-tail quantile forecasts. Relative to M2 and M3, respectively,
M4 and M5 improve log score and one-percent quantile loss in a majority
of both raw and PC1-residual panels. The size of these gains varies across
panels, and statistically significant gains occur less frequently than
positive point estimates. Five-percent quantile-loss improvements are less
consistent. The predictive contribution of tail-intensity dynamics is
therefore most evident in the log score and the lower one-percent quantile
loss, with gains remaining in many panels after PC1 removal.

Taken together, the persistence tests and forecast comparisons suggest that
the detected tail-intensity dynamics have predictive value. Persistence is
detected in all panels, and including these dynamics improves log score
and one-percent quantile loss in most panels, both before and after PC1
removal. The detection of persistent tail variation is thus accompanied
by a tendency toward better OOS forecasts. The size of the forecast gains
depends on the panel and the criterion, as illustrated by the small gains
for the S\&P 500 residuals despite strong persistence detections.

%% file: tables/sec5_market_wald.tex
% Generated by build_sec4_sec5_tables.py from audited CSVs; do not edit numbers.
\begin{table}[htbp]
\centering
\caption{Broad-market conditional sandwich-Wald $p$-values}
\label{tab:market-wald}
\fontsize{10pt}{12pt}\selectfont
\setlength{\tabcolsep}{4pt}
\begin{tabular}{lrrrrrr}
\hline
Model & Location & Scale & $\eta$ & Even & Odd & Joint\\
\hline
\multicolumn{7}{l}{Raw; S\&P 500}\\
M2 & 0.0095 & $6.91\times10^{-49}$ & -- & -- & -- & --\\
M3 & -- & -- & $5.06\times10^{-5}$ & -- & -- & --\\
M4 & -- & -- & -- & $1.38\times10^{-9}$ & 0.0201 & $1.34\times10^{-14}$\\
M5 & -- & -- & -- & $3.46\times10^{-19}$ & 0.0456 & $2.99\times10^{-21}$\\
\hline
\multicolumn{7}{l}{Raw; Nikkei 225}\\
M2 & 0.6742 & $1.33\times10^{-20}$ & -- & -- & -- & --\\
M3 & -- & -- & $3.83\times10^{-4}$ & -- & -- & --\\
M4 & -- & -- & -- & $2.81\times10^{-4}$ & 0.4576 & $3.74\times10^{-5}$\\
M5 & -- & -- & -- & $4.20\times10^{-4}$ & 0.3712 & $1.07\times10^{-4}$\\
\hline
\multicolumn{7}{l}{PC1 residual; S\&P 500}\\
M2 & 0.3554 & $3.73\times10^{-81}$ & -- & -- & -- & --\\
M3 & -- & -- & $8.56\times10^{-12}$ & -- & -- & --\\
M4 & -- & -- & -- & $4.55\times10^{-7}$ & $2.37\times10^{-4}$ & $1.69\times10^{-8}$\\
M5 & -- & -- & -- & $6.33\times10^{-6}$ & $4.64\times10^{-4}$ & $4.54\times10^{-7}$\\
\hline
\multicolumn{7}{l}{PC1 residual; Nikkei 225}\\
M2 & 0.0011 & $2.54\times10^{-100}$ & -- & -- & -- & --\\
M3 & -- & -- & $7.20\times10^{-14}$ & -- & -- & --\\
M4 & -- & -- & -- & $3.51\times10^{-106}$ & $2.12\times10^{-5}$ & $5.50\times10^{-117}$\\
M5 & -- & -- & -- & $8.59\times10^{-104}$ & $5.04\times10^{-5}$ & $3.76\times10^{-113}$\\
\hline
\end{tabular}
\par\smallskip
\begin{minipage}{\textwidth}
\fontsize{9pt}{10.8pt}\selectfont The reference is symmetric Student-$t$. Each M2 test sets the three HAR coefficients for the indicated component to zero, retaining fitted dynamics of the other component. M3 has one three-coefficient $\eta$ block. M4 and M5 have even and odd three-coefficient blocks and a six-coefficient joint null. All parameter covariances use lag zero and hold M1 estimates fixed for M2, M2 estimates fixed for M3 and M4, and M3 estimates fixed for M5.
\end{minipage}
\end{table}

%% file: tables/sec5_wald_counts.tex
% Generated by build_sec4_sec5_tables.py from audited CSVs; do not edit numbers.
\begin{table}[htbp]
\centering
\caption{Persistence rejections at the five-percent level}
\label{tab:wald-counts}
\fontsize{10pt}{12pt}\selectfont
\setlength{\tabcolsep}{4pt}
\begin{tabular}{llrrrrrrr}
\hline
\rule{0pt}{30pt}Target & Universe & \shortstack{M2\\location} & \shortstack{M2\\scale} & \shortstack{M3\\$\eta$} & \shortstack{M4\\even} & \shortstack{M4\\odd} & \shortstack{M5\\even} & \shortstack{M5\\odd}\\
\hline
Raw & All panels & 4/19 & 19/19 & 15/19 & 19/19 & 5/19 & 19/19 & 3/19\\
Raw & Sectors & 3/17 & 17/17 & 13/17 & 17/17 & 4/17 & 17/17 & 2/17\\
PC1 residual & All panels & 3/19 & 19/19 & 17/19 & 19/19 & 5/19 & 19/19 & 5/19\\
PC1 residual & Sectors & 2/17 & 17/17 & 15/17 & 17/17 & 3/17 & 17/17 & 3/17\\
\hline
\end{tabular}
\par\smallskip
\begin{minipage}{\textwidth}
\fontsize{9pt}{10.8pt}\selectfont Raw and PC1 residual use the same constituent--dates. Counts are descriptive and unadjusted for multiplicity. All displayed tests are numerically valid. Numerical validity of a Wald statistic does not establish exact finite-sample size.
\end{minipage}
\end{table}

%% file: tables/sec5_oos_counts.tex
% Generated by build_sec4_sec5_tables.py from audited CSVs; do not edit numbers.
\begin{table}[htbp]
\centering
\caption{Panels with positive OOS forecast improvements}
\label{tab:oos-pairwise}
\fontsize{10pt}{12pt}\selectfont
\setlength{\tabcolsep}{4pt}
\begin{tabular}{llrrr}
\hline
Target & Comparison & Log score & 1\% loss & 5\% loss\\
\hline
Raw & M4--M2 & 11 (2) & 14 (0) & 8 (0)\\
Raw & M5--M3 & 14 (4) & 15 (2) & 13 (0)\\
Raw & M5--M4 & 18 (1) & 14 (0) & 17 (1)\\
PC1 residual & M4--M2 & 13 (4) & 15 (3) & 10 (0)\\
PC1 residual & M5--M3 & 14 (4) & 16 (3) & 11 (0)\\
PC1 residual & M5--M4 & 14 (5) & 13 (1) & 17 (4)\\
\hline
\end{tabular}
\par\smallskip
\begin{minipage}{\textwidth}
\fontsize{9pt}{10.8pt}\selectfont Each entry is the number of positive point-estimate improvements out of 19 panels, followed by the number with a two-sided HAC(22) forecast-difference $p$-value below 0.05 in parentheses. The HAC correction here concerns forecast differences, not the lag-zero parameter Wald tests.
\end{minipage}
\end{table}

%% file: tables/sec5_market_oos.tex
% Generated by build_sec4_sec5_tables.py from audited CSVs; do not edit numbers.
\begin{table}[htbp]
\centering
\caption{Broad-market OOS improvements within each target}
\label{tab:oos-market}
\fontsize{10pt}{12pt}\selectfont
\setlength{\tabcolsep}{4pt}
\begin{tabular}{lllrr}
\hline
Target & Market & Comparison & Log score (\%) & 1\% loss (\%)\\
\hline
Raw & S\&P 500 & M4--M2 & +0.292 & -0.169\\
Raw & S\&P 500 & M5--M3 & +0.220 & +0.837\\
Raw & Nikkei 225 & M4--M2 & +0.263 & +6.981\\
Raw & Nikkei 225 & M5--M3 & +0.300 & +7.031\\
PC1 residual & S\&P 500 & M4--M2 & -0.005 & -0.065\\
PC1 residual & S\&P 500 & M5--M3 & -0.018 & +0.009\\
PC1 residual & Nikkei 225 & M4--M2 & +0.047 & +1.527\\
PC1 residual & Nikkei 225 & M5--M3 & +0.101 & +1.810\\
\hline
\end{tabular}
\par\smallskip
\begin{minipage}{\textwidth}
\fontsize{9pt}{10.8pt}\selectfont Log-score improvements are $100\{\exp(\overline{\Delta\log g})-1\}$, using the proposed model minus the benchmark. Loss improvements are percentage reductions relative to the indicated benchmark. Positive values improve forecasting.
\end{minipage}
\end{table}

%% file: sec6.tex
\section{Conclusion}
\label{sec:conclusion}

This paper has developed dynamic conditional densities with spline mixtures
(DCD-SM) for a conditional mixture density chosen by the analyst. The model
is fitted directly to cross-sectional returns using a weighted log score.
Individual conditional densities may differ, and dependence within each
date need not be specified. The spline correction allows left and right
tail intensities to change while preserving the reference tail order.
With the reference distribution and Fisher projection matrices known and
the spline dimension fixed, we establish consistency and asymptotic
normality of the dynamic coefficient estimators.

Simulations evaluate the finite-sample size and power of the Wald tests
and the accuracy of out-of-sample forecasts under independent and dependent
cross-sectional sampling. In US and Japanese equity cross sections, we
detect persistent changes that affect left and right tail intensities in
the same direction, even after fitting dynamic location, scale, and tail
order and removing the leading principal component. Adding tail-intensity
dynamics improves log scores and one-percent lower-tail quantile losses
in most panels, although the magnitude and statistical significance of
these gains vary.

The same spline-mixture representation can also be used in density-valued
VAR models fitted to previously estimated densities. Those models treat
the estimated densities as observations and place stochastic innovations
in their dynamics. DCD-SM instead fits the chosen conditional mixture
density directly to returns and updates its parameters from past
observations through a score-driven recursion, without separate
innovations in the density dynamics. Introducing such innovations is a
natural extension. This would make the density parameters latent and
require a model linking them to cross-sectional returns, together with
a filtering method, potentially using particle filtering.

%% file: backmatter.tex
\section*{Funding}
This work was supported by the Japan Society for the Promotion of Science
KAKENHI [grant number 26H01952].

\section*{Acknowledgements}
\input{ai_disclosure}

\section*{Data availability}
The data sources and sample construction are described in
Section~\ref{subsec:data-sample} and
Section~\ref{S-supp:data-construction} of the supplement.
The simulation code, empirical estimation
code, table-generation code, and aggregated numerical results are available
in the companion repository at
\url{https://github.com/matsuda318/DCD-SM}. The repository records
sample construction and input hashes; the simulations use synthetic returns.
Empirical input files and intermediate daily paths are retained locally
and are not included in the public repository.

\section*{Conflict of interest}
The authors declare no conflicts of interest.

%% file: ai_disclosure.tex
The first author used OpenAI GPT-6.1 Sol to assist with manuscript revision and
code development. The first author directed these tasks and retained responsibility
for checking the generated text and code and for the final manuscript and results.

%% file: arxiv_supplement.tex
% Generated by build_arxiv.ps1; edit the shared supp_*.tex sources.

% Source: supp_fisher.tex
\section{Fisher projections at the symmetric \texorpdfstring{Student-$t$}{Student-t} anchor}
\label{S-supp:fisher}

Let $(m_0,\sigma_0,\eta_0)$ be the fixed M1 reference parameters, with
$\sigma_0>0$ and $\eta_0>2$, and let $v_0=\log\sigma_0^2$. At $q_0=0$, the
correction density is one, so the corrected return density equals the
variance-standardized Student-$t$ reference in
\eqref{eq:t-reference-density}. Write $E_0$ for expectation under this
reference. The Fisher blocks below are expectations of single-observation
score products. They define the fixed projection applied to the weighted
date-average scores in the main paper. The formulas are evaluated
separately at each raw-return or residual M1 estimate.

\subsection{Reference Fisher blocks and the M3 projection}

For $z=(r-m_0)/\sigma_0$, the location and log-variance scores at the anchor are
\begin{equation}
\begin{aligned}
    \rho_m(z)
    &=\frac{(\eta_0+1)z}{\sigma_0(\eta_0-2+z^2)},\\
    \rho_v(z)
    &=-\frac12+\frac{(\eta_0+1)z^2}{2(\eta_0-2+z^2)}.
\end{aligned}
\label{S-eq:anchor-ms-scores}
\end{equation}
The location score is odd in $z$, whereas the log-variance and
tail-order scores are even. Hence $I_{m,v,0}=I_{\eta,m,0}=0$.
The remaining entries give
\begin{equation}
    I_{ms,ms,0}
    =\begin{pmatrix}
       \dfrac{\eta_0(\eta_0+1)}
       {\sigma_0^2(\eta_0-2)(\eta_0+3)} & 0\\[6pt]
       0 & \dfrac{\eta_0}{2(\eta_0+3)}
     \end{pmatrix}.
\label{S-eq:closed-ms-information}
\end{equation}
With $\dot\eta_0$ as defined in the main paper, the latent tail-order
cross-information is
\begin{equation}
    I_{\xi_\eta,ms,0}
    =\left(
       0,\;
       \frac{3\dot\eta_0}
       {(\eta_0-2)(\eta_0+1)(\eta_0+3)}
      \right).
\label{S-eq:closed-eta-ms-information}
\end{equation}
Multiplying \eqref{S-eq:closed-eta-ms-information} by the inverse of
\eqref{S-eq:closed-ms-information} gives
\eqref{eq:t-explicit-projection}. These blocks and the resulting M3
projection coefficient can therefore be computed without numerical
integration.

To derive these expressions, let $Z$ have the standardized anchor
distribution and set
\[
    X=\frac{Z^2}{\eta_0-2+Z^2}.
\]
Then $X\sim\operatorname{Beta}(1/2,\eta_0/2)$, with
\[
    E_0X=\frac1{\eta_0+1},\qquad
    \operatorname{Var}_0(X)
    =\frac{2\eta_0}{(\eta_0+1)^2(\eta_0+3)}.
\]
Also,
\[
    E_0\{X(1-X)\}
    =\frac{\eta_0}{(\eta_0+1)(\eta_0+3)}.
\]
Substitution into $E_0\rho_m(Z)^2$ and $E_0\rho_v(Z)^2$ gives
\eqref{S-eq:closed-ms-information}. For the cross-information, differentiate
the log-variance score with respect to $\eta$ while holding $r,m_0,\sigma_0$
fixed. The information identity yields
\begin{align*}
    I_{\eta,v,0}
    &=-E_0\left[
       \left.\frac{\partial\rho_v(Z;\eta)}{\partial\eta}
       \right|_{\eta=\eta_0}\right]\\
    &=-\frac1{2(\eta_0+1)}
       +\frac{\eta_0}{2(\eta_0-2)(\eta_0+3)}\\
    &=\frac{3}{(\eta_0-2)(\eta_0+1)(\eta_0+3)}.
\end{align*}
The chain rule $\rho_{\xi_\eta}=\dot\eta_0\rho_\eta$ then gives
\eqref{S-eq:closed-eta-ms-information}.

\subsection{Spline cross-information and the M4--M5 projections}

Let $B(u)=(B_1(u),\ldots,B_J(u))'$ and let $m_j=\int_0^1 B_j(u)\,du$
denote the spline masses, distinct from the scalar location $m_0$.
At $q_0=0$, the mixture weights equal the mass vector, and the
responsibilities equal $B(u)$. The single-observation spline score is thus
\[
    \psi_q(u,0)=Q\{B(u)-(m_1,\ldots,m_J)'\}.
\]
Let $z_0(u)$ be the quantile function of the mean-zero, unit-variance
Student-$t$ distribution with $\eta_0$ degrees of freedom. Since the
reference PIT is uniform, the spline cross-information is
\begin{equation}
    I_{q,ms,0}
    =\int_0^1\psi_q(u,0)
       \begin{pmatrix}\rho_m\{z_0(u)\}&\rho_v\{z_0(u)\}\end{pmatrix}
       \,du.
\label{S-eq:spline-ms-cross-information}
\end{equation}
This is an explicit one-dimensional integral. For general spline knots,
its entries are evaluated numerically because they involve the
Student-$t$ quantile function together with the spline basis.

For a symmetric knot sequence, $B_j(1-u)=B_{J+1-j}(u)$ and
$m_j=m_{J+1-j}$. The even and odd scores at the anchor simplify to
\begin{equation}
\begin{aligned}
    e_{\rm even}'\psi_q(u,0)
    &=B_1(u)+B_J(u)-m_1-m_J,\\
    e_{\rm odd}'\psi_q(u,0)
    &=B_J(u)-B_1(u).
\end{aligned}
\label{S-eq:anchor-endpoint-scores}
\end{equation}
The first is even about $u=1/2$ and the second is odd. Also,
$z_0(1-u)=-z_0(u)$, so
\[
    e_{\rm even}'I_{q,m,0}=0,\qquad
    e_{\rm odd}'I_{q,v,0}=0.
\]
The two remaining cross-information entries are
\begin{align}
    e_{\rm even}'I_{q,v,0}
    &=\int_0^1\{B_1(u)+B_J(u)-m_1-m_J\}
       \rho_v\{z_0(u)\}\,du,\label{S-eq:even-v-cross-information}\\
    e_{\rm odd}'I_{q,m,0}
    &=\int_0^1\{B_J(u)-B_1(u)\}
       \rho_m\{z_0(u)\}\,du.\label{S-eq:odd-m-cross-information}
\end{align}
Combining these entries with \eqref{S-eq:closed-ms-information} gives the
coefficients in \eqref{eq:endpoint-projection-coefficients} and the
scalar updates in \eqref{eq:explicit-endpoint-projection}.
M4 and M5 use the same projection coefficients, computed once at $q=0$
and the fixed M1 reference values. Their PIT observations and reference
scores are computed at each date using the M2 and M3 reference
distributions, respectively.
The symmetry cancellations concern the fixed anchor and do not require
the corrected density at every date to be symmetric.

In the saved experiments, both the reference Fisher blocks and the spline
cross-information are computed by Gauss--Legendre quadrature with 128 nodes
per PIT subinterval. The reference calculation subdivides the probability
interval near the tails, and the spline calculation uses the knot spans.
The closed forms above identify the same reference quantities and provide
analytic checks.

% Source: supp_endpoints.tex
\section{Proof of the tail-intensity limit}
\label{S-supp:endpoints}

This section proves Proposition~\ref{thm:endpoint-intensity}. We first
derive the formulas for fixed $J$, then let the supports of the first and
last B-splines shrink as $J$ increases. The amplitude sequences are held fixed in this
calculation. The estimation theory in Section~\ref{sec:theory} uses a
fixed $J$.

\subsection{Formulas for fixed J}

Let $B_{1,J},\ldots,B_{J,J}$ be a clamped B-spline basis of fixed degree on
$[0,1]$, and let
\[
 m_{j,J}=\int_0^1 B_{j,J}(u)\,du.
\]
Write $e_j$ for the $j$th coordinate vector in $\mathbb R^J$. With the
Helmert matrix $Q_J$, define
\[
 e_{{\rm even},J}=Q_J(e_1+e_J),\qquad
 e_{{\rm odd},J}=Q_J(e_J-e_1).
\]
For given amplitudes, write
\begin{equation}
 q_{t,J}=a_{{\rm even},t}e_{{\rm even},J}
          +a_{{\rm odd},t}e_{{\rm odd},J},
 \label{S-eq:endpoint-state-J}
\end{equation}
and set
\[
 \alpha_{L,t}=a_{{\rm even},t}-a_{{\rm odd},t},\qquad
 \alpha_{R,t}=a_{{\rm even},t}+a_{{\rm odd},t}.
\]

\begin{lemma}
\label{S-thm:endpoint-intensity}
For each $J$, the mixture weights are
\begin{equation}
\begin{aligned}
 p_{1,t,J}&=\frac{m_{1,J}e^{\alpha_{L,t}}}{Z_{t,J}},&
 p_{J,t,J}&=\frac{m_{J,J}e^{\alpha_{R,t}}}{Z_{t,J}},\\
 p_{j,t,J}&=\frac{m_{j,J}}{Z_{t,J}},&
 &j=2,\ldots,J-1,
\end{aligned}
 \label{S-eq:endpoint-weights-J}
\end{equation}
where
\begin{equation}
 Z_{t,J}=1+m_{1,J}(e^{\alpha_{L,t}}-1)
           +m_{J,J}(e^{\alpha_{R,t}}-1).
 \label{S-eq:endpoint-Z-J}
\end{equation}
The correction density is
\begin{equation}
 h_J(u\mid q_{t,J})
  =\frac{1+(e^{\alpha_{L,t}}-1)B_{1,J}(u)
            +(e^{\alpha_{R,t}}-1)B_{J,J}(u)}{Z_{t,J}},
 \label{S-eq:endpoint-density-form}
\end{equation}
and its values at $u=0$ and $u=1$ are
\begin{equation}
 h_J(0\mid q_{t,J})=\frac{e^{\alpha_{L,t}}}{Z_{t,J}},\qquad
 h_J(1\mid q_{t,J})=\frac{e^{\alpha_{R,t}}}{Z_{t,J}}.
 \label{S-eq:endpoint-density-values}
\end{equation}
Hence
\begin{align}
 a_{{\rm odd},t}
  &=\frac12\{\log h_J(1\mid q_{t,J})
                -\log h_J(0\mid q_{t,J})\},
   \label{S-eq:odd-endpoint-identity}\\
 a_{{\rm even},t}
  &=\frac12\{\log h_J(0\mid q_{t,J})
                +\log h_J(1\mid q_{t,J})\}+\log Z_{t,J}.
   \label{S-eq:even-endpoint-identity}
\end{align}
\end{lemma}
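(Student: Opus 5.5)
The plan is to compute $Q_J'q_{t,J}$ explicitly, substitute it into the weighted softmax \eqref{eq:dcd-weights}, and read off the rest. The Helmert matrix satisfies $QQ'=I_{J-1}$ and $Q\mathbf 1_J=0$. Hence $Q'Q$ is the orthogonal projector onto the complement of $\mathbf 1_J$, namely $Q'Q=I_J-J^{-1}\mathbf 1_J\mathbf 1_J'$. Applying this projector to the endpoint directions gives
\[
 Q_J'e_{{\rm even},J}=e_1+e_J-\tfrac{2}{J}\mathbf 1_J,\qquad
 Q_J'e_{{\rm odd},J}=e_J-e_1 .
\]
Combining the two with the amplitudes yields
\[
 Q_J'q_{t,J}=\alpha_{L,t}e_1+\alpha_{R,t}e_J-\tfrac{2a_{{\rm even},t}}{J}\mathbf 1_J .
\]
The constant shift along $\mathbf 1_J$ cancels between numerator and denominator of the softmax.

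With that cancellation, the unnormalized weights are $m_{1,J}e^{\alpha_{L,t}}$, $m_{J,J}e^{\alpha_{R,t}}$, and $m_{j,J}$ for the interior indices. Their sum is $\sum_j m_{j,J}+m_{1,J}(e^{\alpha_{L,t}}-1)+m_{J,J}(e^{\alpha_{R,t}}-1)$. Since $\sum_j m_{j,J}=1$, this sum equals $Z_{t,J}$, which gives \eqref{S-eq:endpoint-weights-J} and \eqref{S-eq:endpoint-Z-J}.

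For the density, insert these weights into $h_J=\sum_j p_{j,t,J}B_{j,J}/m_{j,J}$. The masses cancel and leave $Z_{t,J}^{-1}\{\sum_j B_{j,J}(u)+(e^{\alpha_{L,t}}-1)B_{1,J}(u)+(e^{\alpha_{R,t}}-1)B_{J,J}(u)\}$. The partition of unity $\sum_jB_{j,J}=1$ then gives \eqref{S-eq:endpoint-density-form}.

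The endpoint values follow from the clamped-knot property $B_{1,J}(0)=1$, $B_{J,J}(0)=0$, $B_{1,J}(1)=0$ and $B_{J,J}(1)=1$. This property holds because the full-multiplicity end knots make only the first (respectively last) basis function nonzero at $0$ (respectively $1$), and the partition of unity forces its value there to be one. Substituting these values gives \eqref{S-eq:endpoint-density-values}.

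Taking logarithms, $\log h_J(0)=\alpha_{L,t}-\log Z_{t,J}$ and $\log h_J(1)=\alpha_{R,t}-\log Z_{t,J}$. Half their difference is $(\alpha_{R,t}-\alpha_{L,t})/2=a_{{\rm odd},t}$, which is \eqref{S-eq:odd-endpoint-identity}. Half their sum is $a_{{\rm even},t}-\log Z_{t,J}$, which rearranges to \eqref{S-eq:even-endpoint-identity}.

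The only point needing care is the first step, the form of $Q'Q$ and the cancellation of the $\mathbf 1_J$ shift. This relies only on $QQ'=I$ and $Q\mathbf 1=0$, together with the fact that the rows of $Q$ span the complement of $\mathbf 1_J$, since the rank is $J-1$. I expect no real obstacle there; the rest is direct substitution.
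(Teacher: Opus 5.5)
Your proposal is correct and follows the paper's proof essentially step for step. The paper writes $q_{t,J}=Q_Jc_{t,J}$ with $c_{t,J}=(\alpha_{L,t},0,\ldots,0,\alpha_{R,t})'$ and uses $Q_J'Q_J=I_J-J^{-1}\mathbf 1_J\mathbf 1_J'$ to show that $Q_J'q_{t,J}$ differs from $c_{t,J}$ by a common constant, which you compute explicitly as $-2a_{{\rm even},t}/J$. The remaining steps (softmax normalization, partition of unity, clamped endpoint values, and taking logarithms) are the same as the paper's, with your added justifications of the projector form and the clamped-knot values filling in details the paper states without proof.
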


\begin{proof}
Define the vector
\[
 c_{t,J}=(\alpha_{L,t},0,\ldots,0,\alpha_{R,t})'.
\]
Then $q_{t,J}=Q_Jc_{t,J}$. Since
\[
 Q_J'Q_J=I_J-\frac1J\mathbf 1_J\mathbf 1_J',
\]
the entries of $Q_J'q_{t,J}$ differ from those of $c_{t,J}$ by a common
constant. This constant cancels between the numerator and denominator of
the softmax formula in \eqref{eq:dcd-weights}. The unnormalized weights
are therefore $m_{1,J}e^{\alpha_{L,t}}$ and
$m_{J,J}e^{\alpha_{R,t}}$ for the first and last components, and $m_{j,J}$
for the interior components. Their sum is $Z_{t,J}$, which gives
\eqref{S-eq:endpoint-weights-J}. Thus every weight includes the same
normalizing factor.

Substituting these weights into the mixture density and using
$\sum_jB_{j,J}(u)=1$ gives \eqref{S-eq:endpoint-density-form}. The clamped
basis satisfies $B_{1,J}(0)=B_{J,J}(1)=1$.
At $u=0$, all basis functions except $B_{1,J}$ are zero. At $u=1$, all
except $B_{J,J}$ are zero. This gives
\eqref{S-eq:endpoint-density-values}. Subtracting and adding the logarithms
of these two values gives \eqref{S-eq:odd-endpoint-identity} and
\eqref{S-eq:even-endpoint-identity}.
\end{proof}

\subsection{Limit as J increases}

\begin{proof}[Proof of Proposition~\ref{thm:endpoint-intensity}]
Choose a finite $A$ such that
\[
 |a_{{\rm even},t}|+|a_{{\rm odd},t}|\le A
 \quad\text{for every }t.
\]
Then $|\alpha_{L,t}|,|\alpha_{R,t}|\le A$. With $C=e^A+1$,
\[
 |Z_{t,J}-1|\le C(m_{1,J}+m_{J,J}).
\]
The right side tends to zero and does not depend on $t$. Also,
$Z_{t,J}\ge e^{-A}$ because it is a weighted average of
$e^{\alpha_{L,t}}$, $e^{\alpha_{R,t}}$, and one. It follows that
\begin{equation}
 \sup_t|Z_{t,J}-1|\to0,\qquad
 \sup_t|\log Z_{t,J}|\to0.
 \label{S-eq:Z-to-one}
\end{equation}
We have
\[
 \left|\frac{e^{\alpha_{L,t}}}{Z_{t,J}}-e^{\alpha_{L,t}}\right|
 \le e^{2A}|Z_{t,J}-1|,
\]
with the same bound for $\alpha_{R,t}$. Hence
\begin{equation}
\begin{aligned}
 h_J(0\mid q_{t,J})
    &=e^{a_{{\rm even},t}-a_{{\rm odd},t}}+o(1),\\
 h_J(1\mid q_{t,J})
    &=e^{a_{{\rm even},t}+a_{{\rm odd},t}}+o(1),
\end{aligned}
 \label{S-eq:endpoint-intensity-limit}
\end{equation}
uniformly in $t$. For each fixed $J$, the reference CDF tends to zero and
one in the left and right tails. Since $h_J$ is continuous at $u=0$ and
$u=1$, the corrected return density satisfies
\begin{equation}
\begin{aligned}
 \lim_{r\to-\infty}\frac{\bar g_{t,J}(r)}{f_t(r)}
    &=h_J(0\mid q_{t,J}),\\
 \lim_{r\to+\infty}\frac{\bar g_{t,J}(r)}{f_t(r)}
    &=h_J(1\mid q_{t,J}).
\end{aligned}
 \label{S-eq:return-tail-ratios}
\end{equation}
Taking $J\to\infty$ gives \eqref{eq:return-tail-intensity-limit}.
For a Student-$t$ reference with degrees of freedom $\eta_t$, the density
decays as $|r|^{-(\eta_t+1)}$ in both tails. Multiplication by a positive
constant changes the tail coefficient and preserves this
exponent.
\end{proof}

\subsection{Integrated difference from the reference}

The values $h_J(0\mid q_{t,J})$ and $h_J(1\mid q_{t,J})$ need not tend to
one, but the integrated absolute difference between $h_J$ and the uniform
density does tend to zero. To see
this, subtract one from \eqref{S-eq:endpoint-density-form}:
\[
 h_J(u\mid q_{t,J})-1
  =\frac{(e^{\alpha_{L,t}}-1)\{B_{1,J}(u)-m_{1,J}\}
          +(e^{\alpha_{R,t}}-1)\{B_{J,J}(u)-m_{J,J}\}}{Z_{t,J}}.
\]
Because the basis functions are nonnegative,
\[
 \int_0^1|B_{1,J}(u)-m_{1,J}|\,du\le2m_{1,J},
\]
and the same bound holds with $B_{J,J}$ and $m_{J,J}$. Therefore,
\begin{align*}
 \int_0^1|h_J(u\mid q_{t,J})-1|\,du
 &\le\frac{2}{Z_{t,J}}
       \left\{|e^{\alpha_{L,t}}-1|m_{1,J}
               +|e^{\alpha_{R,t}}-1|m_{J,J}\right\}\\
 &\le2e^AC(m_{1,J}+m_{J,J}).
\end{align*}
The constants $A$ and $C$ are those used in the proof above. Thus
\[
 \sup_t\int_0^1|h_J(u\mid q_{t,J})-1|\,du
     =O(m_{1,J}+m_{J,J})\to0.
\]
Thus the values of $h_J$ at $u=0$ and $u=1$ can have limits different from
one even though $h_J$ converges to one in integrated absolute difference.

% Source: supp_theory.tex
\section{Detailed assumptions and proofs}
\label{S-supp:assumptions}

Throughout, $\|\cdot\|$ denotes the Euclidean norm for vectors and the
corresponding induced operator norm for matrices.  For a third-order tensor
$\mathcal T=(T_{rkl})$, $\|\mathcal T\|_F$ denotes the Frobenius norm,
\[
    \|\mathcal T\|_F
    :=
    \left(\sum_{r,k,l}T_{rkl}^2\right)^{1/2}.
\]

The reference distribution and Fisher projection matrices are known, and
the spline dimension $J$ is fixed. We prove consistency and asymptotic
normality of the six DCD-SM dynamic coefficient estimators in M4 and M5.
The weights $w_{t,i}$ are nonnegative and sum to one. Individual conditional
densities may differ, and returns within a date may be dependent.
Bounds on individual spline scores and their state derivatives also
bound their weighted averages.

\subsection{Consistency}
\label{S-subsec:consistency}

We establish consistency for the score update equation used in M4 and M5.
We assume that, near the true parameter, it has a unique stationary causal
solution and that the effect of initialization decays uniformly at an
exponential rate.

Use the weights $p(q)$, density $h(u\mid q)$, and responsibilities
$\tau(u,q)$ from Sections~\ref{subsec:spline-parameterization} and
\ref{subsec:description} of the main paper, with $QQ'=I_{J-1}$ and
$Q\mathbf 1_J=0$. For
$u_{t,i}=F_t(r_{t,i})$, the weighted spline score is
$s_t(q)=\sum_i w_{t,i}\psi(u_{t,i},q)$, where
\begin{equation}
    \psi(u,q)=Q\{\tau(u,q)-p(q)\}.
    \label{S-eq:analytic-score}
\end{equation}

\begin{lemma}
\label{S-lem:score-bounds}
For fixed $J$,
\[
    \sup_{u,q}\|\psi(u,q)\|\le\sqrt2,
    \qquad
    \sup_{u,q}
    \left\|
      \frac{\partial\psi(u,q)}{\partial q'}
    \right\|
    \le1.
\]
Let
\[
    \mathcal T_\psi(u,q)
    :=
    \left[
      \frac{\partial^2\psi_r(u,q)}
           {\partial q_k\,\partial q_l}
    \right]_{r,k,l=1}^{J-1}
\]
denote the third-order tensor of second derivatives of the vector score.
Then there exists a finite constant $C_J$, depending only on the fixed spline
dimension $J$, such that
\begin{equation}
    \sup_{u,q}\|\mathcal T_\psi(u,q)\|_F
    \le C_J<\infty.
    \label{S-eq:score-second-derivative-bound}
\end{equation}
Therefore, a first-order Taylor expansion in $q$ gives, for any
$q,\widetilde q\in\mathbb R^{J-1}$,
\begin{equation}
    \|\psi(u,q)-\psi(u,\widetilde q)\|
    \le
    \|q-\widetilde q\|,
    \label{S-eq:score-taylor-bound}
\end{equation}
and
\begin{equation}
    \left\|
      \frac{\partial\psi(u,q)}{\partial q'}
      -
      \frac{\partial\psi(u,\widetilde q)}{\partial q'}
    \right\|
    \le
    C_J\|q-\widetilde q\|.
    \label{S-eq:score-jacobian-taylor-bound}
\end{equation}
\end{lemma}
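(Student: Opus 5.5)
The approach is to compute the derivatives of $\psi(u,q)=Q\{\tau(u,q)-p(q)\}$ explicitly through the softmax structure. Write $\pi=p(q)$, $D_\pi=\operatorname{diag}(\pi)$, $D_\tau=\operatorname{diag}(\tau)$, and note that $\partial\log p_j/\partial q'=(Q')_{j\cdot}-p'Q'$. From this, $\partial p/\partial q'=(D_\pi-\pi\pi')Q'$.

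The responsibilities are again a softmax, with logits $\log p_j+\log\phi_j(u)$, and $\log\phi_j$ does not depend on $q$. Hence $\partial\tau/\partial q'=(D_\tau-\tau\tau')Q'$, and
\[
\frac{\partial\psi}{\partial q'}=Q\{(D_\tau-\tau\tau')-(D_\pi-\pi\pi')\}Q'.
\]
For $u$ where some $\phi_j(u)=0$ the formula is unchanged, with $\tau_j=0$.

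For the first bound, note that $\tau$ and $p$ both lie in the simplex. Then $\|\tau-p\|\le\|\tau-p\|_1^{1/2}\|\tau-p\|_\infty^{1/2}$, or more simply $\|\tau-p\|^2\le\|\tau\|^2+\|p\|^2\le 2$. Since $QQ'=I$, $Q$ is a partial isometry with $\|Q\|=1$, which gives $\|\psi\|\le\sqrt2$.

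For the Jacobian bound, both $A=D_\tau-\tau\tau'$ and $B=D_\pi-\pi\pi'$ are covariance matrices of categorical distributions. Each is positive semidefinite with eigenvalues in $[0,\max_j\pi_j]\subset[0,1]$; the upper bound holds because $x'Ax\le\sum\tau_jx_j^2\le\|x\|^2$. Hence $A-B$ is a difference of two PSD matrices each bounded by $I$. For any unit $x$, $-1\le x'(A-B)x\le1$, so the symmetric matrix $A-B$ has operator norm at most $1$. Sandwiching by $Q$ and $Q'$ preserves this, which gives the bound $1$.

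For the second derivatives, I will differentiate once more. The entries of $A$ are polynomials of degree at most $3$ in $\tau$, and the same holds for $B$ in $\pi$. Each entry's derivative with respect to $q_k$ is a polynomial in $\tau$ (respectively $\pi$) times entries of $Q$, all bounded in absolute value because the probabilities lie in $[0,1]$ and $|Q_{rj}|\le1$. Counting terms yields a constant $C_J$ depending only on $J$, for instance a crude bound such as $C_J\le cJ^{3}$ for a universal constant $c$. I expect no real obstacle here; the only care needed is uniformity in $u$, including points where $\phi_j(u)=0$, which the softmax representation handles.

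Finally, I will obtain the Lipschitz bounds \eqref{S-eq:score-taylor-bound} and \eqref{S-eq:score-jacobian-taylor-bound} from the mean value inequality along the segment from $\widetilde q$ to $q$. Written as $\psi(u,q)-\psi(u,\widetilde q)=\int_0^1\partial_{q'}\psi(u,\widetilde q+s(q-\widetilde q))\,ds\,(q-\widetilde q)$, this gives constant $1$. The Jacobian version follows the same way, using that the operator norm of the tensor contraction is bounded by its Frobenius norm $C_J$.

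The main step needing care is the operator-norm bound $1$ for the difference of the two covariance matrices. This uses the fact that each is sandwiched between $0$ and $I$, rather than the triangle inequality, which would only give $2$.
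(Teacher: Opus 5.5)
Your proof is correct and follows essentially the same route as the paper. Both compute the weighted-softmax derivatives explicitly, giving $\partial\psi/\partial q'=Q\{V(\tau)-V(p)\}Q'$ with $V(a)=\operatorname{diag}(a)-aa'$, and both then use the simplex bound with $\|Q\|=1$, uniformly bounded polynomial entries for the second derivatives, and the integral mean-value formula for the two Lipschitz bounds. The only variation is the Jacobian step. The paper applies Gershgorin to get $\|V(a)\|\le 1/2$ for each categorical covariance and then uses the triangle inequality, so your remark that the triangle inequality ``would only give $2$'' holds only for the crude bound $\|V(a)\|\le 1$. Your Loewner-order argument $0\preceq V(a)\preceq I$ works equally well, and combining the two ($0\preceq V(a)\preceq \tfrac12 I$) would in fact give the sharper bound $1/2$.
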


\begin{proof}
Both $\tau(u,q)$ and $p(q)$ are probability vectors, so
$\|\tau(u,q)-p(q)\|\le\sqrt2$. Since $\|Q\|=1$, the score bound follows.
Writing $V(a)=\operatorname{diag}(a)-aa'$, differentiation gives
\[
    \frac{\partial\psi(u,q)}{\partial q'}
    =Q\{V(\tau(u,q))-V(p(q))\}Q'.
\]
For a probability vector $a$, the diagonal entry of $V(a)$ and the sum of
the absolute off-diagonal entries in the same row both equal
$a_i(1-a_i)$. Gershgorin's theorem gives
$\|V(a)\|\le\max_i2a_i(1-a_i)\le1/2$, so the score Jacobian has norm at
most one.

For a weighted softmax vector
$a_j(z)=c_je^{z_j}/\sum_s c_se^{z_s}$ with $c_j\ge0$ and $\sum_jc_j>0$,
\begin{equation}
    \frac{\partial a_j(z)}{\partial z_l}
    =a_j(z)\{\delta_{jl}-a_l(z)\}.
    \label{S-eq:softmax-first-derivative}
\end{equation}
Consequently,
\begin{equation}
    \frac{\partial V_{jk}\{a(z)\}}{\partial z_l}
    =a_j(\delta_{jl}-a_l)(\delta_{jk}-a_k)
     -a_ja_k(\delta_{kl}-a_l).
    \label{S-eq:softmax-covariance-derivative}
\end{equation}
The right side is uniformly bounded. Both $p(q)$ and $\tau(u,q)$ are
weighted softmax vectors with index $Q'q$. Applying the chain rule to the
score Jacobian therefore gives a uniform bound on every entry of
$\mathcal T_\psi(u,q)$. Since $J$ and $Q$ are fixed, its Frobenius norm is
bounded by a finite constant $C_J$. The first-order Taylor formula gives
\eqref{S-eq:score-taylor-bound} and \eqref{S-eq:score-jacobian-taylor-bound}.
\end{proof}

\subsubsection*{DCD-SM as a stochastic recurrence equation}

Let $E=(e_{\rm even},e_{\rm odd})$ contain the vectors in
\eqref{eq:endpoint-directions}. Define
\begin{equation}
    z_t=(a_{{\rm even},t},a_{{\rm odd},t})',\qquad q_t=Ez_t.
    \label{S-eq:endpoint-z-state}
\end{equation}

For M4 and M5, write the fixed-anchor Fisher projection as
\[
    s_t^\perp(q)=s_t(q)+b_t,
    \qquad
    b_t
    :=
    -I_{q,ms,0}I_{ms,ms,0}^{-1}s_{ms,t},
    \qquad
    ms=(m,\log\sigma^2).
\]
The projection coefficient $I_{q,ms,0}I_{ms,ms,0}^{-1}$ is computed once at
$(ms,q)=(ms_0,0)$ and held fixed. The reference score $s_{ms,t}$ is computed
from the date-$t$ returns using the known reference distribution, as in
\eqref{eq:reference-ms-score}. Define
\begin{equation}
    r_t(z)
    :=
    E's_t^\perp(Ez)
    =
    E'\{s_t(Ez)+b_t\}
    \in\mathbb R^2.
    \label{S-eq:endpoint-score-r}
\end{equation}
With the observed returns and reference distribution held fixed, its state
Jacobian is
\begin{equation}
    G_t(z)
    :=
    \frac{\partial r_t(z)}{\partial z'}
    =
    E'H_t(Ez)E,
    \qquad
    H_t(q)
    :=
    \frac{\partial s_t(q)}{\partial q'}.
    \label{S-eq:endpoint-score-jacobian}
\end{equation}
Lemma~\ref{S-lem:score-bounds} implies that $G_t(z)$ is
globally bounded and globally Lipschitz in $z$.

The parameter vector contains the six HAR coefficients,
\begin{equation}
    \theta=\kappa=
    (\kappa_{{\rm even},1},\kappa_{{\rm even},5},\kappa_{{\rm even},22},
     \kappa_{{\rm odd},1},\kappa_{{\rm odd},5},\kappa_{{\rm odd},22})',
    \qquad d_\theta=6.
    \label{S-eq:downstream-parameter}
\end{equation}
For $h\in\{1,5,22\}$, define
\begin{equation}
    C_h(\theta)
    :=
    \operatorname{diag}
    \left(
      \kappa_{{\rm even},h},
      \kappa_{{\rm odd},h}
    \right).
    \label{S-eq:har-coefficient-matrix}
\end{equation}
Expanding the HAR$(1,5,22)$ moving averages in
\eqref{eq:parity-har-dcd} gives
\begin{equation}
    z_{t+1}
    =
    \sum_{j=0}^{21}
    K_j(\theta)r_{t-j}(z_{t-j}),
    \label{S-eq:endpoint-expanded-har}
\end{equation}
where
\[
 K_j(\theta)=
 \begin{cases}
 C_1(\theta)+\frac15 C_5(\theta)+\frac1{22}C_{22}(\theta),&j=0,\\
 \frac15 C_5(\theta)+\frac1{22}C_{22}(\theta),&j=1,\ldots,4,\\
 \frac1{22}C_{22}(\theta),&j=5,\ldots,21.
 \end{cases}
\]

Now define the companion state
\begin{equation}
    X_t
    :=
    \begin{pmatrix}
      z_t'&z_{t-1}'&\cdots&z_{t-21}'
    \end{pmatrix}'
    \in\mathbb R^{44}.
    \label{S-eq:companion-state}
\end{equation}
Then \eqref{S-eq:endpoint-expanded-har} is the stochastic recurrence equation
\begin{equation}
    X_{t+1}
    =
    \Phi_t(X_t;\theta).
    \label{S-eq:endpoint-sre}
\end{equation}
For
$x=(z_t',z_{t-1}',\ldots,z_{t-21}')'$, let
\[
    A_t(x;\theta)
    :=
    \frac{\partial\Phi_t(x;\theta)}{\partial x'}.
\]
Its companion form is
\begin{equation}
A_t(x;\theta)
=
\begin{pmatrix}
K_0(\theta)G_t(z_t)
&
K_1(\theta)G_{t-1}(z_{t-1})
&
\cdots
&
K_{21}(\theta)G_{t-21}(z_{t-21})
\\
I_2&0&\cdots&0\\
0&I_2&\ddots&\vdots\\
\vdots&&\ddots&0
\end{pmatrix}.
\label{S-eq:companion-jacobian}
\end{equation}
Thus local stability of the score update equation is governed by products of the feedback
matrices $K_j(\theta)G_{t-j}$ rather than by the HAR lag matrices alone.

\subsubsection*{Assumptions}

\paragraph{Assumption 1 (Conditional weighted-average-density specification).}
The reference distribution $F_t$, the constituent set and its finite size
$N_t\ge1$, and the weights $w_{t,i}\ge0$ with $\sum_iw_{t,i}=1$ are given
before the date-$t$ returns are observed. They are
$\mathcal F_{t-1}$-measurable and do not depend on $\theta$. The Fisher
projection matrices are known and held fixed. Conditional on $\mathcal F_{t-1}$, each
$u_{t,i}$ has density $\phi_{t,i}$ on $[0,1]$, and
\begin{equation}
    \bar\phi_t(u)
    :=\sum_{i=1}^{N_t}w_{t,i}\phi_{t,i}(u)
    =h\{u\mid q_t(\theta_0)\}
    \quad\text{Lebesgue-a.e., a.s.}
    \label{S-eq:oracle-average-density}
\end{equation}
The individual densities need not be equal, and conditional independence
across $i$ is not required.  No stationarity or ergodicity assumption is
imposed here on the full cross-sectional observation process.

For later use, Assumption~1 implies that any jointly measurable random
function $a_t(u)$, measurable with respect to $\mathcal F_{t-1}$ for each
$u$ and satisfying
$E[\int_0^1|a_t(u)|\bar\phi_t(u)\,du]<\infty$, obeys
\begin{equation}
\begin{split}
    E\left[\sum_{i=1}^{N_t}w_{t,i}a_t(u_{t,i})
      \,\middle|\,\mathcal F_{t-1}\right]
    &=\sum_{i=1}^{N_t}w_{t,i}
       \int_0^1a_t(u)\phi_{t,i}(u)\,du\\
    &=\int_0^1a_t(u)h\{u\mid q_t(\theta_0)\}\,du.
\end{split}
\label{S-eq:conditional-average-identity}
\end{equation}
The identity also applies componentwise to vector- and matrix-valued
functions.

\paragraph{Assumption 2 (Stationarity of $q_t$ and local stability).}
The true parameter $\theta_0$ belongs to the interior of a compact
parameter neighborhood $\mathcal N$.  For every $\theta\in\mathcal N$, the
recursion \eqref{S-eq:endpoint-sre} admits a unique causal solution
$X_t(\theta)$, measurable with respect to $\mathcal F_{t-1}$.  The associated
spline-state process $q_t(\theta)$ is almost surely continuous in $\theta$
on $\mathcal N$, and
\[
    \{q_t(\cdot)\}_{t\in\mathbb Z}
    \quad\text{in }C(\mathcal N;\mathbb R^{J-1})
\]
is strictly stationary and ergodic.  This is a joint condition on the
candidate state family, including $q_t(\theta_0)$, rather than a separate
condition for each parameter value.  Since $E$ has full column rank,
$z_t(\theta)=(E'E)^{-1}E'q_t(\theta)$. The companion state $X_t(\theta)$
stacks finitely many lags of $z_t(\theta)$.  Thus $X_t(\cdot)$ inherits
continuity, strict stationarity, and ergodicity. There is a finite constant
$C_b$ such that $\|b_t\|\le C_b$ almost surely for all $t$.
Moreover, for any fixed initialization in a common local basin of attraction,
there exist a finite random variable $C$ and a constant $\rho\in(0,1)$ such
that
\begin{equation}
    \sup_{\theta\in\mathcal N}
    \|\widehat X_t(\theta)-X_t(\theta)\|
    \le
    C\rho^t
    \qquad\text{a.s. for all sufficiently large }t.
    \label{S-eq:local-uniform-forgetting}
\end{equation}

\paragraph{Assumption 3 (Local dynamic identification).}
For the given reference distribution and Fisher projection matrices, and for
$\theta\in\mathcal N$,
\[
    q_t(\theta)=q_t(\theta_0)
    \quad\text{a.s. for all }t
\]
implies $\theta=\theta_0$.

\subsubsection*{Uniform convergence of the local criterion}

For $\theta\in\mathcal N$, define
\[
    \ell_t(\theta)
    =
    \sum_{i=1}^{N_t}w_{t,i}
      \log h\{u_{t,i}\mid q_t(\theta)\},
    \qquad
    L_T(\theta)
    =
    \frac1T\sum_{t=1}^T\ell_t(\theta),
\]
where $q_t(\theta)$ is obtained from the first two entries of $X_t(\theta)$
using \eqref{S-eq:endpoint-z-state}. Let
\[
    L(\theta)=E[\ell_t(\theta)].
\]
Because the reference distribution at each date does not depend on $\theta$,
maximizing $L_T(\theta)$ is equivalent to maximizing the
weighted cross-sectional return log-score criterion over $\mathcal N$.  Define the local
estimator by
\begin{equation}
    \widehat\theta_T
    \in
    \arg\max_{\theta\in\mathcal N}L_T(\theta).
    \label{S-eq:local-estimator}
\end{equation}

The criterion is uniformly bounded. Indeed, by
Lemma~\ref{S-lem:score-bounds}, $s_t(q)$ is uniformly bounded, and $b_t$ is
uniformly bounded by Assumption~2.  Since the matrices $K_j(\theta)$ are
bounded on the compact neighborhood $\mathcal N$,
\eqref{S-eq:endpoint-expanded-har} implies
\[
    \sup_{t,\theta\in\mathcal N}\|z_t(\theta)\|
    \le M_z<\infty,
\]
and hence
\[
    \sup_{t,\theta\in\mathcal N}\|q_t(\theta)\|
    \le M_q<\infty.
\]
If $v=Q'q$, then
\[
    h(u\mid q)
    =
    \frac{
      \sum_{j=1}^J B_j(u)e^{v_j}
    }{
      \sum_{j=1}^J m_je^{v_j}
    }.
\]
Using the partition-of-unity identities
$\sum_jB_j(u)=1$ and $\sum_jm_j=1$ gives
\begin{equation}
    e^{-2M_q}
    \le
    h\{u\mid q_t(\theta)\}
    \le
    e^{2M_q}
    \label{S-eq:local-density-envelope}
\end{equation}
uniformly in $u,t$, and $\theta\in\mathcal N$.

\begin{lemma}
\label{S-lem:local-ulln}
Under Assumptions~1--2,
\begin{equation}
    \sup_{\theta\in\mathcal N}
    |L_T(\theta)-L(\theta)|
    \longrightarrow0
    \qquad\text{a.s.}
    \label{S-eq:local-uniform-lln}
\end{equation}
\end{lemma}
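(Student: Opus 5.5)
The plan is to split each date contribution into its $\mathcal F_{t-1}$-conditional mean and a martingale difference. This is needed because Assumptions~1--2 impose stationarity and ergodicity only on the state family $\{q_t(\cdot)\}$, not on the cross-sectional observations. I work throughout with the stationary causal solution $q_t(\theta)$ of Assumption~2. Define
\[
 \lambda(q^\ast,q):=\int_0^1 h(u\mid q^\ast)\log h(u\mid q)\,du,
 \qquad
 \bar\ell_t(\theta):=E[\ell_t(\theta)\mid\mathcal F_{t-1}].
\]
By \eqref{S-eq:conditional-average-identity}, applied with the $\mathcal F_{t-1}$-measurable function $a_t(u)=\log h\{u\mid q_t(\theta)\}$, we get $\bar\ell_t(\theta)=\lambda\{q_t(\theta_0),q_t(\theta)\}$. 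This function is bounded by $2M_q$ because of the envelope \eqref{S-eq:local-density-envelope}. Set $D_t(\theta):=\ell_t(\theta)-\bar\ell_t(\theta)$. By the tower property, $L(\theta)=E[\ell_t(\theta)]=E[\bar\ell_t(\theta)]$, and this does not depend on $t$. It then suffices to show that $\sup_\theta|T^{-1}\sum_t\bar\ell_t(\theta)-L(\theta)|\to0$ and $\sup_\theta|T^{-1}\sum_tD_t(\theta)|\to0$ almost surely.

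For the conditional-mean part, I first note that $\lambda$ is jointly continuous on the compact set $\{\|q^\ast\|,\|q\|\le M_q\}$, because $h$ is smooth in $q$ and bounded away from zero there. Hence $\theta\mapsto\bar\ell_t(\theta)$ is a continuous functional of the $C(\mathcal N;\mathbb R^{J-1})$-valued process $q_t(\cdot)$. That process is strictly stationary and ergodic by Assumption~2, so $\{\bar\ell_t(\cdot)\}$ is a stationary ergodic sequence in $C(\mathcal N)$ with $E\sup_\theta|\bar\ell_t(\theta)|\le 2M_q<\infty$. The ergodic theorem for stationary ergodic sequences in the separable Banach space $C(\mathcal N)$ (Ranga Rao's uniform strong law) then gives the required uniform almost sure convergence directly.

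For the martingale part, I argue pointwise first and then add equicontinuity. For fixed $\theta$, $D_t(\theta)$ is a martingale difference bounded by $4M_q$, so $T^{-1}\sum_tD_t(\theta)\to0$ almost surely by the martingale strong law (Chow's theorem, since $\sum_t t^{-2}E D_t^2<\infty$). This extends to a countable dense set of $\theta$ values. For uniformity, the derivative of $\log h(u\mid q)$ in $q$ is $\psi(u,q)$, whose norm is at most $\sqrt2$ by Lemma~\ref{S-lem:score-bounds}. Together with the corresponding bound on $\partial_q\lambda$, this gives
\[
 |D_t(\theta)-D_t(\theta')|\le 2\sqrt2\,\|q_t(\theta)-q_t(\theta')\|.
\]
Define the modulus $\omega_t(\delta):=\sup_{\|\theta-\theta'\|\le\delta}\|q_t(\theta)-q_t(\theta')\|$. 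It is a measurable functional of the stationary ergodic $C(\mathcal N)$-valued process, bounded by $2M_q$. By the ergodic theorem, $T^{-1}\sum_t\omega_t(\delta)\to E\,\omega_1(\delta)$ almost surely. Almost sure continuity of $q_1(\cdot)$ on the compact set $\mathcal N$ gives $\omega_1(\delta)\to0$, so dominated convergence yields $E\,\omega_1(\delta)\to0$ as $\delta\downarrow0$. I then cover $\mathcal N$ by a finite $\delta$-net and combine pointwise convergence on the net with this modulus bound. Letting $\delta\downarrow0$ along a countable sequence gives $\sup_\theta|T^{-1}\sum_tD_t(\theta)|\to0$ almost surely.

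The main obstacle I expect is conceptual rather than computational. One cannot simply invoke a stationary ULLN for $\ell_t$, because $\ell_t$ depends on observations that need not be stationary. The conditional-mean/martingale split is what confines stationarity to $q_t(\cdot)$. The delicate points are two. First, $\bar\ell_t$ must be identified through the weighted-average-density identity, which is where the same weights in criterion and target matter. Second, the equicontinuity control for $D_t$ must be carried by the stationary modulus $\omega_t$, not by the raw observations. Initialization effects are absent here because $L_T$ is defined with the stationary solution. The replacement by $\widehat X_t$ would be handled separately using \eqref{S-eq:local-uniform-forgetting} and the same $\sqrt2$-Lipschitz bound.
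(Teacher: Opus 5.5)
Your proof is correct and follows essentially the same route as the paper. You split $\ell_t=\bar\ell_t+D_t$, identify $\bar\ell_t$ through the conditional-average identity, apply a bounded-martingale strong law to $D_t$, and use a finite $\delta$-net controlled by the stationary modulus $\omega_t(\delta)$ via the $\sqrt2$ score bound. The only difference is minor: you handle the predictable part with a Banach-space ergodic theorem in $C(\mathcal N)$, whereas the paper applies the pointwise ergodic theorem and the same modulus-and-net argument to both pieces together.
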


\begin{proof}
For each fixed $\theta\in\mathcal N$, decompose the criterion as
\[
    \ell_t(\theta)=d_t(\theta)+m_t(\theta),
    \qquad
    m_t(\theta):=E\{\ell_t(\theta)\mid\mathcal F_{t-1}\}.
\]
The conditional-average identity gives
\begin{equation}
    m_t(\theta)
    =\int_0^1 h\{u\mid q_t(\theta_0)\}
       \log h\{u\mid q_t(\theta)\}\,du.
    \label{S-eq:predictable-criterion}
\end{equation}
Thus $m_t(\theta)$ is a measurable function of the jointly stationary and
ergodic family $q_t(\theta)$ in Assumption~2. Its expectation
$E[m_t(\theta)]=L(\theta)$ does not depend on $t$. The bound
\eqref{S-eq:local-density-envelope} implies
$|\ell_t(\theta)|,|m_t(\theta)|\le2M_q$, so $d_t(\theta)$ is a bounded
martingale difference.  In particular,
\[
    \sum_{t=1}^{\infty}\frac{E[d_t(\theta)^2]}{t^2}<\infty,
    \qquad
    \frac1T\sum_{t=1}^T d_t(\theta)\longrightarrow0
    \quad\text{a.s.}
\]
The second statement follows from martingale convergence and Kronecker's
lemma.  Applying the ergodic theorem to $m_t(\theta)$ now gives
\[
    L_T(\theta)\to L(\theta)
    \qquad\text{a.s.}
\]
for every fixed $\theta$.

To obtain uniform convergence without assuming stationarity of
$\ell_t(\theta)$, define
\[
    \omega_t(\delta)
    :=
    \sup_{\substack{
      \theta,\widetilde\theta\in\mathcal N\\
      \|\theta-\widetilde\theta\|\le\delta
    }}
    \|q_t(\theta)-q_t(\widetilde\theta)\|.
\]
Assumption~2 implies that $\omega_t(\delta)\to0$ almost surely as
$\delta\downarrow0$, while $\omega_t(\delta)\le2M_q$.  The bounded score
in Lemma~\ref{S-lem:score-bounds} gives the bound
\[
    |\ell_t(\theta)-\ell_t(\widetilde\theta)|
    \le\sqrt2\,\omega_t(\delta)
    \quad\text{if }\|\theta-\widetilde\theta\|\le\delta.
\]
The same bound holds for $|m_t(\theta)-m_t(\widetilde\theta)|$, since
$\omega_t(\delta)$ is predictable.  Dominated convergence gives
\[
    E[\omega_t(\delta)]\to0.
\]
For fixed $\delta$, $\omega_t(\delta)$ is stationary and ergodic by
Assumption~2, so
\[
    \frac1T\sum_{t=1}^T\omega_t(\delta)
    \to
    E[\omega_t(\delta)]
    \qquad\text{a.s.}
\]
A finite $\delta$-net of the compact set $\mathcal N$, followed by
$\delta\downarrow0$, proves \eqref{S-eq:local-uniform-lln}.  Only the
conditional expectation $m_t(\theta)$ and $\omega_t(\delta)$ require an ergodic theorem.
The observation-dependent remainder is handled by a martingale law of
large numbers.
\end{proof}

\subsubsection*{Consistency theorem}

\begin{theorem}
\label{S-thm:consistency}
Suppose that Assumptions~1--3 hold.  Then the local estimator in
\eqref{S-eq:local-estimator} satisfies
\[
    \widehat\theta_T
    \longrightarrow
    \theta_0
    \qquad\text{a.s.}
\]
The same conclusion holds for the criterion evaluated from any fixed
initialization covered by Assumption~2.
\end{theorem}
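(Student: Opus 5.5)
The argument is the standard Wald--Amemiya consistency scheme: uniform convergence of the criterion (Lemma~\ref{S-lem:local-ulln}), a deterministic limit $L$ that is continuous on the compact set $\mathcal N$ with a unique maximizer at $\theta_0$, and then the usual argmax argument. A separate step handles fixed initialization.

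\textbf{Continuity of $L$.} From the proof of Lemma~\ref{S-lem:local-ulln}, $|\ell_t(\theta)-\ell_t(\widetilde\theta)|\le\sqrt2\,\omega_t(\delta)$ whenever $\|\theta-\widetilde\theta\|\le\delta$. Since $E[\omega_t(\delta)]\to0$ as $\delta\downarrow0$, $L$ is uniformly continuous on $\mathcal N$.

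\textbf{Unique maximizer.} By \eqref{S-eq:predictable-criterion} and the law of iterated expectations,
\[
 L(\theta_0)-L(\theta)=E\bigl[\operatorname{KL}\{h(\cdot\mid q_t(\theta_0))\,\Vert\,h(\cdot\mid q_t(\theta))\}\bigr]\ge0 .
\]
The integrand is finite because of the envelope \eqref{S-eq:local-density-envelope}. Suppose equality holds. Then the KL divergence is zero almost surely, so $h(\cdot\mid q_t(\theta))=h(\cdot\mid q_t(\theta_0))$ Lebesgue-a.e. Both sides are continuous in $u$, hence they agree everywhere.

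The B-splines $B_j$ are linearly independent. Therefore the coefficient vectors agree: $p_j(q_t(\theta))/m_j=p_j(q_t(\theta_0))/m_j$ for every $j$, that is, $p(q_t(\theta))=p(q_t(\theta_0))$. The embedding \eqref{eq:logit-embedding} is a bijection between $\mathbb R^{J-1}$ and the open simplex, because $QQ'=I$ and $Q\mathbf 1=0$ give $Q'Q=I-\mathbf 1\mathbf 1'/J$, so common shifts cancel. Hence $q_t(\theta)=q_t(\theta_0)$ almost surely. This holds for each $t$ by stationarity, and Assumption~3 then gives $\theta=\theta_0$.

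\textbf{Argmax step.} Fix $\varepsilon>0$. By compactness of $\{\theta\in\mathcal N:\|\theta-\theta_0\|\ge\varepsilon\}$ and continuity of $L$,
\[
 \eta:=L(\theta_0)-\sup_{\|\theta-\theta_0\|\ge\varepsilon}L(\theta)>0 .
\]
On the almost-sure event where $\sup_{\mathcal N}|L_T-L|<\eta/2$ eventually, we have $L(\widehat\theta_T)\ge L_T(\widehat\theta_T)-\eta/2\ge L_T(\theta_0)-\eta/2>L(\theta_0)-\eta$. This forces $\|\widehat\theta_T-\theta_0\|<\varepsilon$. Taking $\varepsilon$ along a countable sequence gives almost-sure convergence.

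\textbf{Fixed initialization.} Let $\widehat L_T$ be the criterion computed from the initialized states $\widehat q_t(\theta)=E\widehat z_t(\theta)$. By \eqref{S-eq:local-uniform-forgetting}, $\sup_\theta\|\widehat q_t(\theta)-q_t(\theta)\|\le\|E\|C\rho^t$ for all large $t$. The bounded score of Lemma~\ref{S-lem:score-bounds}, combined with a mean value bound in $q$, then gives
\[
 \sup_\theta|\widehat\ell_t(\theta)-\ell_t(\theta)|\le\sqrt2\,\|E\|C\rho^t .
\]
The finitely many early terms are bounded, because the initialized states stay bounded by the same recursion argument used to derive \eqref{S-eq:local-density-envelope}. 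Hence $\sup_\theta|\widehat L_T-L_T|=O(1/T)\to0$ almost surely, and the argmax argument applies unchanged.

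\textbf{Main obstacle.} The delicate point is the identification step, specifically passing from equality of densities to equality of states. This requires care with the null sets: the a.e.-in-$u$ and a.s.-in-$\omega$ qualifiers, and using continuity in $u$ to remove the Lebesgue-null set. The rest follows directly from Lemma~\ref{S-lem:local-ulln}.
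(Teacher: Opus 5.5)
Your proposal is correct and follows the paper's proof step by step. The steps are: uniform convergence from Lemma~\ref{S-lem:local-ulln}; the KL representation of $L(\theta_0)-L(\theta)$; linear independence of the B-splines together with injectivity of the weighted softmax to recover $q_t(\theta)=q_t(\theta_0)$; Assumption~3; the argmax argument; and local-uniform forgetting with the bounded score for the initialized criterion. You also spell out points the paper leaves implicit, namely continuity of $L$ via $\omega_t(\delta)$, stationarity to get the state equality for every $t$, and the explicit $O(1/T)$ initialization bound, and these are all sound.
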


\begin{proof}
Lemma~\ref{S-lem:local-ulln} gives uniform convergence of the criterion on
$\mathcal N$.  It remains to identify its population maximizer.

Since $q_t(\theta)$ is $\mathcal F_{t-1}$-measurable,
\eqref{S-eq:conditional-average-identity} applied to the log density ratio gives
\begin{align}
L(\theta_0)-L(\theta)
&=
E\left[
  \sum_{i=1}^{N_t}w_{t,i}
  \log
  \frac{
    h\{u_{t,i}\mid q_t(\theta_0)\}
  }{
    h\{u_{t,i}\mid q_t(\theta)\}
  }
\right]
\nonumber\\
&=
E\left[
  \int_0^1
  h\{u\mid q_t(\theta_0)\}
  \log
  \frac{
    h\{u\mid q_t(\theta_0)\}
  }{
    h\{u\mid q_t(\theta)\}
  }
  \,du
\right]
\nonumber\\
&=
E\left[
  {\rm KL}
  \left\{
    h(\cdot\mid q_t(\theta_0))
    \,\Vert\,
    h(\cdot\mid q_t(\theta))
  \right\}
\right]
\ge0.
\label{S-eq:population-kl-gap}
\end{align}
If equality holds, the nonnegative KL divergence is zero almost surely, so
\[
    h(\cdot\mid q_t(\theta))
    =
    h(\cdot\mid q_t(\theta_0))
    \quad\text{Lebesgue-a.e., a.s.}
\]
Linear independence of the normalized B-spline basis implies
\[
    p\{q_t(\theta)\}
    =
    p\{q_t(\theta_0)\}
    \quad\text{a.s.}
\]
The weighted-softmax mapping is one-to-one because
\[
    Q\log\{p(q)/m\}
    =
    QQ'q
    -
    \log\!\left(
      \sum_{j=1}^Jm_je^{(Q'q)_j}
    \right)
    Q\mathbf1_J
    =q.
\]
Hence
\[
    q_t(\theta)=q_t(\theta_0)
    \quad\text{a.s. for all }t,
\]
and Assumption~3 yields $\theta=\theta_0$.  Thus $L(\theta)$ has the unique
maximizer $\theta_0$ on $\mathcal N$.  Combining identification with
\eqref{S-eq:local-uniform-lln} gives
\[
    \widehat\theta_T\to\theta_0
    \qquad\text{a.s.}
\]
by the usual argmax argument.

Finally, local-uniform forgetting in
\eqref{S-eq:local-uniform-forgetting}, together with the bounded score, implies
that computing the score update equation from fixed initial values in the
same local basin changes the average criterion by $o(1)$ uniformly on
$\mathcal N$.  The initialized estimator therefore has the same consistency
limit.
\end{proof}

\subsection{Asymptotic Normality}
\label{S-subsec:asymptotic-normality}

We now derive the asymptotic distribution of the local estimator.
The derivatives of $X_t(\theta)$ are obtained by differentiating the score
update equation. Their existence, continuity, and moments are stated
directly below.

Let
\[
    S
    :=
    \begin{pmatrix}
        I_2 & 0 & \cdots & 0
    \end{pmatrix}
    \in\mathbb R^{2\times44},
\]
so that $z_t(\theta)=SX_t(\theta)$ and
\[
    q_t(\theta)=ESX_t(\theta).
\]
Take the compact neighborhood $\mathcal N$ in Assumption~2 to be convex,
shrinking it if necessary. The true parameter $\theta_0$ remains an interior point.

\subsubsection*{Derivative stochastic recursions}

Define the first parameter derivative matrix
\begin{equation}
    D_t(\theta)
    :=
    \frac{\partial X_t(\theta)}{\partial\theta'}
    \in\mathbb R^{44\times d_\theta}.
    \label{S-eq:first-derivative-state}
\end{equation}
Differentiating the SRE \eqref{S-eq:endpoint-sre} by the ordinary chain rule
gives
\begin{equation}
    D_{t+1}(\theta)
    =
    A_t\{X_t(\theta);\theta\}D_t(\theta)
    +
    C_t(\theta),
    \label{S-eq:first-derivative-sre}
\end{equation}
where
\[
    C_t(\theta)
    :=
    \left.
      \frac{\partial\Phi_t(x;\theta)}{\partial\theta'}
    \right|_{x=X_t(\theta)}
    \in\mathbb R^{44\times d_\theta}.
\]
In the empirical HAR specification the matrices $K_j(\theta)$ are
linear in the six HAR coefficients.  Because $r_t(z)$ is uniformly bounded,
$C_t(\theta)$ is locally uniformly bounded.

For the second derivative, define the third-order tensor
\begin{equation}
    R_t(\theta)
    :=
    \left[
      \frac{\partial^2 X_{t,r}(\theta)}
           {\partial\theta_j\,\partial\theta_k}
    \right]_
    {\substack{r=1,\ldots,44\\j,k=1,\ldots,d_\theta}}.
    \label{S-eq:second-derivative-state}
\end{equation}
Write $A_{t,ra}$ and $D_{t,aj}$ for the entries of
$A_t\{X_t(\theta);\theta\}$ and $D_t(\theta)$, respectively.  A second
application of the chain rule gives, for
$r=1,\ldots,44$ and $j,k=1,\ldots,d_\theta$,
\begin{align}
    R_{t+1,rjk}
    ={}&
    \sum_{a=1}^{44}A_{t,ra}R_{t,ajk}
    \nonumber\\
    &+
    \sum_{a=1}^{44}\sum_{b=1}^{44}
    \frac{\partial^2\Phi_{t,r}}
         {\partial x_a\,\partial x_b}
    D_{t,aj}D_{t,bk}
    \nonumber\\
    &+
    \sum_{a=1}^{44}
    \frac{\partial^2\Phi_{t,r}}
         {\partial x_a\,\partial\theta_k}
    D_{t,aj}
    +
    \sum_{a=1}^{44}
    \frac{\partial^2\Phi_{t,r}}
         {\partial x_a\,\partial\theta_j}
    D_{t,ak}
    \nonumber\\
    &+
    \frac{\partial^2\Phi_{t,r}}
         {\partial\theta_j\,\partial\theta_k}.
    \label{S-eq:second-derivative-sre}
\end{align}
All derivatives of $\Phi_t$ in \eqref{S-eq:second-derivative-sre} are evaluated
at $\{X_t(\theta),\theta\}$.  For the empirical HAR parameterization,
\[
    \frac{\partial^2\Phi_{t,r}}
         {\partial\theta_j\,\partial\theta_k}
    =0.
\]
Moreover, Lemma~\ref{S-lem:score-bounds} implies that the second derivatives of
$\Phi_t$ with respect to the state and the mixed state--parameter derivatives
are locally uniformly bounded for $X_t(\theta)$, $\theta\in\mathcal N$. Hence the nonhomogeneous
part of \eqref{S-eq:second-derivative-sre} is bounded in norm by a constant
multiple of
\[
    1+\|D_t(\theta)\|+\|D_t(\theta)\|^2.
\]

\paragraph{Assumption 4 (Derivative regularity and moments).}
On $\mathcal N$, $X_t(\theta)$ is twice continuously
differentiable in $\theta$ almost surely.  The derivative processes
$D_t(\theta)$ and $R_t(\theta)$ defined above are causal and almost surely
continuous in $\theta$.  The random-function process
$\{(X_t(\cdot),D_t(\cdot),R_t(\cdot))\}$ is jointly strictly stationary
and ergodic in the corresponding product of continuous-function spaces.
For some
$\delta>0$,
\begin{equation}
    E\left[
      \sup_{\theta\in\mathcal N}
      \|D_t(\theta)\|^{2+\delta}
    \right]
    <\infty,
    \qquad
    E\left[
      \sup_{\theta\in\mathcal N}
      \|R_t(\theta)\|_F^{1+\delta}
    \right]
    <\infty.
    \label{S-eq:derivative-moments}
\end{equation}
For any fixed initialization covered by Assumption~2, the corresponding first-
and second derivatives computed from these initial values satisfy
\begin{equation}
    \sup_{\theta\in\mathcal N}
    \left\{
      \|\widehat D_t(\theta)-D_t(\theta)\|
      +
      \|\widehat R_t(\theta)-R_t(\theta)\|_F
    \right\}
    \le C_D\rho_D^t
    \quad\text{a.s.}
    \label{S-eq:derivative-forgetting}
\end{equation}
for some finite random variable $C_D$ and some $\rho_D\in(0,1)$.

Define
\begin{equation}
    J_t(\theta)
    :=
    \frac{\partial q_t(\theta)}{\partial\theta'}
    =
    ES
    \frac{\partial X_t(\theta)}{\partial\theta'}
    \in\mathbb R^{(J-1)\times d_\theta},
    \label{S-eq:q-parameter-derivative}
\end{equation}
and write $J_t=J_t(\theta_0)$.

Let
\[
    \mathcal I_q(q)
    :=
    \int_0^1
        \psi(u,q)\psi(u,q)'
        h(u\mid q)\,du.
\]

\paragraph{Assumption 5 (Local information and variance stabilization).}
The negative expected Hessian
\begin{equation}
    \mathcal H
    :=
    -E[\nabla_\theta^2\ell_t(\theta_0)]
    =E\left[
        J_t'
        \mathcal I_q\{q_t(\theta_0)\}
        J_t
    \right]
    \label{S-eq:sensitivity-matrix}
\end{equation}
is positive definite.
There is a finite nonrandom matrix $\Omega$ such that
\begin{equation}
    \frac1T\sum_{t=1}^T J_t'\Sigma_tJ_t
    \longrightarrow\Omega
    \qquad\text{in probability},
    \label{S-eq:predictable-variance-limit}
\end{equation}
where
\[
    \Sigma_t
    :=\operatorname{Var}\{s_t(q_t(\theta_0))\mid\mathcal F_{t-1}\}.
\]
This condition is used for asymptotic normality, not consistency.  It
controls the aggregate effect of within-date heterogeneity and dependence
without requiring stationarity or ergodicity of the full observation
process.

\subsubsection*{Martingale score and Hessian}

The parameter score at the truth is
\begin{equation}
    S_t
    :=
    \nabla_\theta\ell_t(\theta_0)
    =
    J_t'
    s_t\{q_t(\theta_0)\}.
    \label{S-eq:parameter-score}
\end{equation}
Its conditional covariance is
\begin{equation}
    E[S_tS_t'\mid\mathcal F_{t-1}]
    =
    J_t'\Sigma_tJ_t,
    \label{S-eq:Omega-matrix}
\end{equation}
whose time average converges to $\Omega$ by Assumption~5.  Assumption~4
and the bounded DCD-SM score imply
$\sup_t E\|S_t\|^{2+\delta}<\infty$.

\begin{lemma}
\label{S-lem:score-clt}
Under Assumptions~1, 2, 4, and 5,
\begin{equation}
    \frac1{\sqrt T}\sum_{t=1}^TS_t
    \Longrightarrow
    N(0,\Omega).
    \label{S-eq:score-martingale-clt}
\end{equation}
\end{lemma}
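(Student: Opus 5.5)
The plan is to apply a martingale central limit theorem for vector martingale difference arrays: the Lindeberg--Feller type result of Brown (1971) or Hall and Heyde (1980, Cor.~3.1). I would use it in the form that requires (i) a martingale difference property, (ii) convergence in probability of the normalized sum of conditional covariances, and (iii) a conditional Lindeberg condition. By the Cramér--Wold device it suffices to treat scalar sums $T^{-1/2}\sum_t\lambda'S_t$ for fixed $\lambda\in\mathbb R^{d_\theta}$, with limiting variance $\lambda'\Omega\lambda$.

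\textbf{Step 1 (martingale difference).} $J_t=ESD_t(\theta_0)$ is $\mathcal F_{t-1}$-measurable, because $X_t(\theta)$ is causal and $\mathcal F_{t-1}$-measurable by Assumption~2 and $D_t$ is causal by Assumption~4. The score $\psi(\cdot,q)$ is bounded by Lemma~\ref{S-lem:score-bounds}, so the conditional-average identity \eqref{S-eq:conditional-average-identity} applies with $a_t(u)=\psi\{u,q_t(\theta_0)\}$. This gives
\[
E[s_t\{q_t(\theta_0)\}\mid\mathcal F_{t-1}]=\int_0^1\psi\{u,q_t(\theta_0)\}h\{u\mid q_t(\theta_0)\}\,du=0.
\]
The final equality is the score identity: $\int\partial_q h=\partial_q\int h=0$, justified by the density envelope \eqref{S-eq:local-density-envelope} and bounded derivatives. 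Hence $E(S_t\mid\mathcal F_{t-1})=J_t'\cdot0=0$. Integrability holds because $\|S_t\|\le\sqrt2\|J_t\|$ and $E\|J_t\|^{2+\delta}\le\|ES\|^{2+\delta}E\sup_\theta\|D_t\|^{2+\delta}<\infty$.

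\textbf{Step 2 (conditional variance).} By \eqref{S-eq:Omega-matrix}, $T^{-1}\sum_tE[(\lambda'S_t)^2\mid\mathcal F_{t-1}]=\lambda'(T^{-1}\sum_tJ_t'\Sigma_tJ_t)\lambda$. This converges in probability to $\lambda'\Omega\lambda$ directly by Assumption~5, so no stationarity of the observation process is needed.

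\textbf{Step 3 (Lindeberg).} This is the step that needs care, since $S_t$ need not be stationary; only $J_t$ is. I would use a Lyapunov-type bound together with the envelope $\|S_t\|\le\sqrt2\|\lambda\|\,\|J_t\|$. For $\varepsilon>0$,
\[
\frac1T\sum_{t=1}^TE\bigl[(\lambda'S_t)^2\mathbf 1\{|\lambda'S_t|>\varepsilon\sqrt T\}\bigr]
\le\frac{2\|\lambda\|^2}{T}\sum_{t=1}^TE\bigl[\|J_t\|^2\mathbf 1\{\sqrt2\|\lambda\|\|J_t\|>\varepsilon\sqrt T\}\bigr].
\]
By stationarity of $D_t(\theta_0)$ (Assumption~4), every summand equals $E[\|J_1\|^2\mathbf 1\{\cdots\}]$. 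This tends to zero by dominated convergence because $E\|J_1\|^2<\infty$; alternatively, it is $O(T^{-\delta/2})$ via the $2+\delta$ moment. Unconditional Lindeberg implies the conditional Lindeberg condition in probability, by Markov's inequality. The martingale CLT then gives $T^{-1/2}\sum_t\lambda'S_t\Rightarrow N(0,\lambda'\Omega\lambda)$, and Cramér--Wold yields \eqref{S-eq:score-martingale-clt}.

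The only delicate point is Step~3: stationarity must be drawn from the predictable factor $J_t$ rather than from $S_t$ itself. Using the uniform bound on $\psi$ reduces everything to moments of $J_t$.
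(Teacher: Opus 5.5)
Your proposal is correct and follows essentially the same route as the paper. The paper also gets the martingale-difference property from the conditional-average identity and the density-score identity, takes the conditional variance directly from Assumption~5, and obtains conditional Lindeberg from the $(2+\delta)$ moment of $J_t$ combined with the bounded spline score, before invoking the martingale CLT; your Cram\'er--Wold reduction and the dominated-convergence alternative are only cosmetic differences, though the envelope in Step~3 should read $|\lambda'S_t|\le\sqrt2\|\lambda\|\,\|J_t\|$.
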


\begin{proof}
The solution $X_t(\theta_0)$ and its derivative $D_t$ are measurable
with respect to $\mathcal F_{t-1}$.  Hence $J_t$ is
$\mathcal F_{t-1}$-measurable.  Individual scores may have nonzero
conditional means.  Applying \eqref{S-eq:conditional-average-identity} to
$\psi\{u,q_t(\theta_0)\}$ instead gives
\begin{align*}
    E\left[
      s_t\{q_t(\theta_0)\}
      \mid\mathcal F_{t-1}
    \right]
    &=\int_0^1\psi\{u,q_t(\theta_0)\}
       \bar\phi_t(u)\,du\\
    &=\int_0^1\psi\{u,q_t(\theta_0)\}
       h\{u\mid q_t(\theta_0)\}\,du
      =0.
\end{align*}
The final equality is the density-score identity for the weighted-average-density
model.  Neither identical conditional marginals nor conditional independence
across $i$ is used in this argument.
Therefore
\[
    E(S_t\mid\mathcal F_{t-1})=0,
\]
and $\{S_t,\mathcal F_t\}$ is a martingale difference sequence.  Its
conditional covariance is
\[
    E(S_tS_t'\mid\mathcal F_{t-1})
    =
    J_t'\Sigma_tJ_t.
\]
By the variance-stabilization condition in Assumption~5,
\[
    \frac1T\sum_{t=1}^T
    E(S_tS_t'\mid\mathcal F_{t-1})
    \longrightarrow
    \Omega
    \qquad\text{in probability}.
\]
The uniform $(2+\delta)$ moment bound implies the conditional Lindeberg
condition in probability: for every $\epsilon>0$, the expectation of
\[
    \frac1T\sum_{t=1}^T
    E\left[\|S_t\|^2\mathbf1\{\|S_t\|>\epsilon\sqrt T\}
      \,\middle|\,\mathcal F_{t-1}\right]
\]
is $O(T^{-\delta/2})$.  The martingale central limit
theorem gives \eqref{S-eq:score-martingale-clt}.
\end{proof}

For $\theta\in\mathcal N$, define
\[
    H_t\{q_t(\theta)\}
    :=
    \frac{\partial s_t\{q_t(\theta)\}}{\partial q'}
    =
    \sum_{i=1}^{N_t}w_{t,i}
      \frac{\partial\psi\{u_{t,i},q_t(\theta)\}}{\partial q'}.
\]
Let
\[
    R_{q,t}(\theta)
    :=
    \left[
      \frac{\partial^2 q_{t,r}(\theta)}
           {\partial\theta_j\,\partial\theta_k}
    \right]_
    {\substack{r=1,\ldots,J-1\\j,k=1,\ldots,d_\theta}}
\]
denote the third-order tensor of second parameter derivatives of the embedded
spline state.  Since $q_t(\theta)=ESX_t(\theta)$,
\[
    R_{q,t,rjk}(\theta)
    =
    \sum_{a=1}^{44}
      (ES)_{ra}R_{t,ajk}(\theta).
\]

Write $J_{t,\cdot j}(\theta)$ for the $j$th column of
$J_t(\theta)$.  The $(j,k)$ element of the criterion Hessian is
\begin{equation}
\begin{split}
    \frac{\partial^2\ell_t(\theta)}
         {\partial\theta_j\,\partial\theta_k}
    ={}&
    J_{t,\cdot j}(\theta)'
    H_t\{q_t(\theta)\}
    J_{t,\cdot k}(\theta)\\
    &+
    \sum_{r=1}^{J-1}
      s_{t,r}\{q_t(\theta)\}
      R_{q,t,rjk}(\theta).
\end{split}
\label{S-eq:likelihood-hessian}
\end{equation}

At the truth, the conditional expectation of the second term is zero
because $R_{q,t}(\theta_0)$ is predictable and the weighted cross-sectional score has
conditional mean zero.  Applying \eqref{S-eq:conditional-average-identity}
to the score Jacobian and then using the density information identity gives
\begin{align*}
    E\left[
      H_t\{q_t(\theta_0)\}
      \mid\mathcal F_{t-1}
    \right]
    &=\int_0^1
      \frac{\partial\psi\{u,q_t(\theta_0)\}}{\partial q'}
      \bar\phi_t(u)\,du\\
    &=\int_0^1
      \frac{\partial\psi\{u,q_t(\theta_0)\}}{\partial q'}
      h\{u\mid q_t(\theta_0)\}\,du\\
    &=-\mathcal I_q\{q_t(\theta_0)\}.
\end{align*}
This is an identity for the expected average Hessian, not an information
equality for the actual weighted cross-sectional score covariance.
Consequently,
\begin{equation}
    -E\{\nabla_\theta^2\ell_t(\theta_0)\}
    =
    \mathcal H.
    \label{S-eq:population-hessian}
\end{equation}

\begin{lemma}
\label{S-lem:hessian-ulln}
Under Assumptions~1, 2, and 4,
\begin{equation}
    \sup_{\theta\in\mathcal N}
    \left\|
      \frac1T\sum_{t=1}^T
      \nabla_\theta^2\ell_t(\theta)
      -
      E\{\nabla_\theta^2\ell_t(\theta)\}
    \right\|
    \longrightarrow0
    \qquad\text{a.s.}
    \label{S-eq:hessian-ulln}
\end{equation}
The map
$\theta\mapsto E\{\nabla_\theta^2\ell_t(\theta)\}$
is continuous at $\theta_0$.
\end{lemma}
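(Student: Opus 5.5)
The strategy mirrors the proof of Lemma~\ref{S-lem:local-ulln}. We decompose the Hessian into a martingale-difference part and a predictable part, and apply the ergodic theorem only to predictable quantities. By \eqref{S-eq:likelihood-hessian}, write
\[
\nabla_\theta^2\ell_t(\theta)=J_t(\theta)'H_t\{q_t(\theta)\}J_t(\theta)+\sum_r s_{t,r}\{q_t(\theta)\}R_{q,t,r\cdot\cdot}(\theta)=:A_t(\theta)+B_t(\theta).
\]
Both $J_t(\theta)$ and $R_{q,t}(\theta)$ are $\mathcal F_{t-1}$-measurable. By \eqref{S-eq:conditional-average-identity}, the conditional expectations are
\[
\bar A_t(\theta)=J_t(\theta)'\Big[\int_0^1\partial_{q'}\psi\{u,q_t(\theta)\}h\{u\mid q_t(\theta_0)\}\,du\Big]J_t(\theta),
\]
and
\[
\bar B_t(\theta)=\sum_r\Big[\int_0^1\psi_r\{u,q_t(\theta)\}h\{u\mid q_t(\theta_0)\}\,du\Big]R_{q,t,r\cdot\cdot}(\theta).
\]
Each of these is a measurable function of the stationary ergodic family $(X_t(\cdot),D_t(\cdot),R_t(\cdot))$ from Assumption~4. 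By Lemma~\ref{S-lem:score-bounds}, $\|\partial_{q'}\psi\|\le1$ and $\|\psi\|\le\sqrt2$. Hence the envelopes are
\[
\sup_\theta\|A_t\|,\ \sup_\theta\|\bar A_t\|\le c\sup_\theta\|D_t\|^2,\qquad \sup_\theta\|B_t\|,\ \sup_\theta\|\bar B_t\|\le c\sup_\theta\|R_t\|_F.
\]
Both are integrable, with $1+\delta/2$ and $1+\delta$ moments respectively, by \eqref{S-eq:derivative-moments}. The constant $c$ depends only on $\|ES\|$.

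\textbf{Pointwise convergence.} Fix $\theta$. The predictable averages $T^{-1}\sum_t\bar A_t(\theta)$ and $T^{-1}\sum_t\bar B_t(\theta)$ converge almost surely to $E\bar A_t(\theta)+E\bar B_t(\theta)=E\nabla^2_\theta\ell_t(\theta)$ by the ergodic theorem. The remainders $A_t-\bar A_t$ and $B_t-\bar B_t$ are martingale differences, but they are no longer bounded. For these I would invoke a martingale strong law in the form $\sum_t E|\xi_t|^p/t^p<\infty$ for some $p\in(1,2]$. This holds with $p=1+\delta/2$ because the $p$-th moments are uniformly bounded. Chow's theorem then gives almost sure convergence of $\sum_t\xi_t/t$, and Kronecker's lemma gives $T^{-1}\sum_t\xi_t\to0$ almost surely.

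\textbf{Equicontinuity.} For uniformity, define the modulus
\[
\omega_t(\delta')=\sup_{\|\theta-\tilde\theta\|\le\delta'}\{\|\nabla^2\ell_t(\theta)-\nabla^2\ell_t(\tilde\theta)\|+\|\bar A_t(\theta)+\bar B_t(\theta)-\bar A_t(\tilde\theta)-\bar B_t(\tilde\theta)\|\}.
\]
The main obstacle is that $\nabla^2\ell_t$ itself is not a function of a stationary process, since the observations are nonstationary. As in Lemma~\ref{S-lem:local-ulln}, I would therefore bound $\omega_t$ by a stationary, predictable dominating quantity. The Lipschitz bounds \eqref{S-eq:score-taylor-bound} and \eqref{S-eq:score-jacobian-taylor-bound}, together with $\|\psi\|\le\sqrt2$ and $\|\partial\psi\|\le1$, bound the increment uniformly in the data. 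One gets a bound by a polynomial $P$ in $\sup_\theta\|D_t\|$ and $\sup_\theta\|R_t\|_F$, multiplied by the continuity moduli of $q_t(\cdot)$, $D_t(\cdot)$ and $R_t(\cdot)$. The terms are of the form $C_J\|q_t(\theta)-q_t(\tilde\theta)\|\,\|D_t\|^2$, $\|D_t(\theta)-D_t(\tilde\theta)\|\,\|D_t\|$, $\|R_t(\theta)-R_t(\tilde\theta)\|_F$, and $\|q_t(\theta)-q_t(\tilde\theta)\|\,\|R_t\|_F$. Call this bound $\bar\omega_t(\delta')$. It does not involve the $u_{t,i}$. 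It is stationary and ergodic, dominated by the integrable envelope above, and tends to $0$ almost surely as $\delta'\downarrow0$ by the almost sure continuity in Assumption~4. Dominated convergence gives $E\bar\omega_t(\delta')\to0$, and the ergodic theorem gives $T^{-1}\sum_t\bar\omega_t(\delta')\to E\bar\omega_t(\delta')$ almost surely.

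A possible gap here is the product $\|q_t(\theta)-q_t(\tilde\theta)\|\,\|R_t\|_F$. It is integrable only if $\|q\|$ is bounded, and that holds since $\sup\|q_t\|\le M_q$. The product $\|D\|^2$ times a bounded modulus is integrable because the moment of $\|D\|$ exceeds $2$.

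\textbf{Conclusion.} Combining the pieces over a finite $\delta'$-net of the compact $\mathcal N$ and then letting $\delta'\downarrow0$ yields \eqref{S-eq:hessian-ulln}. The same domination, through $\|E\nabla^2\ell_t(\theta)-E\nabla^2\ell_t(\theta_0)\|\le E\bar\omega_t(\|\theta-\theta_0\|)\to0$, gives continuity of $\theta\mapsto E\{\nabla^2_\theta\ell_t(\theta)\}$ at $\theta_0$ (indeed on all of $\mathcal N$).
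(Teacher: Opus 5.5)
Your proposal is correct and follows essentially the same route as the paper. You split the Hessian into its predictable conditional mean, handled by the ergodic theorem under Assumption~4, and a martingale-difference remainder with uniformly bounded $p$th moments for some $1<p\le2$, handled by a martingale strong law; you then get uniformity from a data-free predictable modulus dominated by a multiple of $\sup_{\theta}\{\|D_t(\theta)\|^2+\|R_t(\theta)\|_F\}$ together with a finite net. Your explicit Lipschitz-times-continuity-moduli bound $\bar\omega_t(\delta')$ plays the role of the paper's $\sup_u\|\Xi_t(u,\theta)-\Xi_t(u,\widetilde\theta)\|$ for the one-observation Hessian contribution, and continuity at $\theta_0$ follows in both arguments from the same dominated-convergence step.
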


\begin{proof}
Lemma~\ref{S-lem:score-bounds} gives
\[
    \sup_{t,q}\|H_t(q)\|\le1
\]
and a uniform bound on the third-order tensor of first derivatives of
$H_t(q)$ with respect to $q$.  Therefore the first term of
\eqref{S-eq:likelihood-hessian} is bounded by a constant multiple of
$\|D_t(\theta)\|^2$, while the second is bounded by a constant multiple of
$\|R_t(\theta)\|_F$. The moment bounds in Assumption~4 apply to both
terms uniformly over $\theta\in\mathcal N$.

To separate the observations from $q_t(\theta)$ and its derivatives, let
$\Xi_t(u,\theta)$ be the one-observation Hessian contribution, with entries
\[
    \Xi_{t,jk}(u,\theta)
    =J_{t,\cdot j}(\theta)'
      \frac{\partial\psi\{u,q_t(\theta)\}}{\partial q'}
      J_{t,\cdot k}(\theta)
      +\sum_{r=1}^{J-1}\psi_r\{u,q_t(\theta)\}
        R_{q,t,rjk}(\theta).
\]
Then
\begin{align*}
    \nabla_\theta^2\ell_t(\theta)
    &=\sum_i w_{t,i}\Xi_t(u_{t,i},\theta),\\
    A_t^H(\theta)
    &:=E\{\nabla_\theta^2\ell_t(\theta)\mid\mathcal F_{t-1}\}\\
    &=\int_0^1\Xi_t(u,\theta)h\{u\mid q_t(\theta_0)\}\,du.
\end{align*}
By Assumption~4, $A_t^H(\theta)$ is stationary and ergodic for each fixed
$\theta$.  Moreover,
\[
    \sup_{u,\theta}\|\Xi_t(u,\theta)\|
    \le K_t
    :=C\sup_{\theta\in\mathcal N}
       \{\|D_t(\theta)\|^2+\|R_t(\theta)\|_F\},
\]
where $K_t$ is stationary and ergodic and has a finite $p$th moment for
some $1<p\le2$, by \eqref{S-eq:derivative-moments}.  Consequently the
centered Hessian is a martingale difference with uniformly bounded $p$th
moments.  Its time average converges to zero almost surely by the
martingale strong law, since $\sum_t t^{-p}<\infty$.  The ergodic theorem
applied to $A_t^H(\theta)$ yields pointwise convergence of the average
Hessian to its expectation, which is independent of $t$.

For uniformity, use the predictable modulus
\[
    \omega_t^H(\varepsilon)
    :=\sup_{\substack{\theta,\widetilde\theta\in\mathcal N\\
                      \|\theta-\widetilde\theta\|\le\varepsilon}}
       \sup_{u\in[0,1]}
       \|\Xi_t(u,\theta)-\Xi_t(u,\widetilde\theta)\|.
\]
The score bounds and Assumption~4 imply almost-sure convergence of this
modulus to zero as $\varepsilon\downarrow0$, uniformly in $u$.  It is a
stationary and ergodic function of $q_t(\theta)$ and its derivative families and
is bounded by $2K_t$.  It bounds both the observed Hessian modulus and
that of its conditional expectation.  Dominated convergence, the ergodic
theorem for this modulus, and the finite-net argument in
Lemma~\ref{S-lem:local-ulln} prove \eqref{S-eq:hessian-ulln}. The bound $2K_t$
gives continuity of the expectation by dominated convergence.
\end{proof}

\subsubsection*{Asymptotic-normality theorem}

\begin{theorem}
\label{S-thm:asymptotic-normality}
Suppose that Assumptions~1--5 hold.  Then the locally stable estimator in
\eqref{S-eq:local-estimator} satisfies
\begin{equation}
    \sqrt T(\widehat\theta_T-\theta_0)
    \Longrightarrow
    N\left(
      0,\,
      \mathcal H^{-1}\Omega\mathcal H^{-1}
    \right).
    \label{S-eq:asymptotic-normality}
\end{equation}
If $\Omega$ is positive definite, the limiting covariance matrix is positive
definite.
\end{theorem}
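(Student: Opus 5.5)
The argument is the standard mean-value expansion of the first-order condition around $\theta_0$, built from the three ingredients already established. By Theorem~\ref{S-thm:consistency}, $\widehat\theta_T\to\theta_0$ almost surely. Since $\theta_0$ is interior to $\mathcal N$ (Assumption~2), $\widehat\theta_T$ is an interior point with probability tending to one. The first-order condition $\sum_t\nabla_\theta\ell_t(\widehat\theta_T)=0$ therefore holds on that event. Expanding each coordinate by the mean-value theorem along the segment joining $\widehat\theta_T$ and $\theta_0$ (which stays in $\mathcal N$ by convexity) gives
\[
 0=\frac1{\sqrt T}\sum_{t=1}^TS_t+\Bigl\{\frac1T\sum_{t=1}^T\nabla_\theta^2\ell_t(\bar\theta_T)\Bigr\}\sqrt T(\widehat\theta_T-\theta_0),
\]
with row-specific intermediate points $\bar\theta_T$. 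Twice continuous differentiability of $\ell_t$ in $\theta$ follows from Assumption~4 together with smoothness of $\log h$ in $q$.

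For the Hessian term, I combine the uniform convergence in Lemma~\ref{S-lem:hessian-ulln} with continuity of $\theta\mapsto E\{\nabla_\theta^2\ell_t(\theta)\}$ at $\theta_0$ and the consistency $\bar\theta_T\to\theta_0$. This shows that the averaged Hessian at $\bar\theta_T$ converges almost surely to $-\mathcal H$, by \eqref{S-eq:population-hessian}. Because $\mathcal H$ is positive definite (Assumption~5), the averaged Hessian is invertible with probability tending to one. Solving the expansion gives
\[
 \sqrt T(\widehat\theta_T-\theta_0)=\mathcal H^{-1}T^{-1/2}\sum_tS_t+o_p(1).
\]
Lemma~\ref{S-lem:score-clt} then yields the limit $N(0,\mathcal H^{-1}\Omega\mathcal H^{-1})$ by Slutsky's theorem, using the symmetry of $\mathcal H$. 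If $\Omega$ is positive definite, $\mathcal H^{-1}\Omega\mathcal H^{-1}$ is a congruence of a positive definite matrix by a nonsingular symmetric one, so it is positive definite.

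The remaining issue, and the main obstacle, is the initialization statement in Theorem~\ref{thm:asymptotic-normality}. The feasible criterion uses $\widehat X_t,\widehat D_t,\widehat R_t$, and the goal is to show that the feasible score and Hessian averages differ from the stationary ones by $o_p(T^{-1/2})$ and $o_p(1)$, respectively, uniformly on $\mathcal N$. I would proceed as follows.
\begin{enumerate}
\item The per-date score is $J_t(\theta)'s_t\{q_t(\theta)\}$. By Lemma~\ref{S-lem:score-bounds}, $s_t$ is bounded and Lipschitz in $q$ with constant one.
\item The difference of the per-date score under the two initializations is therefore bounded by
\[
 \|\widehat D_t-D_t\|+\sqrt2\,\|D_t\|\,\|\widehat X_t-X_t\|\cdot\text{const}.
\]
\item By \eqref{S-eq:local-uniform-forgetting} and \eqref{S-eq:derivative-forgetting}, this is at most $C\rho^t(1+\|D_t\|)$.
\item Summing over $t$ and dividing by $\sqrt T$ gives $T^{-1/2}\sum_t\rho^t(1+\|D_t\|)$. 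This is $o_p(1)$ because $\sum_t\rho^t\|D_t\|<\infty$ almost surely: $\|D_t\|$ has finite moments and is stationary, so $\|D_t\|=o(t)$ almost surely, which kills the geometric factor.
\item The Hessian remainder is handled in the same way, using the $\|D_t\|^2+\|R_t\|_F$ envelope and the Jacobian Lipschitz bound \eqref{S-eq:score-jacobian-taylor-bound}.
\end{enumerate}
With these two remainders controlled, the feasible estimator obeys the same expansion and hence the same limit.
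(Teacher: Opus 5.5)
Your proposal is correct and follows essentially the paper's route: consistency and interiority give the first-order condition, a Taylor expansion is combined with the Hessian uniform law and continuity at $\theta_0$, and the martingale score CLT and Slutsky's theorem finish the argument (the paper uses the integral form of the expansion rather than row-specific mean values, which avoids measurability questions about the intermediate points). Your initialization argument also matches the paper's, which bounds the score and Hessian discrepancies by $C_*\rho_*^tM_t$ and uses $E\sum_t\rho_*^tM_t<\infty$ instead of your $\|D_t\|=o(t)$ step; note only that the envelope should involve $\sup_{\theta\in\mathcal N}\|D_t(\theta)\|$, and that the forgetting bounds hold only for sufficiently large $t$.
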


\begin{proof}
By Theorem~\ref{S-thm:consistency},
$\widehat\theta_T\to\theta_0$ almost surely.  Since $\theta_0$ is an interior
point of the convex stable neighborhood $\mathcal N$, the first-order
condition holds with probability tending to one.  The integral Taylor
expansion gives
\[
\begin{split}
    0
    ={}&
    \nabla_\theta L_T(\theta_0)\\
    &+
    \left[
      \int_0^1
      \nabla_\theta^2L_T
      \{\theta_0+x(\widehat\theta_T-\theta_0)\}
      \,dx
    \right]
    (\widehat\theta_T-\theta_0).
\end{split}
\]
Lemma~\ref{S-lem:hessian-ulln},
\eqref{S-eq:population-hessian}, and consistency imply that the matrix in
brackets converges in probability to $-\mathcal H$.  Lemma~\ref{S-lem:score-clt}
gives
\[
    \sqrt T\,\nabla_\theta L_T(\theta_0)
    =
    \frac1{\sqrt T}\sum_{t=1}^TS_t
    \Longrightarrow N(0,\Omega).
\]
Assumption~5 and Slutsky's theorem therefore yield
\eqref{S-eq:asymptotic-normality}.
\end{proof}

\subsubsection*{Initialization and sandwich inference}

Write $\widehat\ell_t(\theta)$ for the criterion evaluated from initialized
solutions of the score update equation, with the same known reference
distribution and Fisher projection matrices as before. Define
\[
 S_t(\theta)=\nabla_\theta\ell_t(\theta),\qquad
 \widehat S_t(\theta)=\nabla_\theta\widehat\ell_t(\theta),
\]
so that $S_t(\theta_0)=S_t$ in \eqref{S-eq:parameter-score}.
The score bounds, the Hessian formula \eqref{S-eq:likelihood-hessian}, and
the forgetting conditions in Assumptions~2 and 4 imply, for all sufficiently
large $t$,
\begin{equation}
\begin{split}
 \sup_{\theta\in\mathcal N}
 \bigl\{&\|\widehat S_t(\theta)-S_t(\theta)\|\\
       &+\|\nabla_\theta^2\widehat\ell_t(\theta)
                      -\nabla_\theta^2\ell_t(\theta)\|\bigr\}
 \le C_*\rho_*^t M_t,
 \qquad
 M_t=1+\sup_{\theta\in\mathcal N}
        \{\|D_t(\theta)\|^2+\|R_t(\theta)\|_F\},
\end{split}
\label{S-eq:initialization-derivative-bound}
\end{equation}
where $C_*<\infty$ almost surely and $\rho_*\in(0,1)$ is deterministic.
Here the bounded and Lipschitz spline score controls the score difference.
The bounded score Jacobian and its Lipschitz bound control the Hessian
difference. The remaining factors are bounded by $M_t$.

Assumption~4 makes $M_t$ stationary with finite expectation. In particular,
\[
 E\left[\sum_{t=1}^\infty\rho_*^t M_t\right]
 =\frac{\rho_*}{1-\rho_*}E[M_1]<\infty,
\]
so the weighted series is finite almost surely. The finitely many initial
terms are also finite, by continuity on the compact parameter neighborhood.
Thus both derivative differences in
\eqref{S-eq:initialization-derivative-bound} are summable uniformly in
$\theta$, almost surely. Their partial sums are therefore $O_p(1)$.
After the $T^{-1/2}$ score and $T^{-1}$ Hessian normalizations, these
differences vanish. Consistency of the initialized estimator and the same
Taylor argument then give the limit in \eqref{S-eq:asymptotic-normality}.
In the remainder, $\widehat\theta_T$ denotes the estimator computed from
these initial values.

\begin{remark}
Under Assumption~1, $\{S_t\}$
is a martingale difference sequence.  Its covariance limit is the
conditional-variance limit \eqref{S-eq:predictable-variance-limit}. The score
process itself need not be stationary.  The uniform $(2+\delta)$ moment
bound and a martingale strong law imply
\[
    \frac1T\sum_{t=1}^T
    \{S_tS_t'-E[S_tS_t'\mid\mathcal F_{t-1}]\}
    \longrightarrow0\quad\text{a.s.}
\]
For each fixed lag $k\ge1$, $S_tS_{t-k}'$ is also a martingale difference
with uniformly bounded moments of some order greater than one, so its
sample average converges to zero almost surely.  Thus the outer-product
estimator and a fixed-lag HAC estimator, including HAC(22)
\citep{NeweyWest1987}, converge in probability to $\Omega$.
This equivalence concerns parameter scores under Assumption~1. The
empirical parameter tests use lag zero; the HAC(22) calculations for
OOS forecast-score differences concern a different sequence and are
not justified by this martingale argument.
The continuity and moment conditions in Assumption~4 justify evaluating
these estimators at $\widehat\theta_T$. The forgetting conditions handle
initialization. In the special case
where the score and its conditional covariance form a jointly stationary
and ergodic process, the limit simplifies to $\Omega=E[S_tS_t']$.
The sandwich form
$\mathcal H^{-1}\Omega\mathcal H^{-1}$ remains necessary because
Assumption~1 allows both heterogeneity and dependence within a cross section,
so in general $\Omega\ne\mathcal H$.  More explicitly, with
$\psi_{t,i}=\psi\{u_{t,i},q_t(\theta_0)\}$,
\[
    \Sigma_t
    =\sum_{i=1}^{N_t}\sum_{j=1}^{N_t}w_{t,i}w_{t,j}
      \operatorname{Cov}(\psi_{t,i},\psi_{t,j}\mid\mathcal F_{t-1}).
\]
Under identical conditionally independent marginals,
\[
 \Sigma_t=(\sum_i w_{t,i}^2)\mathcal I_q\{q_t(\theta_0)\}.
\]
For equal weights, this reduces to $\mathcal I_q\{q_t(\theta_0)\}/N_t$.
Under heterogeneous marginals, conditional independence alone does not
give this simplification, because individual score covariances may differ
and conditional score means may be nonzero.  The variance estimator must use the outer
products of the date-level parameter scores, using lag zero in the main
implementation, rather than treating constituent scores as independent
replications. Wald tests also require a nonsingular asymptotic covariance
for the restrictions being tested.
\end{remark}

\begin{corollary}[Sandwich covariance and Wald inference]
\label{S-cor:oracle-sandwich-wald}
\leavevmode\par\noindent
Suppose that Assumptions~1--5 hold. At the estimator computed from fixed initial values,
let $\widehat S_t=\widehat S_t(\widehat\theta_T)$ and
$\bar S_T=T^{-1}\sum_t\widehat S_t$. Define the time-average matrices
\begin{align}
 \widehat{\mathcal H}_T
   &=-\frac1T\sum_{t=1}^T
               \nabla_\theta^2\widehat\ell_t(\widehat\theta_T),
       \label{S-eq:oracle-sensitivity-estimate}\\
 \widehat\Omega_T
   &=\frac1T\sum_{t=1}^T
       (\widehat S_t-\bar S_T)(\widehat S_t-\bar S_T)'.
       \label{S-eq:oracle-meat-estimate}
\end{align}
Then $\widehat{\mathcal H}_T\to\mathcal H$ and
$\widehat\Omega_T\to\Omega$ in probability. Consequently,
\begin{equation}
 \widehat V_T
   =\frac1T\widehat{\mathcal H}_T^{-1}
              \widehat\Omega_T\widehat{\mathcal H}_T^{-\prime}
   =H_T^{-1}B_TH_T^{-\prime},\qquad
 H_T=T\widehat{\mathcal H}_T,\quad B_T=T\widehat\Omega_T,
 \label{S-eq:oracle-sandwich-estimate}
\end{equation}
satisfies
\[
 T\widehat V_T\longrightarrow
 \mathcal V:=\mathcal H^{-1}\Omega\mathcal H^{-1}
 \quad\text{in probability}.
\]
This agrees with \eqref{eq:estimated-sandwich}.

For a fixed full-row-rank restriction matrix $R$ with $R\theta_0=r$ and
$R\mathcal V R'$ positive definite,
\begin{equation}
 W_T=(R\widehat\theta_T-r)'
          (R\widehat V_TR')^{-1}(R\widehat\theta_T-r)
       \Longrightarrow\chi^2_{\operatorname{rank}(R)}.
 \label{S-eq:oracle-wald-limit}
\end{equation}
The even and odd block tests each have three restrictions. Their joint
test has six. Positive definiteness is required for the covariance of the
tested restrictions, without requiring $\Omega$ to be positive definite
on every parameter direction.
\end{corollary}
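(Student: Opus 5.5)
The plan is to establish the two consistency statements first and then derive the Wald limit from them together with Theorem~\ref{S-thm:asymptotic-normality} by the continuous mapping theorem.

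\textbf{Hessian estimator.} I decompose
\[
\widehat{\mathcal H}_T-\mathcal H
=\Bigl[-\tfrac1T\sum_t\{\nabla_\theta^2\widehat\ell_t(\widehat\theta_T)-\nabla_\theta^2\ell_t(\widehat\theta_T)\}\Bigr]
+\Bigl[-\tfrac1T\sum_t\nabla_\theta^2\ell_t(\widehat\theta_T)-\mathcal H\Bigr].
\]
The first bracket is controlled by the initialization bound \eqref{S-eq:initialization-derivative-bound}. Since $\sum_t\rho_*^tM_t<\infty$ almost surely, this term is $O(T^{-1})$ uniformly in $\theta$. For the second bracket, Lemma~\ref{S-lem:hessian-ulln} gives uniform convergence of the average Hessian to $E\{\nabla^2_\theta\ell_t(\theta)\}$. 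That expectation is continuous at $\theta_0$ and equals $-\mathcal H$ there by \eqref{S-eq:population-hessian}. Theorem~\ref{S-thm:consistency} gives $\widehat\theta_T\to\theta_0$, so the second bracket vanishes. Because $\mathcal H$ is positive definite (Assumption~5), $\widehat{\mathcal H}_T$ is invertible with probability tending to one, and its inverse converges in probability to $\mathcal H^{-1}$.

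\textbf{Meat estimator.} Again I replace initialized scores by stationary-solution scores at a cost of $O_p(T^{-1})$, using \eqref{S-eq:initialization-derivative-bound}. At $\theta_0$, the remark preceding the corollary gives
\[
\tfrac1T\sum_t S_tS_t'-\tfrac1T\sum_t E(S_tS_t'\mid\mathcal F_{t-1})\to0,
\]
and Assumption~5 then yields $\tfrac1T\sum_tS_tS_t'\to\Omega$. Moreover $\bar S$ evaluated at $\theta_0$ is $O_p(T^{-1/2})$ by Lemma~\ref{S-lem:score-clt}, so centering is asymptotically negligible.

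The main obstacle is moving from $\theta_0$ to $\widehat\theta_T$ in the outer products. I will use a mean-value bound:
\[
\|S_t(\theta)-S_t(\theta_0)\|\le\sup_{\mathcal N}\|\nabla^2_\theta\ell_t\|\,\|\theta-\theta_0\|\le K_t\|\theta-\theta_0\|,
\]
where $K_t$ is the envelope from the proof of Lemma~\ref{S-lem:hessian-ulln}. Writing $S_t(\widehat\theta_T)S_t(\widehat\theta_T)'-S_tS_t'$ in terms of these increments, its average is bounded by
\[
\|\widehat\theta_T-\theta_0\|\,\tfrac1T\sum_t\bigl(2\|S_t\|K_t+K_t^2\|\widehat\theta_T-\theta_0\|\bigr).
\]
To bound this I need $T^{-1}\sum_t\|S_t\|K_t=O_p(1)$ and $T^{-1}\sum_tK_t^2\|\widehat\theta_T-\theta_0\|=o_p(1)$. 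Since $\|S_t\|\le C\sup\|D_t\|$ by the bounded spline score, Assumption~4 makes $\|S_t\|K_t$ integrable via Hölder. For the $K_t^2$ term I will not rely on moments: $\max_{t\le T}K_t=o_p(T^{1/p})$ for a stationary $K_t$ with $E K_t^p<\infty$, so $T^{-1}\sum K_t^2\le\max_tK_t\cdot T^{-1}\sum K_t$. Consistency alone gives no rate, so at this point I will invoke the rate $\|\widehat\theta_T-\theta_0\|=O_p(T^{-1/2})$ from Theorem~\ref{S-thm:asymptotic-normality}. Then $o_p(T^{1/p-1/2})\le o_p(T^{1/2})$ is killed only if $p$ is large enough; if this fails, I will fall back on a local equicontinuity argument in the style of Lemma~\ref{S-lem:local-ulln}. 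That argument applies the ergodic theorem to the stationary modulus $\sup_{\|\theta-\theta_0\|\le\varepsilon}\|S_t(\theta)S_t(\theta)'-S_tS_t'\|$, which is dominated by an integrable stationary envelope if the product moments suffice. Either way, $\widehat\Omega_T\to\Omega$ in probability, and $T\widehat V_T\to\mathcal V$ follows by continuous mapping.

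\textbf{Wald limit.} By Theorem~\ref{S-thm:asymptotic-normality}, $\sqrt T(R\widehat\theta_T-r)\Rightarrow N(0,R\mathcal VR')$. Combined with $TR\widehat V_TR'\to R\mathcal VR'$, which is positive definite and hence invertible with probability tending to one, Slutsky's theorem gives $W_T\Rightarrow\chi^2_{\operatorname{rank}(R)}$. The block counts of three and six are just the ranks of the respective selection matrices.
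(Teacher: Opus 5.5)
Your treatment of $\widehat{\mathcal H}_T$ matches the paper: the initialization bound, Lemma~\ref{S-lem:hessian-ulln}, continuity at $\theta_0$ and consistency. So do the meat limit at $\theta_0$ (martingale strong law plus Assumption~5) and the Slutsky argument for $W_T$.

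The gap is the step you flag yourself: replacing $\theta_0$ by $\widehat\theta_T$ in the outer products. Your mean-value route needs $T^{-1}\sum_t\|S_t\|K_t=O_p(1)$, and you say Assumption~4 makes $\|S_t\|K_t$ integrable via H\"older. It does not. With $\|S_t\|\le C\sup_{\mathcal N}\|D_t\|$ and $K_t=C\sup_{\mathcal N}\{\|D_t\|^2+\|R_t\|_F\}$, the product contains $\sup\|D_t\|^3$ and $\sup\|D_t\|\sup\|R_t\|_F$. Assumption~4 gives only $2+\delta$ moments for $D_t$ and $1+\delta$ for $R_t$, so H\"older closes only for $\delta$ bounded away from zero ($\delta\ge1$ for the cubic term). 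A rescue using the $O_p(T^{-1/2})$ rate and fractional moments might exist, but you do not carry it out.

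Your fallback also fails as stated. The quantity $\sup_{\|\theta-\theta_0\|\le\varepsilon}\|S_t(\theta)S_t(\theta)'-S_tS_t'\|$ depends on the observations $u_{t,i}$. Assumption~1 explicitly imposes no stationarity or ergodicity on the observation process, so the ergodic theorem cannot be applied to it. Ending with ``either way'' leaves the step unproved.

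The missing idea, which is the paper's argument, has two parts.

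\textbf{Use a predictable modulus.} Take the one-observation map $J_t(\theta)'\psi\{u,q_t(\theta)\}$ and its modulus of continuity in $\theta$, with $\sup_{u\in[0,1]}$ inside. Because the weights sum to one, this dominates the modulus of the weighted scores $S_t(\theta)$. It is a function of the jointly stationary and ergodic families $q_t(\cdot)$ and $D_t(\cdot)$ from Assumptions~2 and~4. It tends to zero as $\varepsilon\downarrow0$ by continuity of $J_t(\cdot)$ and the Lipschitz bound in Lemma~\ref{S-lem:score-bounds}.

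\textbf{Avoid the Hessian envelope.} Since $\|\psi\|\le\sqrt2$, you have $\sup_{\mathcal N}\|S_t(\theta)\|\le\sqrt2\sup_{\mathcal N}\|J_t(\theta)\|\le C\sup_{\mathcal N}\|D_t(\theta)\|$. The outer-product modulus is then bounded by the predictable modulus times $2C\sup_{\mathcal N}\|D_t\|$, hence dominated by $C'\sup_{\mathcal N}\|D_t(\theta)\|^2$. That envelope is integrable under exactly the $(2+\delta)$ moment in Assumption~4.

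Dominated convergence, the ergodic theorem for this predictable bound, and consistency alone (no rate) then show that the time-average change in outer products is $o_p(1)$. The same argument handles $\bar S_T$ at $\widehat\theta_T$, which you only treated at $\theta_0$. The initialization effect is summable, as you say.
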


\begin{proof}
The Hessian uniform law, consistency, and
\eqref{S-eq:initialization-derivative-bound} give
$\widehat{\mathcal H}_T\to\mathcal H$ in probability.
For the meat matrix at $\theta_0$, the martingale strong law in the preceding
remark and Assumption~5 yield
\[
 \frac1T\sum_t S_tS_t'\longrightarrow\Omega
 \quad\text{in probability},\qquad
 \frac1T\sum_tS_t\longrightarrow0\quad\text{a.s.}
\]
To justify substitution of the estimator, consider the predictable
one-observation parameter-score map
$J_t(\theta)'\psi\{u,q_t(\theta)\}$. Its modulus of continuity is uniform
in $u$, by the score bounds and Assumption~4. The corresponding modulus
for outer products of weighted cross-sectional scores is bounded by a constant times
$\sup_{\theta\in\mathcal N}\|D_t(\theta)\|^2$. This is an integrable,
stationary bound. Dominated convergence, the ergodic theorem for the
predictable modulus, and consistency therefore make the time-average
change in score outer products $o_p(1)$. The same argument applies to the
score mean. This step does not require stationarity of the observations.

Initialization also has a summable effect on the outer products. Indeed,
the bounded and Lipschitz score and the forgetting conditions give
\[
 \sup_{\theta\in\mathcal N}
 \|\widehat S_t(\theta)\widehat S_t(\theta)'
                  -S_t(\theta)S_t(\theta)'\|
 \le C_{**}\rho_*^t
       \left(1+\sup_{\theta\in\mathcal N}\|D_t(\theta)\|^2\right)
\]
for all sufficiently large $t$, with $C_{**}<\infty$ almost surely.
The summability argument above applies again. Hence
$\bar S_T=o_p(1)$, and subtracting $\bar S_T\bar S_T'$ does not affect
the meat limit. This proves $\widehat\Omega_T\to\Omega$.
Positive definiteness of $\mathcal H$ and continuity of inversion give
the sandwich limit. Theorem~\ref{S-thm:asymptotic-normality} for the estimator
computed from fixed initial values and Slutsky's theorem then give
\eqref{S-eq:oracle-wald-limit} under the stated restriction null.
\end{proof}

% Source: supp_simulation.tex
\section{Simulation calibration and additional forecasts}
\label{S-supp:simulation}

\subsection{Generating coefficients}
\label{S-supp:simulation-calibration}

The S\&P 500 HAR coefficient vectors used in generation, in horizon order
$(1,5,22)$, are
\begin{align*}
 \kappa_\eta^{\rm SP}&=(-5.745736,-4.815923,30.255739)',\\
 \kappa_{\rm even}^{\rm SP}&=(-0.074906,10.692708,259.195621)',\\
 \kappa_{\rm odd}^{\rm SP}&=(1.976005,-4.297596,63.792928)'.
\end{align*}
Each active vector is multiplied by $\sqrt{N/500}$ in both the independent
and ten-percent common-factor designs. Inactive vectors are zero.
The generating updates are given in
\eqref{eq:simulation-eta} and \eqref{eq:simulation-endpoint}.
The full-precision coefficients and replication-level results are recorded
in the companion repository.

\subsection{M5 compared with M3}
\label{S-supp:simulation-oos}

Table~\ref{S-tab:sim-m5-m3-oos} compares M5 with M3 after fitting M1--M5 in
each replication. This comparison evaluates the forecast gains from
adding tail-intensity dynamics to the fitted dynamic-$t$ reference, as in
the empirical analysis. Percentage improvements use the definitions in
Section~\ref{subsec:simulation-oos} of the main paper, with M3 replacing
M2 as the benchmark.

Both mean and median improvements in log score and one-percent quantile
loss are positive in all 18 cells of cases 2--4. Cases 0 and 1 show the
results when tail-intensity dynamics are absent from generation.
All entries use 100 paired replications with 642 OOS dates.

% Generated by build_sec4_sec5_tables.py from audited CSVs; do not edit numbers.
\begin{table}[htbp]
\centering
\caption{OOS improvements of M5 relative to M3: mean / median}
\label{S-tab:sim-m5-m3-oos}
\fontsize{9pt}{10.8pt}\selectfont
\setlength{\tabcolsep}{4pt}
\begin{tabular}{llrrr}
\hline
Case & Dependence & $N$ & Log score (\%) & 1\% quantile loss (\%)\\
\hline
0: Static & Independent & 20 & -0.010 / -0.008 & -0.084 / -0.077\\
 & Common 10\% & 20 & -0.012 / -0.010 & -0.113 / -0.089\\
 & Independent & 50 & -0.005 / -0.004 & -0.045 / -0.034\\
 & Common 10\% & 50 & -0.006 / -0.005 & -0.059 / -0.047\\
 & Independent & 100 & -0.002 / -0.001 & -0.016 / -0.017\\
 & Common 10\% & 100 & -0.005 / -0.004 & -0.042 / -0.033\\
\hline
1: $\eta$ only & Independent & 20 & -0.007 / -0.006 & -0.048 / -0.013\\
 & Common 10\% & 20 & -0.012 / -0.008 & -0.081 / -0.059\\
 & Independent & 50 & -0.004 / -0.003 & -0.030 / -0.032\\
 & Common 10\% & 50 & -0.007 / -0.004 & -0.069 / -0.037\\
 & Independent & 100 & -0.001 / -0.001 & -0.011 / -0.011\\
 & Common 10\% & 100 & -0.004 / -0.003 & -0.044 / -0.039\\
\hline
2: Even only & Independent & 20 & +0.038 / +0.026 & +0.396 / +0.271\\
 & Common 10\% & 20 & +0.038 / +0.029 & +0.397 / +0.370\\
 & Independent & 50 & +0.032 / +0.030 & +0.300 / +0.276\\
 & Common 10\% & 50 & +0.034 / +0.030 & +0.367 / +0.332\\
 & Independent & 100 & +0.026 / +0.023 & +0.251 / +0.275\\
 & Common 10\% & 100 & +0.041 / +0.029 & +0.417 / +0.300\\
\hline
3: $\eta$ and even & Independent & 20 & +0.035 / +0.029 & +0.343 / +0.323\\
 & Common 10\% & 20 & +0.023 / +0.023 & +0.169 / +0.105\\
 & Independent & 50 & +0.027 / +0.024 & +0.292 / +0.229\\
 & Common 10\% & 50 & +0.034 / +0.029 & +0.349 / +0.301\\
 & Independent & 100 & +0.029 / +0.026 & +0.279 / +0.247\\
 & Common 10\% & 100 & +0.042 / +0.035 & +0.402 / +0.350\\
\hline
4: $\eta$, even, and odd & Independent & 20 & +0.041 / +0.033 & +0.361 / +0.246\\
 & Common 10\% & 20 & +0.041 / +0.029 & +0.408 / +0.344\\
 & Independent & 50 & +0.034 / +0.028 & +0.349 / +0.312\\
 & Common 10\% & 50 & +0.044 / +0.040 & +0.434 / +0.338\\
 & Independent & 100 & +0.034 / +0.033 & +0.305 / +0.273\\
 & Common 10\% & 100 & +0.064 / +0.046 & +0.510 / +0.342\\
\hline
\end{tabular}
\par\smallskip
\begin{minipage}{\textwidth}
\fontsize{9pt}{10.8pt}\selectfont Percentage improvements over 100 paired replications with M1--M5 estimated. Each replication has 642 OOS dates. Positive values improve on M3. Means use the exponential of the mean log-score difference or the reduction in paired mean losses. Medians use replication-specific percentage improvements.
\end{minipage}
\end{table}

% Source: supp_empirical.tex
\section{Empirical data and PCA construction}
\label{S-supp:empirical}

\subsection{Constituent histories and observed returns}
\label{S-supp:data-construction}

Daily prices are obtained from Yahoo Finance. We use adjusted closing
prices when available and otherwise reported closing prices, and calculate
daily log returns. The S\&P 500 membership histories use
\texttt{fja05680/sp500}, taking the most recent snapshot dated on or before
each trading date. The final snapshot, June 30, 2026, is carried forward
to July 24, 2026. Nikkei 225 histories are reconstructed from the official
constituent set as of August 31, 2026 and the official component-change
ledger, replayed over the sample.

US sector assignments use public GICS Git snapshots and an audited
backcast for historical constituents. Japanese sectors combine official
current classifications with audited classifications for former
constituents. On each date, only constituents with observed returns and
estimable training PC1 loadings enter the raw and residual samples.
Table~\ref{S-tab:empirical-samples} reports the daily sample-size ranges
and training and OOS date counts for all 19 panels.

% Generated by build_sec4_sec5_tables.py from audited CSVs; do not edit numbers.
\begin{table}[htbp]
\centering
\caption{The 19 constituent-return cross sections}
\label{S-tab:empirical-samples}
\fontsize{10pt}{12pt}\selectfont
\setlength{\tabcolsep}{4pt}
\begin{tabular}{lrrrr}
\hline
Panel & $N_t$ range & Median $N_t$ & Training dates & OOS dates\\
\hline
S\&P 500 & 403--482 & 455 & 1886 & 642\\
Nikkei 225 & 212--223 & 218 & 1829 & 624\\
US Energy & 17--23 & 19 & 1886 & 642\\
US Materials & 19--27 & 25 & 1886 & 642\\
US Industrials & 54--77 & 70 & 1886 & 642\\
US Consumer Discretionary & 43--69 & 59 & 1886 & 642\\
US Consumer Staples & 30--36 & 32 & 1886 & 642\\
US Health Care & 41--63 & 56 & 1886 & 642\\
US Financials & 55--79 & 58 & 1886 & 642\\
US Information Technology & 52--72 & 61 & 1886 & 642\\
US Communication Services & 2--21 & 18 & 1886 & 642\\
US Utilities & 27--30 & 29 & 1886 & 642\\
US Real Estate & 26--30 & 30 & 1844 & 642\\
JP Technology & 52--59 & 56 & 1829 & 624\\
JP Financials & 18--21 & 20 & 1829 & 624\\
JP Consumer Goods & 27--36 & 33 & 1829 & 624\\
JP Materials & 48--60 & 57 & 1829 & 624\\
JP Capital Goods/Others & 35--36 & 36 & 1829 & 624\\
JP Transportation and Utilities & 19--21 & 19 & 1829 & 624\\
\hline
\end{tabular}
\par\smallskip
\begin{minipage}{\textwidth}
\fontsize{9pt}{10.8pt}\selectfont US and JP denote S\&P 500 and Nikkei 225 sectors. Raw and PC1 residual use the same constituent--dates, restricted to constituents with estimable training PC1 loadings. The broad-market panels contain constituent returns, rather than the return on an index. The Real Estate panel starts on September 1, 2016; the other panels start on July 5, 2016.
\end{minipage}
\end{table}

\subsection{Training-fixed PCA}
\label{S-supp:pca-details}

For each panel, we estimate covariance PCA from the training returns
through December 29, 2023. Missing training returns are replaced by their
constituent's observed training mean for this calculation only. The
centered matrix therefore has zero entries at missing observations.
Returns are not standardized by volatility. Constituents with no observed
training returns or no positive centered training variation are excluded.
The leading eigenvector has unit norm, and its sign is chosen to give a
nonnegative sum of loadings. Means and loadings remain fixed in training
and OOS evaluation.

The daily factor score and residuals are computed from observed returns
using the formula in Section~\ref{subsec:pca-robustness} of the main
paper. A date must have at least two eligible observed constituents and a
squared loading norm of at least $10^{-12}$. No imputed return enters
density estimation or forecast evaluation. Raw returns and PC1 residuals
use the same constituent--dates. The reported pairwise correlations use
matched training observations, and the PC1 shares are the leading
eigenvalues divided by the training covariance traces.